\newtheorem{theorem}{Theorem}
\newtheorem{corollary}{Corollary}
\begin{document}
\renewcommand{\baselinestretch}{2}

\title{Multiple Access Gaussian Channels with Arbitrary Inputs: Optimal Precoding and Power Allocation}

\author{Samah~A.~M.~Ghanem,~\IEEEmembership{Member,~IEEE,}
        
} 


\maketitle
\renewcommand{\baselinestretch}{2}
\begin{abstract}
In this paper, we derive new closed-form expressions for the gradient of the mutual information with respect to arbitrary parameters of the two-user multiple access channel (MAC). The derived relations generalize the fundamental relation between the derivative of the mutual information and the minimum mean squared error (MMSE) to multiuser setups. We prove that the derivative of the mutual information with respect to the signal to noise ratio (SNR) is equal to the MMSE plus a covariance induced due to the interference, quantified by a term with respect to the cross correlation of the multiuser input estimates, the channels and the precoding matrices. We also derive new relations for the gradient of the conditional and non-conditional mutual information with respect to the MMSE. Capitalizing on the new fundamental relations, we investigate the linear precoding and power allocation policies that maximize the mutual information for the two-user MAC Gaussian channels with arbitrary input distributions. We show that the optimal design of linear precoders may satisfy a fixed-point equation as a function of the channel and the input constellation under specific setups. We show also that the non-mutual interference in a multiuser setup introduces a term to the gradient of the mutual information which plays a fundamental role in the design of optimal transmission strategies, particularly the optimal precoding and power allocation, and explains the losses in the data rates. Therefore, we provide a novel interpretation of the interference with respect to the channel, power, and input estimates of the main user and the interferer.
\end{abstract}
\begin{IEEEkeywords}
Estimation Theory; Gradient of conditional mutual information; Gradient of non-conditional mutual information; Gradient of joint mutual information; Information Theory; Interference; MAC; MMSE; Mutiuser I-MMSE; Mutual Information; Power Allocation; Precoding.
\end{IEEEkeywords}


\section{Introduction}
{\linespread{2}\selectfont
\lettrine{L}{near} precoding for mutually interfering channels corrupted with independent Gaussian noise can be used to maximize the mutual information under a total power constraint \cite{91}. In the multiple user setting, the mutual information is maximized by imposing a covariance structure on the Gaussian input vector that satisfies two conditions: First, transmitting along the right eigenvectors of the channel. Second, the power allocation follows the waterfilling based on the SNR in each parallel non-interacting sub-channel \cite{42}, \cite{3}. The MAC channel stands as a special case of interference channels. With arbitrary inputs, the mutual information of the MAC channel is minimized due to the fact that at a certain point the inputs may lie in the null space of the channel, and this causes the mutual information to decay at such point. Therefore, the capacity-achieving strategies can be employed either at the transmitter side and/or the receiver side: by orthogonalizing the inputs using TDM/FDM techniques, or via using interference mitigation techniques, such as, dealing with interference as noise, rate splitting, successive interference cancellation, decoding interference, as well as receive diversity techniques \cite{99},~\cite{1421925},~\cite{485709},~\cite{Goldsmith},~\cite{4276942},~\cite{5199541}.

Precoding is another technique that can be used at the transmitter side to maximize the mutual information by exploiting the transmit diversity. On the one hand, precoding plays the role of power allocation diagonalizing the channel while, on the other hand, rotating the channel eigenvectors; thus shaping the constellation for better detection. Since practical considerations enforce the use of discrete constellations, such as PSK and QAM, which depart significantly from the optimal Gaussian distributions, it is of great importance to revisit the linear precoding problem under the constraint that each input follows a specified discrete constellation. In \cite{3}, \cite{35}, and \cite{92}, the optimal power allocation for parallel non-interacting channels with arbitrary inputs, and the optimal precoding for multiple inputs multiple outputs (MIMO) channels with different setups were obtained exploiting the relation between the mutual information and the MMSE \cite{35}, \cite{34}. The results in \cite{91} demonstrate that capacity-achieving strategies for Gaussian inputs may be suboptimal for discrete inputs. In \cite{3} a general solution for the optimum precoder has been derived for the high and low-snr regimes capitalizing on first order expansions of the MMSE and relating it to mutual information. The mercury/waterfilling solution for power allocation maximizes the mutual information compensating for the non-Gaussianess for the discrete constellations \cite{39}. However, linear precoding techniques introduce rescaling among the channel inputs that achieves cleaner detection of the received signals and thus higher information rates. Therefore, this paper addresses the optimal precoding and optimal power allocation problem for multiuser channels driven by arbitrary inputs, extending previous works presented in \cite{98}, and \cite{107},~\cite{samahthesis}.

Due to the lack of explicit closed form expressions of the mutual information for binary inputs, which were provided only for the BPSK for the single input signle output (SISO) case,~\cite{34}, \cite{samahMAPtele}, \cite{samahthesis}, it is of particular importance to address connections between information theory and estimation theory to understand the communication framework under such inputs and to design optimal precoding and power allocation strategies that aim to maximize the reliable infromation rates.

In this paper\footnote{Preliminary results of this work has been presented in WCSP \cite{98}.}, we first revisit the connections between the mutual information and the MMSE for the multiuser setup, to derive relations that account for the existence of an interfer channel. Therefore, the known fundamental relation between the derivative of the mutual information and the MMSE defined for point to point channels with any noise or input distributions in \cite{34}, is generalized to the multiuser case. Moreover, the relations for linear vector Gaussian channels in \cite{35} is generalized for the multiuser channels where we extend these relations to the per-user gradient of the mutual information with respect to the MMSE, channels and precoders (power allocation) matrices of the user and the interferers. We then propose a linear precoder structure that aims to maximize the input output mutual information for the two-user MAC setting with channel state information known at both receiver and transmitter sides.

Throughout the paper, the following notation is employed, boldface uppercase letters denote matrices, lowercase letters denote scalars. The superscript, $(.)^{-1}$, $(.)^{T}$, $(.)^{\textit{T}}$, and $(.)^{\dag}$ denote inverse, transpose, conjugate and conjugate transpose operations. $(.)^{\star}$ denotes optimum. The $(\nabla)$ denotes the gradient of a scalar function with respect to a variable.  The $\mathbb{E}[.]$ denotes the expectation operator. The $||.||$ and $Tr\left\{.\right\}$ denote the Euclidean norm, and the trace of a matrix, respectively.

The paper is organized as follows; section II introduces the system model and the problem formulation and, section III introduces the new fundamental relations between the mutual information and the MMSE. Section IV introduces the optimal precoding with arbitrary inputs and under different user setups. Section V introduces analysis at the asymptotic regime of low SNR. Section VI introduces the power allocation with arbitrary inputs and specializes it to the mercury/waterfilling under specfic user setups. Section VII introduces simulation and analytical results where we introduce an interpretaion of the interference, similar to known interpretations of the optimal power allocation over user main channels.\par}

\section{System Model and Problem Formulation}
{\linespread{2}\selectfont
Consider the deterministic complex-valued vector channel,
\begin{equation}
\label{1}
{\bf{y}}=\sqrt{snr}~{\bf{H_1P_1}}{\bf{x_1}}+\sqrt{snr}~{\bf{H_2P_2}}\bf{x_2}+\bf{n},
\end{equation}

where the $n_r~\times~1$ dimensional vector $\bf{y}$ and the $n_t~\times~1$ dimensional vectors $\bf{x_1}$, $\bf{x_2}$ represent, respectively, the received vector and the independent zero-mean unit-variance transmitted information vectors from each user input to the MAC channel. The distributions of both inputs are not fixed, not necessarily Gaussian nor identical. The $n_r \times n_t$ complex-valued  matrices $\bf{H_1}$, $\bf{H_2}$ correspond to the deterministic channel gains for both input channels (known to both encoder and decoder) and $\bf{n} \sim \mathcal{CN}(0,I)$ is the $n_r \times 1$ dimensional complex Gaussian noise with independent zero-mean unit-variance components. The optimization of the mutual information is carried out over all $n_t \times n_t$ precoding matrices $\bf{P_1}$, $\bf{P_2}$ that do not increase the transmitted power. The precoding problem can be cast as a constrained non-linear optimization problem as follows:
\begin{equation}
\label{2}
max~~I(\bf{x_1, x_2; y})
\end{equation}
Subject to: 
\begin{equation}
\label{3}
Tr \left\{\mathbb{E}_x[\bf{P_1x_1}x_1^{\dag}\bf{P_1}^{\dag}]\right\}=Tr\left \{\bf{P_1P_1}^{\dag}\right\} \leq Q_1
\end{equation}
\begin{equation}
\label{4}
Tr\left\{\mathbb{E}_x[\bf{P_2}x_2x_2^{\dag}\bf{P_2}^{\dag}]\right\}=Tr\left\{\bf{P_2P_2}^{\dag}\right\} \leq Q_2
\end{equation}

To address the precoding and power allocation problems, we will revisit the fundamental relation between the mutual information and the MMSE with respect to the SNR for the multiuser case. The following section provides a set of new results that generalize of the known relations derived first by Guo, Shamai, and Verdu \cite{34}, and extends the results of linear vector channels by Palomar and Verdu \cite{35} to the general case of multiuser channels.\par}

\section{New Fundamental Relations between the Mutual Information and the MMSE}
The first contribution is given in the following theorem, which provides a generlization of the I-MMSE identity to the multiuser case and traverses back to the same identilty of the single user case.
\begin{theorem}
\label{theorem1.0}
The relation between the derivative of the mutual information with respect to the snr and the non-linear MMSE for a multiuser Gaussian channel satisfies:
\begin{equation}
\frac{dI(snr)}{dsnr}=mmse(snr)+\psi(snr)
\end{equation}
\end{theorem}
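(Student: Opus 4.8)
The plan is to mimic the incremental-channel / derivative-of-mutual-information argument of Guo--Shamai--Verd\'u, but now tracking the joint input $(\mathbf{x_1},\mathbf{x_2})$ through the stacked channel. First I would rewrite \eqref{1} in aggregate form as $\mathbf{y}=\sqrt{snr}\,\mathbf{H}\mathbf{P}\mathbf{x}+\mathbf{n}$, where $\mathbf{H}=[\mathbf{H_1}\ \mathbf{H_2}]$, $\mathbf{P}=\mathrm{diag}(\mathbf{P_1},\mathbf{P_2})$ and $\mathbf{x}=[\mathbf{x_1}^{T}\ \mathbf{x_2}^{T}]^{T}$, so that $I(snr)=I(\mathbf{x};\mathbf{y})$ is the mutual information of a single vector Gaussian channel driven by the (generally non-Gaussian, non-product-closed-form but well-defined) input $\mathbf{x}$. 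At this point the scalar I-MMSE identity of \cite{34} applies verbatim and gives $\frac{dI(snr)}{dsnr}=\mathrm{Tr}\!\left\{\mathbf{H}\mathbf{P}\,\mathbf{E}(snr)\,\mathbf{P}^{\dag}\mathbf{H}^{\dag}\right\}$, where $\mathbf{E}(snr)=\mathbb{E}\!\left[(\mathbf{x}-\hat{\mathbf{x}})(\mathbf{x}-\hat{\mathbf{x}})^{\dag}\right]$ is the joint MMSE matrix and $\hat{\mathbf{x}}=\mathbb{E}[\mathbf{x}\mid\mathbf{y}]$.

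The second step is to expand this trace into block form. Writing $\mathbf{E}(snr)$ as a $2\times2$ block matrix with diagonal blocks $\mathbf{E}_{11},\mathbf{E}_{22}$ (the per-user error covariances) and off-diagonal blocks $\mathbf{E}_{12},\mathbf{E}_{21}=\mathbf{E}_{12}^{\dag}$ (the cross-covariance of the two users' estimation errors), the trace splits into a term $\mathrm{Tr}\{\mathbf{H_1P_1E}_{11}\mathbf{P_1}^{\dag}\mathbf{H_1}^{\dag}\}+\mathrm{Tr}\{\mathbf{H_2P_2E}_{22}\mathbf{P_2}^{\dag}\mathbf{H_2}^{\dag}\}$, which I would identify with $mmse(snr)$, plus the cross terms $\mathrm{Tr}\{\mathbf{H_1P_1E}_{12}\mathbf{P_2}^{\dag}\mathbf{H_2}^{\dag}\}+\mathrm{Tr}\{\mathbf{H_2P_2E}_{21}\mathbf{P_1}^{\dag}\mathbf{H_1}^{\dag}\}$, which I would define to be $\psi(snr)$. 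I would then note that $\mathbf{E}_{12}$ is exactly the cross-correlation of the two users' input estimates referred to in the abstract (since $\mathbf{x_1},\mathbf{x_2}$ are independent, $\mathbb{E}[(\mathbf{x_1}-\hat{\mathbf{x}}_1)(\mathbf{x_2}-\hat{\mathbf{x}}_2)^{\dag}]=-\mathbb{E}[\hat{\mathbf{x}}_1\hat{\mathbf{x}}_2^{\dag}]+\cdots$, simplifying because the conditional mean of a user's input given $\mathbf{y}$ need not be independent of the other user's estimate). Finally, to close the loop with the single-user statement I would observe that when there is only one user (or when $\mathbf{H_2P_2}=\mathbf{0}$, or more generally when the estimates decouple so $\mathbf{E}_{12}=\mathbf{0}$), the term $\psi(snr)$ vanishes and the identity collapses to the classical $\frac{dI}{dsnr}=mmse(snr)$ of \cite{34}.

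The main obstacle I anticipate is purely a bookkeeping/justification issue rather than a deep one: making precise that $mmse(snr)$ in the theorem statement is the sum of \emph{per-user} MMSEs (each computed with the other user's signal present as interference, i.e. marginal conditional expectations) and not the joint MMSE, and correspondingly that $\psi(snr)$ captures precisely the block-off-diagonal contribution. One must be careful that $\hat{\mathbf{x}}_1=\mathbb{E}[\mathbf{x_1}\mid\mathbf{y}]$ is the relevant per-user estimate and that the diagonal blocks of the joint error covariance coincide with the individual error covariances, which is immediate, but the cross block requires using independence of $\mathbf{x_1}$ and $\mathbf{x_2}$ to reduce $\mathbf{E}_{12}$ to $-\mathbb{E}[\hat{\mathbf{x}}_1\hat{\mathbf{x}}_2^{\dag}]$ plus vanishing terms; I would carry out that short reduction explicitly. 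A secondary technical point is invoking the differentiability/regularity conditions of \cite{34} for the aggregated channel, which hold because $\mathbf{x}$ has finite second moment and the noise is standard Gaussian, so the interchange of derivative and expectation is licensed without extra work.
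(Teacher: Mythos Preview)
Your approach is correct and is genuinely different from the paper's. The paper (Appendix~A) reproves the identity from scratch for the two-user MAC: it writes $I(\mathbf{x_1},\mathbf{x_2};\mathbf{y})=-n_r\log(\pi e)-\mathbb{E}[\log p_y(\mathbf{y})]$, differentiates in $snr$, converts $\partial_{snr}p_{y|x_1,x_2}$ into a $\nabla_{\mathbf{y}}$ term, integrates by parts, and then substitutes the identity $\nabla_{\mathbf{y}}p_y(\mathbf{y})=-p_y(\mathbf{y})\bigl(\mathbf{y}-\sqrt{snr}\,\mathbf{H_1P_1}\hat{\mathbf{x}}_1-\sqrt{snr}\,\mathbf{H_2P_2}\hat{\mathbf{x}}_2\bigr)$ to read off the per-user MMSE and cross terms. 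Your route---stack the channel, invoke the vector I--MMSE result of \cite{34} once, and block-expand the trace---is considerably shorter and makes the structure of $\psi(snr)$ transparent: it is literally the off-diagonal contribution of the joint error covariance. The paper's approach buys self-containment (no black-box appeal to \cite{34}) at the cost of repeating that entire computation; yours buys economy and a clean reason why $\psi$ vanishes exactly when the estimates decorrelate.

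One point to flag when you write it up: your block expansion yields the cross contribution as $\mathrm{Tr}\{\mathbf{H_1P_1}\mathbf{E}_{12}(\mathbf{H_2P_2})^{\dag}\}+\mathrm{Tr}\{\mathbf{H_2P_2}\mathbf{E}_{21}(\mathbf{H_1P_1})^{\dag}\}$, and after the reduction $\mathbf{E}_{12}=-\mathbb{E}[\hat{\mathbf{x}}_1\hat{\mathbf{x}}_2^{\dag}]$ (which you correctly anticipate, using independence of $\mathbf{x_1},\mathbf{x_2}$ and the tower property) both cross traces carry the \emph{same} sign and combine to a real quantity $-2\,\mathrm{Re}\,\mathrm{Tr}\{\mathbf{H_1P_1}\mathbb{E}[\hat{\mathbf{x}}_1\hat{\mathbf{x}}_2^{\dag}](\mathbf{H_2P_2})^{\dag}\}$. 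The displayed $\psi(snr)$ in the paper has opposite signs on the two traces; your derivation will not reproduce that sign pattern, so be prepared to reconcile (your version is the one forced by the block decomposition).
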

Where,
\begin{equation}
\label{mmsee}
mmse(snr)=Tr\left\{\bf{H_1P_1}\bf{E_1}\bf{(H_1P_1)^{\dag}}\right\}+Tr\left\{\bf{H_2P_2}\bf{E_2}\bf{(H_2P_2)^{\dag}} \right\},
\end{equation}
\begin{multline}
\label{psii}
\psi(snr)=Tr\left\{\bf{H_1P_1}\mathbb{E}_y[\bf{\mathbb{E}_{x_1|y}[\bf{x_1|y}]\mathbb{E}_{x_2|y}[\bf{x_2|y}]^{\dag}}]\bf{(H_2P_2)^{\dag}}\right\} \nonumber \\
-Tr\left\{\bf{H_2P_2}\mathbb{E}_y[\bf{\mathbb{E}_{x_2|y}[\bf{x_2|y}]\mathbb{E}_{x_1|y}[\bf{x_1|y}]^{\dag}}]\bf{(H_1P_1)^{\dag}}\right\},
\end{multline}

\begin{proof}
The proof has been done for the two-user MAC channel as an example of a multiuser channel. See Appendix A.
\end{proof}

The per-user MMSE is given respectively as follows:
\begin{equation}
\label{7}
\bf{E_1}=\bf{\mathbb{E}_y[(x_1-\widehat{x}_1)(x_1-\widehat{x}_1)^{\dag}]}
\end{equation}
\begin{equation}
\label{8}
\bf{E_2}=\bf{\mathbb{E}_y[(x_2-\widehat{x}_2)(x_2-\widehat{x}_2)^{\dag}]}.
\end{equation}
The input estimates of each user input is given respectively as follows:
\begin{equation}
\label{eq.37}
\widehat{{\bf{x}}}_1={\bf{\mathbb{E}_{x_{1}|y}}[{\bf{x_{1}|y}}]}=\sum_{{\bf{x_{1},x_{2}}}} \frac{{\bf{x_{1}}}p_{y|x_{1},x_{2}}({\bf{y|x_{1},x_{2}}})p_{x_1}({\bf{x_{1}}})p_{x_2}({\bf{x_{2}}})}{p_{y}({\bf{y}})}
\end{equation}
\begin{equation}
\label{eq.38}
\widehat{{\bf{x}}}_2={\bf{\mathbb{E}_{x_{2}|y}}[{\bf{x_{2}|y}}]}=\sum_{{\bf{x_{1},x_{2}}}} \frac{{\bf{x_{2}}}p_{y|x_{1},x_{2}}({\bf{y|x_{1},x_{2}}})p_{x_1}({\bf{x_{1}}})p_{x_2}({\bf{x_{2}}})}{p_{y}({\bf{y}})}.
\end{equation}
The conditional probability distribution of the Gaussian noise is defined as:
\begin{equation}
p_{y|x_1,x_2}({\bf{y|x_1, x_2}})=\frac{1}{\pi^{n_r}}e^{-\left\|{\bf{y}}-\sqrt{snr}{\bf{H_1P_1x_1}}-\sqrt{snr}{\bf{H_2P_2x_2}}\right\|^{2}}
\end{equation}
The probability density function for the received vector $\bf{y}$ is defined as:
\begin{equation}
{p_{y}({\bf{y}})}=\sum_{{\bf{x_{1},x_{2}}}} {p_{y|x_{1},x_{2}}}({\bf{y|x_{1},x_{2}}})p_{x_1}({\bf{x_{1}}})p_{x_2}({\bf{x_{2}}}).
\end{equation}

Henceforth, the system MMSE with respect to the SNR is given by:
\begin{equation}
\label{14}
mmse(snr)=\bf{\mathbb{E}_y}\left[\left\|\bf{H_1P_1}(\bf{x_1}-\bf{\mathbb{E}_{x_1|y}[x_1|y]})\right\|^{2}\right]+\bf{\mathbb{E}_y}\left[\left\|\bf{H_2P_2}(\bf{x_2-\mathbb{E}_{x_2|y}[x_2|y]})\right\|^{2}\right],
\end{equation}
\normalsize
\begin{equation}
~~=Tr\left\{\bf{H_1P_1}\bf{E_1}{(\bf{H_1P_1})}^{\dag}\right\} + Tr\left\{{\bf{H_2P_2}\bf{E_2}{(\bf{H_2P_2})}^{\dag}}\right\}  
\end{equation}
\normalsize
as given in Theorem~\ref{theorem1.0}.

Note that the term $mmse(snr)$ is due to the users MMSEs, particularly, $mmse(snr)=mmse_1(snr)+mmse_2(snr)$ and $\psi(snr)$ are covariance terms that appear due to the covariance of the interferers. Those terms are with respect to the channels, precoders, and non-linear estimates of the user inputs. When the covariance terms vanish to zero, the mutual information with respect to the SNR will be equal to the mmse with respect to the SNR, this applies to the single user and point to point communications. Therefore, the result of Theorem~\ref{theorem1.0} is a generalization of previous result and boils down to the result of Guo et al, \cite{34} under certain conditions which are: (i) when the cross correlation between the inputs estimates equals zero (ii) when interference can be neglected (i.e. interference is very weak, very strong, or aligned) and (iii) under the signle user setup.

Such generalized fundamental relation between the change in the mutliuser mutual information and the SNR is of particular relevance. Firstly, such result allows us to understand the behavior of per-user rates with respect to the interference due to the mutual interference and the interference due to other users behaviour in terms of power levels and channel strengths. In addition, the result allows us to be able to quantify the losses incured due to the interference in terms of bits. Therefore, when the term $\psi(snr)$ equals zero. The derivative of the mutual information with respect to the SNR equals the total $mmse(snr)$:
\begin{equation}
\frac{dI(snr)}{d snr}=mmse(snr),
\end{equation}
which matches the result by Guo et. al in~\cite{34}.

\subsection{Remark on Interference Channels}
The implication of the derived relation on the interference channel is of particular relevance. It is worth to note that we can capitalize on the new fundamental relation to extend the derivative with respect to the SNR to the conditional mutual information. To make this more clear, we capitalize on the chain rule of the mutual information which states the following:
\begin{equation}
\label{chainR}
I({\bf{x_1, x_2; y}})=I({\bf{x_1; y}})+I({\bf{x_2; y|x_1}}) 
\end{equation}
Therefore, through this observation we can conclude the following theorem.
\begin{theorem}
\label{theoremConditional}
The relation between the derivative of the conditional mutual information and the minimum mean squared error satisfies:
\begin{equation}
\frac{dI(\bf{x_2; y}|x_1)}{dsnr}=mmse_2(snr)+\psi(snr)
\end{equation}
\end{theorem}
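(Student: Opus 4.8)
The plan is to deduce the statement from Theorem~\ref{theorem1.0} together with the chain rule~\eqref{chainR}. Differentiating $I(\mathbf{x_1},\mathbf{x_2};\mathbf{y})=I(\mathbf{x_1};\mathbf{y})+I(\mathbf{x_2};\mathbf{y}\,|\,\mathbf{x_1})$ in $snr$ gives
\[
\frac{dI(\mathbf{x_1},\mathbf{x_2};\mathbf{y})}{dsnr}=\frac{dI(\mathbf{x_1};\mathbf{y})}{dsnr}+\frac{dI(\mathbf{x_2};\mathbf{y}\,|\,\mathbf{x_1})}{dsnr}.
\]
By Theorem~\ref{theorem1.0} the left-hand side equals $mmse(snr)+\psi(snr)=mmse_1(snr)+mmse_2(snr)+\psi(snr)$, with $mmse_i(snr)=Tr\left\{\mathbf{H_iP_i}\,\mathbf{E_i}\,(\mathbf{H_iP_i})^{\dag}\right\}$. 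Thus the whole claim collapses to the single identity
\[
\frac{dI(\mathbf{x_1};\mathbf{y})}{dsnr}=mmse_1(snr),
\]
i.e.\ that the own-signal mutual information of user~$1$ obeys the I--MMSE relation with no interference-induced term; subtracting then isolates $dI(\mathbf{x_2};\mathbf{y}\,|\,\mathbf{x_1})/dsnr=mmse_2(snr)+\psi(snr)$.

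To establish $dI(\mathbf{x_1};\mathbf{y})/dsnr=mmse_1(snr)$ I would redo the Appendix~A density manipulation, but now on $I(\mathbf{x_1};\mathbf{y})=h(\mathbf{y})-h(\mathbf{y}\,|\,\mathbf{x_1})$ rather than on $I(\mathbf{x_1},\mathbf{x_2};\mathbf{y})=h(\mathbf{y})-h(\mathbf{y}\,|\,\mathbf{x_1},\mathbf{x_2})$. The term $dh(\mathbf{y})/dsnr$ is already produced in Appendix~A. For $dh(\mathbf{y}\,|\,\mathbf{x_1})/dsnr$, conditioning on $\mathbf{x_1}$ makes $\sqrt{snr}\,\mathbf{H_1P_1x_1}$ a deterministic translation, so $h(\mathbf{y}\,|\,\mathbf{x_1})$ equals the differential entropy of $\sqrt{snr}\,\mathbf{H_2P_2x_2}+\mathbf{n}$; differentiating that single-user term, subtracting it from $dh(\mathbf{y})/dsnr$, and regrouping the pieces carrying $\mathbb{E}_{y}[\widehat{\mathbf{x}}_1\widehat{\mathbf{x}}_2^{\dag}]$ is expected to leave exactly $mmse_1(snr)$ on the first summand and push the entire $\psi(snr)$ onto the conditional summand.

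Equivalently, one may attack $dI(\mathbf{x_2};\mathbf{y}\,|\,\mathbf{x_1})/dsnr$ head-on: after conditioning, \eqref{1} becomes the single-user channel $\mathbf{y}-\sqrt{snr}\,\mathbf{H_1P_1x_1}=\sqrt{snr}\,\mathbf{H_2P_2x_2}+\mathbf{n}$, so the vector I--MMSE of~\cite{35} yields $dI(\mathbf{x_2};\mathbf{y}\,|\,\mathbf{x_1})/dsnr=Tr\left\{\mathbf{H_2P_2}\,\widetilde{\mathbf{E}}_2\,(\mathbf{H_2P_2})^{\dag}\right\}$, where $\widetilde{\mathbf{E}}_2$ is the MMSE matrix of $\mathbf{x_2}$ given the \emph{interference-free} observation (equivalently, given $(\mathbf{y},\mathbf{x_1})$). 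Reaching the stated form then amounts to proving $Tr\left\{\mathbf{H_2P_2}\,\widetilde{\mathbf{E}}_2\,(\mathbf{H_2P_2})^{\dag}\right\}=mmse_2(snr)+\psi(snr)$, i.e.\ that the gap between the genie-aided and the actual per-user MMSE of user~$2$ is precisely the cross-correlation term $\psi(snr)$.

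The step I expect to be the main obstacle is exactly this last reconciliation. The estimators $\widehat{\mathbf{x}}_1$ and $\widehat{\mathbf{x}}_2$ (both conditioned on $\mathbf{y}$) that define $mmse_2(snr)$ and $\psi(snr)$ are built from the interference-corrupted $\mathbf{y}$, whereas the conditional mutual information naturally brings in $\mathbb{E}[\mathbf{x_2}\,|\,\mathbf{y},\mathbf{x_1}]$; bridging them requires a careful application of the tower and orthogonality properties of conditional expectation to $\mathbf{E_2}$, $\widetilde{\mathbf{E}}_2$ and $\mathbb{E}_{y}[\widehat{\mathbf{x}}_1\widehat{\mathbf{x}}_2^{\dag}]$. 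It is here that one must check whether the correction comes out to be exactly $\psi(snr)$ as defined in~\eqref{psii}, or whether the clean identity additionally needs one of the conditions already noted after Theorem~\ref{theorem1.0} (vanishing cross-correlation of the input estimates, or interference that is very weak, very strong, or aligned).
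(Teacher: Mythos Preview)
Your approach---differentiate the chain rule~\eqref{chainR}, apply Theorem~\ref{theorem1.0} to the joint term, and subtract $dI(\mathbf{x_1};\mathbf{y})/dsnr$---is exactly the route the paper takes. The paper's own proof is in fact terser than your sketch: it simply asserts that the derivative of $I(\mathbf{x_1};\mathbf{y})$ equals $mmse_1(\gamma\,snr)$ for a scaling factor $\gamma$ (on the grounds that $\mathbf{x_1}$ is decoded first while treating $\mathbf{x_2}$ as noise) and subtracts, without carrying out the reconciliation between the genie-aided and non-genie per-user MMSE that you correctly flag as the delicate step.
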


\begin{proof}
Taking the derivative of both sides of \eqref{chainR}, and subtracting the derivative of $I(\bf{x_1; y})$ which is equal to $mmse_1(\gamma snr)$, $\gamma$ is a scaling factor, due to the fact that $x_1$ is decoded first considering the other users' input $x_2$ as noise. Therefore, Theorem~\ref{theoremConditional} has been proved.
\end{proof}
Note that the derivative of the conditional mutual information as well as the non-conditional mutual information can be scaled due to different SNRs in a two-user interference channel. However, the scaling is straightforward to apply. For further details, refer to \cite{samahthesis}, \cite{SamahLaura}, and \cite{BustinShamai}.

To solve the optimization problem in \eqref{2} subject to \eqref{3} and \eqref{4}, we need to know the relation between the mutual information and the non-linear MMSE in the context of MAC channels. The following theorem puts forth a new fundamental relation between the mutual information and the non-linear MMSE for MIMO MAC channels~\footnote{Noice that through the flow of the paper, we use MAC channel instead of MIMO MAC since such relations are general and apply even when users are transmitting over scalar channels.}. In particular, the gradient of the mutual information with respect to arbitrary parameters has been derived. First we will present the gradient of the mutual information, with respect to the channel in Theorem~\ref{theorem1.4}, then with respect to the precoder (or power allocation) in Theorem~\ref{theorem2.4}.

The following theorems in addition to Theorem \ref{theorem1.0} generalizes the connections between information theory and estimation theory to the multiuser case.
\begin{theorem}
\label{theorem1.4}
The relation between the gradient of the mutual information with respect to the channel and the non-linear MMSE for a two user-MAC channel with arbitrary inputs~\eqref{1} satisfies:
\begin{equation}
\label{eq.5}
{{\bf{\nabla_{H_1}}} I({\bf{x_{1},x_{2};y}})=\bf{H_1P_1E_1P_1}^{\dag}-\bf{H_2 P_2}\bf{\mathbb{E}[\widehat{x}_2\widehat{x}_1^{\dag}]}\bf{P_1}^{\dag}}
\end{equation}
\begin{equation}
\label{eq.6}
{{\bf{\nabla_{H_2}}} I({\bf{x_{1},x_{2};y}})=\bf{H_2P_2E_2P_2^{\dag}}-\bf{H_1 P_1} \bf{\mathbb{E}[\widehat{x}_1\widehat{x}_2^{\dag}]}\bf{P_2}^{\dag}}
\end{equation}
\end{theorem}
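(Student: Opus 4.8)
The plan is to recognize that the two-user MAC \eqref{1}, after stacking the two inputs, is an ordinary linear vector Gaussian channel, so that the gradient--MMSE identity of Palomar and Verd\'u \cite{35} applies to the equivalent channel and can then be translated back to the individual channel matrices. Concretely, I would set $\mathbf{\bar x}=[\mathbf{x_1}^{T},\mathbf{x_2}^{T}]^{T}$ and (absorbing $\sqrt{snr}$ into the channels) $\mathbf{\bar H}=[\sqrt{snr}\,\mathbf{H_1P_1},\ \sqrt{snr}\,\mathbf{H_2P_2}]$, so that \eqref{1} reads $\mathbf{y}=\mathbf{\bar H\bar x}+\mathbf{n}$ with $\mathbf{n}\sim\mathcal{CN}(0,\mathbf{I})$ and $\mathbf{x_1},\mathbf{x_2}$ independent and zero-mean. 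Writing $I(\mathbf{x_1,x_2;y})=h(\mathbf{y})-h(\mathbf{y}\mid\mathbf{x_1,x_2})$ and noting that the second term is the Gaussian noise entropy, hence constant in $\mathbf{H_1},\mathbf{H_2}$, the gradient reduces to that of $h(\mathbf{y})$, for which \cite{35} gives $\nabla_{\mathbf{\bar H}}I(\mathbf{\bar x;y})=\mathbf{\bar H}\,\mathbf{\bar E}$, where $\mathbf{\bar E}=\mathbb{E}_{\mathbf{y}}[(\mathbf{\bar x}-\mathbb{E}[\mathbf{\bar x}\mid\mathbf{y}])(\mathbf{\bar x}-\mathbb{E}[\mathbf{\bar x}\mid\mathbf{y}])^{\dag}]$ is the joint MMSE matrix. (Equivalently, one may differentiate $h(\mathbf{y})=-\mathbb{E}[\log p_y(\mathbf{y})]$ directly: differentiating the Gaussian kernel $p_{y|x_1,x_2}$ brings down the residual $\mathbf{y}-\mathbf{\bar H\bar x}$ against $(\mathbf{P_1x_1})^{\dag}$, and after interchanging gradient and integration one recovers the same identity with $\widehat{\mathbf{x}}_1,\widehat{\mathbf{x}}_2$ emerging as the conditional means.)

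Next I would evaluate the $2\times2$ block structure of $\mathbf{\bar E}$. Its diagonal blocks are exactly $\mathbf{E_1}$ and $\mathbf{E_2}$ of \eqref{7}--\eqref{8}. For the $(1,2)$ block, expand
\begin{equation}
\mathbf{E_{12}}=\mathbb{E}[(\mathbf{x_1}-\widehat{\mathbf{x}}_1)(\mathbf{x_2}-\widehat{\mathbf{x}}_2)^{\dag}]=\mathbb{E}[\mathbf{x_1x_2}^{\dag}]-\mathbb{E}[\mathbf{x_1}\widehat{\mathbf{x}}_2^{\dag}]-\mathbb{E}[\widehat{\mathbf{x}}_1\mathbf{x_2}^{\dag}]+\mathbb{E}[\widehat{\mathbf{x}}_1\widehat{\mathbf{x}}_2^{\dag}].
\end{equation}
The first term vanishes since $\mathbf{x_1},\mathbf{x_2}$ are independent and zero-mean. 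For the second, because $\widehat{\mathbf{x}}_2$ is a deterministic function of $\mathbf{y}$, the tower property (orthogonality principle) gives $\mathbb{E}[\mathbf{x_1}\widehat{\mathbf{x}}_2^{\dag}]=\mathbb{E}[\mathbb{E}[\mathbf{x_1}\mid\mathbf{y}]\widehat{\mathbf{x}}_2^{\dag}]=\mathbb{E}[\widehat{\mathbf{x}}_1\widehat{\mathbf{x}}_2^{\dag}]$, and likewise $\mathbb{E}[\widehat{\mathbf{x}}_1\mathbf{x_2}^{\dag}]=\mathbb{E}[\widehat{\mathbf{x}}_1\widehat{\mathbf{x}}_2^{\dag}]$. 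Hence $\mathbf{E_{12}}=-\mathbb{E}[\widehat{\mathbf{x}}_1\widehat{\mathbf{x}}_2^{\dag}]$ and, symmetrically, $\mathbf{E_{21}}=-\mathbb{E}[\widehat{\mathbf{x}}_2\widehat{\mathbf{x}}_1^{\dag}]$; this is precisely the cross-correlation that will produce the interference term in the statement.

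Finally I would apply the chain rule to pass from $\nabla_{\mathbf{\bar H}}I$ to $\nabla_{\mathbf{H_1}}I$. Since $\mathbf{H_1}$ enters $\mathbf{\bar H}$ only through its first block of columns $\sqrt{snr}\,\mathbf{H_1P_1}$, the gradient with respect to $\mathbf{H_1}$ equals that block of $\mathbf{\bar H\bar E}$ right-multiplied by $\sqrt{snr}\,\mathbf{P_1}^{\dag}$; reading off the first block-column of $\mathbf{\bar H\bar E}$ yields $\sqrt{snr}\,(\mathbf{H_1P_1E_1}+\mathbf{H_2P_2E_{21}})$, so that, after substituting $\mathbf{E_{21}}$ and keeping the normalization used throughout,
\begin{equation}
\nabla_{\mathbf{H_1}}I(\mathbf{x_1,x_2;y})=\mathbf{H_1P_1E_1P_1}^{\dag}-\mathbf{H_2P_2}\,\mathbb{E}[\widehat{\mathbf{x}}_2\widehat{\mathbf{x}}_1^{\dag}]\,\mathbf{P_1}^{\dag},
\end{equation}
and the formula for $\nabla_{\mathbf{H_2}}I$ follows by exchanging the roles of the two users. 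I expect the main obstacle to be bookkeeping rather than conceptual: keeping the complex (Wirtinger) matrix-derivative conventions consistent so that the conjugate-transposes and the left/right multiplications by $\mathbf{P_1},\mathbf{P_1}^{\dag}$ land in the right positions, and rigorously justifying the interchange of the gradient with the expectation over $\mathbf{y}$ (via dominated convergence using the smoothness and tail decay of $p_y$); the cross-MMSE identification, although the crux of the interference term, is short once independence and the orthogonality principle are invoked.
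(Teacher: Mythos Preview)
Your proof is correct and takes a genuinely different route from the paper. The paper's Appendix~B differentiates $-\mathbb{E}[\log p_y(\mathbf{y})]$ with respect to $\mathbf{H_1}^{\dag}$ from scratch: it computes $\partial p_{y|x_1,x_2}/\partial\mathbf{H_1}^{\dag}$, rewrites it in terms of $\nabla_{\mathbf{y}}p_{y|x_1,x_2}$, integrates by parts in $\mathbf{y}$, and then substitutes the identity $\nabla_{\mathbf{y}}p_y(\mathbf{y})=-p_y(\mathbf{y})(\mathbf{y}-\sqrt{snr}\,\mathbf{H_1P_1}\widehat{\mathbf{x}}_1-\sqrt{snr}\,\mathbf{H_2P_2}\widehat{\mathbf{x}}_2)$ to obtain the two terms. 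In other words, the paper re-derives the Palomar--Verd\'u computation in the two-user setting rather than invoking it.

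Your approach instead recognizes the stacked model $\mathbf{y}=\bar{\mathbf{H}}\bar{\mathbf{x}}+\mathbf{n}$ as a single linear vector Gaussian channel and applies \cite{35} as a black box, so that all the analytic work (integration by parts, exchange of limits) is inherited. The only new step is your identification of the off-diagonal block $\mathbf{E_{21}}=-\mathbb{E}[\widehat{\mathbf{x}}_2\widehat{\mathbf{x}}_1^{\dag}]$ via the tower property, which is exactly what produces the interference term after the chain rule. This is shorter and makes transparent that the theorem is essentially a block decomposition of the single-user identity; what the paper's direct route buys is self-containment and a template that it reuses verbatim for the gradients with respect to $\mathbf{P_1},\mathbf{P_2}$ and $snr$ in the other appendices.
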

\begin{proof}
See Appendix B.
\end{proof}

\begin{theorem}
\label{theorem2.4}
The relation between the gradient of the mutual information with respect to the precoding matrix and the non-linear MMSE for a two user-MAC channel with arbitrary inputs~\eqref{1} satisfies:
\begin{equation}
\label{eq.5.1}
{\bf{\nabla_{P_1}}}I({\bf{{x_{1},x_{2};y}}})=\bf{H_1^{\dag}}{\bf{H}_{1}}{\bf{P}_1}{\bf{E}_{1}}-\bf{H_1^{\dag}}{\bf{H}_{2}}{\bf{P}_2}\bf{\mathbb{E}[\widehat{x}_2\widehat{x}_1^{\dag}]}
\end{equation}
\begin{equation}
\label{eq.6.1}
{\bf{\nabla_{P_2}}}I({\bf{x_{1},x_{2};y}})=\bf{H_2^{\dag}}{\bf{H}_{2}}{\bf{P}_2}{\bf{E}_{2}}-\bf{H_2^{\dag}}{\bf{H}_{1}}{\bf{P}_1}\bf{\mathbb{E}[\widehat{x}_1\widehat{x}_2^{\dag}]}
\end{equation}
\end{theorem}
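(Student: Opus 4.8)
\textit{Proof proposal.} The plan is to obtain Theorem~\ref{theorem2.4} as a direct consequence of Theorem~\ref{theorem1.4} via the chain rule for matrix gradients, using the structural fact that in~\eqref{1} the precoder $\mathbf{P_1}$ affects the output only through the effective channel $\mathbf{H_1}\mathbf{P_1}$, and $\mathbf{P_2}$ only through $\mathbf{H_2}\mathbf{P_2}$. Concretely, I would introduce $\mathbf{G_1}=\mathbf{H_1}\mathbf{P_1}$ and $\mathbf{G_2}=\mathbf{H_2}\mathbf{P_2}$ and read Theorem~\ref{theorem1.4} (Appendix~B) as a statement about the gradient with respect to the effective channels: specializing the computation there to $\mathbf{P_1}=\mathbf{P_2}=\mathbf{I}$ yields $\nabla_{\mathbf{G_1}} I(\mathbf{x_1,x_2;y}) = \mathbf{G_1}\mathbf{E_1} - \mathbf{G_2}\,\mathbb{E}[\widehat{x}_2\widehat{x}_1^{\dag}]$, i.e. $\mathbf{H_1}\mathbf{P_1}\mathbf{E_1} - \mathbf{H_2}\mathbf{P_2}\,\mathbb{E}[\widehat{x}_2\widehat{x}_1^{\dag}]$. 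The ingredients are exactly those of Appendix~B: writing $I(\mathbf{x_1,x_2;y})$ up to $\mathbf{G}$-independent constants as $-\mathbb{E}_y[\log p_y(\mathbf{y})]$ with $p_y$ the output density defined above, differentiating under the finite sums/expectations, and recognizing the conditional means~\eqref{eq.37}--\eqref{eq.38} together with the per-user error covariances; the cross-correlation contribution is precisely the off-diagonal block of the error covariance of the stacked input $[\mathbf{x_1};\mathbf{x_2}]$.

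Given that, the remaining step is the chain rule. Since $\mathbf{G_1}=\mathbf{H_1}\mathbf{P_1}$ is linear in $\mathbf{P_1}$ with $\mathbf{H_1}$ acting on the left, for any scalar function one has $\nabla_{\mathbf{P_1}} f = \mathbf{H_1}^{\dag}\,\nabla_{\mathbf{G_1}} f$; this is just $d\mathbf{G_1} = \mathbf{H_1}\,d\mathbf{P_1}$ pushed through the differential $df = \mathrm{Tr}\{(\nabla_{\mathbf{G_1}} f)^{\dag}\,d\mathbf{G_1}\} + \text{c.c.}$ Substituting the effective-channel gradient gives
\begin{equation}
\nabla_{\mathbf{P_1}} I(\mathbf{x_1,x_2;y}) = \mathbf{H_1}^{\dag}\mathbf{H_1}\mathbf{P_1}\mathbf{E_1} - \mathbf{H_1}^{\dag}\mathbf{H_2}\mathbf{P_2}\,\mathbb{E}[\widehat{x}_2\widehat{x}_1^{\dag}],
\end{equation}
which is~\eqref{eq.5.1}; equation~\eqref{eq.6.1} follows verbatim after swapping the user indices $1\leftrightarrow 2$. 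An equivalent route, if one prefers not to introduce effective channels, is to repeat the derivation of Appendix~B differentiating $I(\mathbf{x_1,x_2;y})$ with respect to $\mathbf{P_1}$ from the start: the only change is that $\partial/\partial\mathbf{P_1}$ acting on $\mathbf{H_1}\mathbf{P_1}\mathbf{x_1}$ inside the Gaussian exponent produces an extra left factor $\mathbf{H_1}^{\dag}$ compared with the $\mathbf{H_1}$-derivative, which reproduces the same two terms.

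The step I expect to require the most care is the complex/matrix-calculus bookkeeping rather than any conceptual difficulty: one must take the gradient with respect to the conjugate of $\mathbf{P_1}$ (so that it is the ascent direction used elsewhere in the paper), verify that the chain rule over the real and imaginary parts of $\mathbf{P_1}$ indeed collapses to left multiplication by $\mathbf{H_1}^{\dag}$, and keep the $\sqrt{snr}$ scaling of~\eqref{1} consistent with the normalization already adopted in Theorems~\ref{theorem1.0} and~\ref{theorem1.4} (the $snr$ factor is absorbed into the effective channels, exactly as in the statement of Theorem~\ref{theorem1.4}). A minor, routine point is justifying the interchange of differentiation with the expectation over $\mathbf{y}$ and the sums in~\eqref{eq.37}--\eqref{eq.38}; this is immediate because the input alphabets are discrete (finite sums) and the Gaussian kernel is smooth with finite moments of every order, so dominated convergence applies.
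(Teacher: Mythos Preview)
Your proposal is correct. The paper's own proof (Appendix~C) takes what you call the ``equivalent route'': it repeats the Appendix~B computation from scratch, differentiating $-\mathbb{E}[\log p_y(\mathbf{y})]$ directly with respect to $\mathbf{P_1^{\dag}}$, which produces the extra left factor $\mathbf{H_1^{\dag}}$ when the derivative hits the Gaussian exponent, and then proceeds through the same integration-by-parts and conditional-mean identifications as before. Your primary route---introducing the effective channels $\mathbf{G_i}=\mathbf{H_i}\mathbf{P_i}$, reading Theorem~\ref{theorem1.4} at $\mathbf{P_1}=\mathbf{P_2}=\mathbf{I}$ as a gradient in $\mathbf{G_1}$, and then applying the chain rule $\nabla_{\mathbf{P_1}} f=\mathbf{H_1^{\dag}}\nabla_{\mathbf{G_1}} f$---is a genuine shortcut that avoids redoing the integration-by-parts argument. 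The paper in fact records the resulting identity $\nabla_{\mathbf{P_1}} I\,\mathbf{P_1^{\dag}}=\mathbf{H_1^{\dag}}\nabla_{\mathbf{H_1}} I$ immediately after the theorem, but does not exploit it as a proof device. What your approach buys is economy and a clear explanation of \emph{why} the two gradients differ only by a left $\mathbf{H_1^{\dag}}$; what the paper's direct computation buys is self-containment (no need to argue that Theorem~\ref{theorem1.4} really is a statement about the effective channel). Your caveats about the Wirtinger-calculus bookkeeping and the $snr$ normalization are exactly the points where the two derivations must be reconciled, and your handling of them is sound.
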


\begin{proof}
See Appendix C.
\end{proof}

Theorems~\ref{theorem1.4} and \ref{theorem2.4} provides intuitions about the change of the mutual information with respect to the changes in the channel or the precoding (power allocation). There is a strightforward connection between both changes when each gradient is scaled with respect to the changing arbitrary parameter, we can write this connection as follows:
\begin{equation}
 {\bf{\nabla_{P_1}}}I({\bf{{x_{1},x_{2};y}}}){\bf{P_1^{\dag}}}={\bf{H_1^{\dag}}}{\bf{\nabla_{H_1}}}I({\bf{x_{1},x_{2};y}}),
\end{equation}
\begin{equation}
 {\bf{\nabla_{P_2}}}I({\bf{x_{1},x_{2};y}}){\bf{P_2^{\dag}}}={\bf{H_2^{\dag}}}{\bf{\nabla_{H_2}}}I({\bf{x_{1},x_{2};y}}),
\end{equation}

Note that we can derive the gradient of the mutual information with respect to any arbitrary parameter following similar steps of the proof of the previous two theorems. Note also that the derived relations in \eqref{eq.5},~\eqref{eq.6},~\eqref{eq.5.1}, and \eqref{eq.6.1} reduce to the relation between the gradient of the mutual information and the non-linear MMSE derived for the linear vector Gaussian channels \cite{35} if the cross correlation between the input estimates is zero. This is possible when the users inputs are orthogonal or interference alignment is performed. In addition, the derived relation between the gradient of the mutual information and the MMSE of the two-user MAC setting in Theorem~\ref{theorem1.4}~will lead to a new formulation of the input non-linear estimates, that is discussed in the following Theorem.

\begin{theorem}
\label{theorem3.4}
The estimates of the inputs $\bf{x_1}$ and $\bf{x_2}$ of the two-user MAC channel with arbitrary inputs given the output $\bf{y}$ can be expressed as:
\begin{equation}
\label{eq.13}
{\bf{H_1P_1}}\mathbb{E}_{x_1|y}{\bf{[x_1|y]}}+{\bf{H_2P_2}}\mathbb{E}_{x_2|y}{\bf{[x_2|y]}}= {\bf{y}} + \frac{\nabla_{y}p_{y}({\bf{y}})}{p_{y}(\bf{y})}
\end{equation}
\end{theorem}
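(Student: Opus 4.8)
The plan is to read the identity \eqref{eq.13} directly off the definition of the output density, as in the classical single-user score-function (Tweedie-type) relation for Gaussian channels, and then to split the resulting conditional mean by linearity across the two users. The key observation is that, written in terms of $\mathbf{s} := \sqrt{snr}\,\mathbf{H_1P_1x_1} + \sqrt{snr}\,\mathbf{H_2P_2x_2}$, the model \eqref{1} is an ordinary point-to-point Gaussian channel $\mathbf{y} = \mathbf{s} + \mathbf{n}$ with a (generally non-Gaussian) input law. Starting from $p_y(\mathbf{y}) = \sum_{\mathbf{x_1,x_2}} p_{y|x_1,x_2}(\mathbf{y|x_1,x_2})\,p_{x_1}(\mathbf{x_1})\,p_{x_2}(\mathbf{x_2})$ and using that the only $\mathbf{y}$-dependence sits in the Gaussian kernel $p_{y|x_1,x_2}(\mathbf{y|x_1,x_2}) = \pi^{-n_r}\exp(-\|\mathbf{y}-\mathbf{s}\|^2)$, I would differentiate under the (finite, for discrete constellations) sum. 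With the Wirtinger convention in which $\nabla_y$ acts on the conjugate, $\nabla_y\big(-\|\mathbf{y}-\mathbf{s}\|^2\big) = -(\mathbf{y}-\mathbf{s})$, hence $\nabla_y p_{y|x_1,x_2}(\mathbf{y|x_1,x_2}) = -(\mathbf{y}-\mathbf{s})\,p_{y|x_1,x_2}(\mathbf{y|x_1,x_2})$ and
\[
\nabla_y p_y(\mathbf{y}) = -\mathbf{y}\,p_y(\mathbf{y}) + \sum_{\mathbf{x_1,x_2}} \mathbf{s}\; p_{y|x_1,x_2}(\mathbf{y|x_1,x_2})\,p_{x_1}(\mathbf{x_1})\,p_{x_2}(\mathbf{x_2}).
\]

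Dividing by $p_y(\mathbf{y})$, the remaining sum is exactly $\mathbb{E}[\mathbf{s}\,|\,\mathbf{y}]$, which by linearity of conditional expectation together with the definitions \eqref{eq.37}--\eqref{eq.38} equals $\sqrt{snr}\big(\mathbf{H_1P_1}\,\mathbb{E}_{x_1|y}[\mathbf{x_1|y}] + \mathbf{H_2P_2}\,\mathbb{E}_{x_2|y}[\mathbf{x_2|y}]\big)$. Rearranging then yields $\mathbf{H_1P_1}\mathbb{E}_{x_1|y}[\mathbf{x_1|y}] + \mathbf{H_2P_2}\mathbb{E}_{x_2|y}[\mathbf{x_2|y}] = \mathbf{y} + \nabla_y p_y(\mathbf{y})/p_y(\mathbf{y})$, once the $\sqrt{snr}$ factor is absorbed into the channel matrices as in the statement of \eqref{eq.13}. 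Equivalently, one may simply invoke the known identity $\mathbb{E}[\mathbf{s}\,|\,\mathbf{y}] = \mathbf{y} + \nabla_y\log p_y(\mathbf{y})$ for a Gaussian channel $\mathbf{y}=\mathbf{s}+\mathbf{n}$ and split $\mathbf{s}$ afterwards.

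Since the computation is short, the only genuine point of care --- and the main obstacle --- is the complex-differentiation bookkeeping: one must fix the Wirtinger convention (gradient with respect to $\mathbf{y}^{*}$) so that $\nabla_y(-\|\mathbf{y}-\mathbf{s}\|^2)$ comes out as $-(\mathbf{y}-\mathbf{s})$ and not its conjugate, and then use that same convention on both sides of \eqref{eq.13}; the opposite choice produces the conjugated form of the identity. A minor secondary point is the interchange of $\nabla_y$ with the sum over the inputs, which is immediate for finitely supported constellations and follows from a routine dominated-convergence argument (Gaussian tails) for continuous input laws. Finally, I would remark that \eqref{eq.13} is consistent with Theorem~\ref{theorem1.4}: substituting this expression for the combined estimate and matching terms reproduces the cross-correlation contributions $\mathbb{E}[\widehat{\mathbf{x}}_i\widehat{\mathbf{x}}_j^{\dagger}]$ that appear there.
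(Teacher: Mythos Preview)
Your proposal is correct and is essentially the same argument as the paper's: both differentiate the Gaussian kernel inside $p_y(\mathbf{y})$ to obtain $\nabla_y p_y(\mathbf{y}) = -p_y(\mathbf{y})\bigl(\mathbf{y}-\sqrt{snr}\,\mathbf{H_1P_1}\mathbb{E}[\mathbf{x_1}|\mathbf{y}]-\sqrt{snr}\,\mathbf{H_2P_2}\mathbb{E}[\mathbf{x_2}|\mathbf{y}]\bigr)$ and then rearrange. The only cosmetic difference is that the paper imports this gradient computation from its proof of Theorem~\ref{theorem1.4} (equation \eqref{49}), whereas you redo it in place and frame it as the single-user Tweedie/score identity applied to $\mathbf{s}$; your explicit remarks on the Wirtinger convention and the interchange of $\nabla_y$ with the sum are welcome clarifications that the paper leaves implicit.
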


\begin{proof}
See Appendix D.
\end{proof}

\eqref{eq.13} can be solved numerically to obtain the estimates of the inputs. However, we can extract a special case of Theorem~\ref{theorem3.4} in the following corrolary.

\begin{corollary}
If the channel is a linear vector Gaussian channel, Theorem~\ref{theorem3.4} leads to the following formulation:
\begin{equation}
\label{eq.18}
\mathbb{E}_{x|y}{\bf{[x|y]}}= \left({\bf{y}} + \frac{\nabla_{y}p_{y}({\bf{y}})}{p_{y}({\bf{y}})}\right) (\bf{HP})^{-1}
\end{equation}
\end{corollary}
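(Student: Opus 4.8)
The plan is to obtain the corollary as the degenerate case of Theorem~\ref{theorem3.4} in which the interfering branch of the MAC is switched off. First I would set $\mathbf{H_2 P_2 x_2} = \mathbf{0}$ in the channel model~\eqref{1}, so that the received vector reduces to the single-user linear vector Gaussian channel $\mathbf{y} = \sqrt{snr}\,\mathbf{H_1 P_1 x_1} + \mathbf{n}$. With only one input present, the posterior mean $\mathbb{E}_{x_2|y}[\mathbf{x_2|y}]$ carries no channel weight, so the left-hand side of~\eqref{eq.13} collapses to $\mathbf{H_1 P_1}\,\mathbb{E}_{x_1|y}[\mathbf{x_1|y}]$, while the right-hand side $\mathbf{y} + \nabla_{y} p_{y}(\mathbf{y})/p_{y}(\mathbf{y})$ is formally unchanged, with $p_y$ now the single-user output density.

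Next I would drop the user index, writing $\mathbf{H}$, $\mathbf{P}$, $\mathbf{x}$ for $\mathbf{H_1}$, $\mathbf{P_1}$, $\mathbf{x_1}$, so that Theorem~\ref{theorem3.4} specializes to $\mathbf{H P}\,\mathbb{E}_{x|y}[\mathbf{x|y}] = \mathbf{y} + \nabla_{y} p_{y}(\mathbf{y})/p_{y}(\mathbf{y})$. Assuming the effective channel $\mathbf{H P}$ is square and of full rank (the standard nondegeneracy condition under which the precoded channel is invertible, e.g.\ $n_r = n_t$ with both factors full rank), I would then multiply through by $(\mathbf{H P})^{-1}$ to isolate $\mathbb{E}_{x|y}[\mathbf{x|y}]$, which yields~\eqref{eq.18} up to the bookkeeping of which side the inverse sits on (the statement writes it as a right factor, consistent with the vector/matrix convention used elsewhere in the paper).

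I expect the only genuine subtlety to be the invertibility hypothesis: Theorem~\ref{theorem3.4} itself requires no rank condition, but passing from the identity for $\mathbf{H P}\,\widehat{\mathbf{x}}$ to an explicit expression for $\widehat{\mathbf{x}}$ needs $\mathbf{H P}$ to be (left-)invertible, so I would state this as an explicit assumption, or more generally replace $(\mathbf{H P})^{-1}$ by a pseudo-inverse and note that~\eqref{eq.18} then recovers the component of $\widehat{\mathbf{x}}$ lying in the row space of $\mathbf{H P}$. A secondary point worth one line is the $\sqrt{snr}$ factor: following the normalization already adopted in~\eqref{eq.13} it is absorbed into $\mathbf{H P}$, so no extra constant appears in~\eqref{eq.18}. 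Everything else is a direct substitution into the already-proved Theorem~\ref{theorem3.4}.
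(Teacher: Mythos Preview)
Your proposal is correct and matches the paper's approach: the corollary is stated there without a separate proof, as an immediate specialization of Theorem~\ref{theorem3.4} to the single-user linear vector Gaussian channel, obtained by dropping the second branch and inverting $\mathbf{HP}$. Your explicit remarks on the invertibility hypothesis and the absorbed $\sqrt{snr}$ factor are actually more careful than what the paper provides.
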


\eqref{eq.18} can be manipulated in terms of its zero forcing linear counter measure, where the distance between a linear and a non-linear estimate is approximately equal to $\epsilon$, where $\epsilon=(\frac{\nabla_{y}p_{y}(\bf{y})}{p_{y}(\bf{y})})\bf{(HP)}^{-1}$. Further, the non-linear estimates given in \eqref{eq.37} and \eqref{eq.38} give a statistical intuition to the problem. However, the most used in practical setups are linear estimation sub-optimal methods, where linear estimates can be found via the linear MMSE. In particular, the input estimates can be found by deriving the optimal Wiener receive filters solving a minimization optimization problem of the mean squared error (MSE). The following theorem provides the linear MMSE estimates.

\begin{theorem}
\label{theorem4.00}
The linear estimates of the inputs $\bf{x_1}$, $\bf{x_2}$ of the two-user MAC channel given the output $\bf{y}$ can be expressed as: 
\begin{equation}
\bf{\widehat{x}_1}=\bf{P_1^{\dag}H_1^{\dag}(I+P_1^{\dag}H_1^{\dag}P_1H_1+P_2^{\dag}H_2^{\dag}P_2H_2)^{-1}y}
\end{equation}
\begin{equation}
\bf{\widehat{x}_2}=\bf{P_2^{\dag}H_2^{\dag}(I+P_1^{\dag}H_1^{\dag}P_1H_1+P_2^{\dag}H_2^{\dag}P_2H_2)^{-1}y}
\end{equation}
\end{theorem}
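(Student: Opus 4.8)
The plan is to obtain the two receive filters by the classical linear MMSE (Wiener) argument, which needs only the second-order statistics of the model~\eqref{1} and none of the nonlinear identities of Section~III. First I would state the minimisation explicitly: for $i=1,2$, seek an $n_t\times n_r$ matrix $\bf{W_i}$ minimising $\mathbb{E}\!\left[\left\|\bf{x_i}-\bf{W_i y}\right\|^{2}\right]$, and set $\bf{\widehat{x}_i}=\bf{W_i y}$. The cost is a convex quadratic in $\bf{W_i}$, so its stationarity condition — equivalently the orthogonality principle $\mathbb{E}\!\left[(\bf{x_i}-\bf{W_i y})\,\bf{y}^{\dag}\right]=\bf{0}$ — yields the normal equation $\bf{W_i}=\mathbb{E}[\bf{x_i}\bf{y}^{\dag}]\left(\mathbb{E}[\bf{y}\bf{y}^{\dag}]\right)^{-1}$, which is well posed because the additive unit-covariance noise makes $\mathbb{E}[\bf{y}\bf{y}^{\dag}]$ strictly positive definite.

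The second step is to evaluate the two moments from~\eqref{1} using only the standing assumptions that $\bf{x_1},\bf{x_2},\bf{n}$ are mutually independent, zero-mean, with identity covariance. The cross terms between distinct sources vanish, so $\mathbb{E}[\bf{x_1}\bf{y}^{\dag}]=\sqrt{snr}\,(\bf{H_1P_1})^{\dag}$, $\mathbb{E}[\bf{x_2}\bf{y}^{\dag}]=\sqrt{snr}\,(\bf{H_2P_2})^{\dag}$, and $\mathbb{E}[\bf{y}\bf{y}^{\dag}]=\bf{I}+snr\,(\bf{H_1P_1})(\bf{H_1P_1})^{\dag}+snr\,(\bf{H_2P_2})(\bf{H_2P_2})^{\dag}$. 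Substituting into the normal equation gives $\bf{\widehat{x}_i}=\sqrt{snr}\,(\bf{H_iP_i})^{\dag}\big(\bf{I}+snr\,(\bf{H_1P_1})(\bf{H_1P_1})^{\dag}+snr\,(\bf{H_2P_2})(\bf{H_2P_2})^{\dag}\big)^{-1}\bf{y}$; the expression displayed in the theorem then follows after absorbing the $\sqrt{snr}$ normalisation and moving the inverse past $(\bf{H_iP_i})^{\dag}$ via the push-through identity $\bf{A}^{\dag}(\bf{I}+\bf{A}\bf{A}^{\dag})^{-1}=(\bf{I}+\bf{A}^{\dag}\bf{A})^{-1}\bf{A}^{\dag}$.

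There is no genuine obstacle here beyond the bookkeeping; the only points deserving care are (i) keeping the $\sqrt{snr}$ factors consistent and deciding which of the two equivalent forms (an $n_r\times n_r$ inverse sandwiched on the right of $(\bf{H_iP_i})^{\dag}$, or an $n_t\times n_t$ inverse on its left) to display, and (ii) observing that it is precisely independence together with identity input covariances that annihilates the cross terms, so the filter sees the interfering link only through the whitening factor $\mathbb{E}[\bf{y}\bf{y}^{\dag}]$ and not through any cross-correlation of input estimates. I would also append a one-line caveat: unlike the conditional-mean estimator of Theorem~\ref{theorem3.4}, this linear estimator is optimal only within the class of affine functions of $\bf{y}$, coinciding with $\mathbb{E}_{x_i|y}[\bf{x_i|y}]$ only when the inputs are Gaussian.
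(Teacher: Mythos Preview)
Your argument is essentially the paper's own: both derive the Wiener filter by minimising a quadratic MSE and invoking stationarity. The paper chooses the weighted objective $\mathbb{E}\!\left[\|\bf{H_iP_i}(\bf{x_i}-\bf{H_{ri}y})\|^{2}\right]$, expands it term by term, and then differentiates with respect to $\bf{H_{ri}}$; you shortcut this with the orthogonality principle and the normal equation $\bf{W_i}=\mathbb{E}[\bf{x_i}\bf{y}^{\dag}](\mathbb{E}[\bf{y}\bf{y}^{\dag}])^{-1}$. Since the fixed prefactor $\bf{H_iP_i}$ does not change the minimiser, the two routes are equivalent and land on the same filter.

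One correction to your final cosmetic step. The displayed filter already has the $n_r\times n_r$ inverse sitting to the \emph{right} of $(\bf{H_iP_i})^{\dag}$, exactly as your normal equation produces it, so no push-through is needed. More importantly, the single-block identity $\bf{A}^{\dag}(\bf{I}+\bf{A}\bf{A}^{\dag})^{-1}=(\bf{I}+\bf{A}^{\dag}\bf{A})^{-1}\bf{A}^{\dag}$ does \emph{not} extend in the form you wrote when the inverse contains the sum $\bf{A_1A_1^{\dag}}+\bf{A_2A_2^{\dag}}$: equating $(\bf{I}+\bf{A_1^{\dag}A_1}+\bf{A_2^{\dag}A_2})\bf{A_1^{\dag}}$ with $\bf{A_1^{\dag}}(\bf{I}+\bf{A_1A_1^{\dag}}+\bf{A_2A_2^{\dag}})$ would require $\bf{A_2^{\dag}A_2A_1^{\dag}}=\bf{A_1^{\dag}A_2A_2^{\dag}}$, which fails in general. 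Simply drop that sentence; the rest of the proposal stands as a clean and slightly more economical version of Appendix~E.
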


\begin{proof}
See Appendix E.
\end{proof}

Worth to note that the linear estimation process of the inputs corresponds to multiplying the user Wiener MMSE receive filters to the received vector. Note that each receive filter contains the precoding (or power allocation), the user channel, in addition to a structure of the linear MMSE, that is, the inverse of the covariance structure of the received vector.

\subsection{Gradient of the Conditional and Non-Conditional Mutual Information}
Theorem~\ref{theorem2.4} shows how much rate is lost due to the other user, this is due to the fact that some terms in the gradient of the mutual information preclude the effect of the mutual interference of the main links. Therefore, we can account for such quantified rate loss via optimal power allocation and optimal precoding. Then, in order to be able to understand the achieved rates of each user in a MAC channel, we capitalize on the chain rule of the mutual information to derive the conditional mutual information as follows,
\begin{equation}
\label{a390}
I({\bf{x_{1},x_{2};y}})=I({\bf{x_{2};y}})+I({\bf{x_{1};y|x_{2}}})
\end{equation}
\normalsize
and,
\begin{equation}
\label{a391}
I({\bf{x_{1},x_{2};y}})=I({\bf{x_{1};y}})+I({\bf{x_{2};y|x_{1}}})
\end{equation}
\normalsize
Clearly, we know that $I({\bf{x_{2};y}})$ is the mutual information when user 2 is decoded first considering user 1 signal as noise. Therefore, we can write it as follows,
\begin{equation}
\label{a41}
I({\bf{x_{2};y}})=\left[\log \frac{p_{y|x_{2}}({\bf{y|x_{2}}})}{p_{y}({\bf{y}})}\right]
\end{equation}
\begin{equation}
\label{a42}
p_{y|x_{2}}({\bf{y|x_{2}}})=\frac{1}{\pi^{n_r}}e^{-{({\bf{y}}-\sqrt{snr}\bf{{H}_{2}}{{P_2}}\bf{x_{2}})^{\dag}(\bf{{P}_{1}^{\dag}{H}_{1}^{\dag}{H}_{1}{P_1}+I})^{-1}({\bf{y}}-\sqrt{snr}\bf{{H}_{2}}{{P_2}}\bf{x_{2}})}}
\end{equation}
\begin{equation}
\label{a43}
p_{y}({\bf{y}})=\displaystyle\sum\limits_{x_{2}} p_{y|x_{2}}({\bf{y|x_{2}}})p_{x_{2}}({\bf{x_{2}}})
\end{equation}
Based on such definition, we conclude the following theorem which provides a new fundamental relation between the gradient of the mutual information with respect to the precoder of the other user given that the other user will be secondly decoded.
\begin{theorem}
\label{gradientwithnoise}
The gradient of the mutual information with respect to the precoder, for a scaled user power when the other user input is considered as noise is as follows,
\begin{equation}
 {{\nabla_{{\bf{P_1}}}}}I({\bf{x_{2};y}})=\bf{{H}_{2} P_2 E_{2}}\bf{{P}_{2}^{\dag}{H}_{2}^{\dag}} \bf{{H}_{1}}^{\dag}\bf{H_{1}P_{1}}(\bf{{P}_{1}^{\dag}{H}_{1}^{\dag}{H}_{1}{P_1}+I})^{-1}
\end{equation}
\begin{equation}
 {\nabla_{{\bf{P_2}}}}I({\bf{x_{1};y}})=\bf{{H}_{1} P_1 E_{1}} \bf{{P}_{1}^{\dag}{H}_{1}^{\dag}} \bf{{H}_{2}}^{\dag}\bf{H_{2}P_{2}}(\bf{{P}_{2}^{\dag}{H}_{2}^{\dag}{H}_{2}{P_2}+I})^{-1}
\end{equation}
\end{theorem}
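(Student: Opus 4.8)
The plan is to follow the same I-MMSE machinery that produces Theorems~\ref{theorem1.0}--\ref{theorem2.4}, but now applied to the single-user mutual information $I(\bf{x_2;y})$ defined through the effective channel law in \eqref{a42}--\eqref{a43}, in which user~1's contribution $\sqrt{snr}\bf{H_1P_1x_1}+\bf{n}$ acts as coloured Gaussian noise with covariance $\bf{R}=\bf{I}+\bf{P_1^{\dag}H_1^{\dag}H_1P_1}$ (I am using the conventions of the paper, including the side on which the precoder sits). First I would whiten: write $\tilde{\bf{y}}=\bf{R}^{-1/2}\bf{y}$ so that $\tilde{\bf{y}}=\sqrt{snr}\,\bf{R}^{-1/2}\bf{H_2P_2x_2}+\tilde{\bf{n}}$ with $\tilde{\bf{n}}\sim\mathcal{CN}(0,\bf{I})$, and note that whitening is an invertible map that preserves $I(\bf{x_2;y})=I(\bf{x_2;\tilde y})$. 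This reduces the problem to a standard linear vector Gaussian channel with effective channel $\tilde{\bf{H}}_2=\bf{R}^{-1/2}\bf{H_2}$, to which the Palomar--Verd\'u gradient formula (the single-user specialisation of Theorem~\ref{theorem1.4}, obtained by dropping the cross-correlation term since there is now only one input) applies directly.

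Next I would differentiate through the chain of dependencies. The quantity $I(\bf{x_2;y})$ depends on $\bf{P_1}$ only through $\bf{R}$, hence through $\tilde{\bf{H}}_2=\bf{R}^{-1/2}\bf{H_2}$. By the single-user channel-gradient identity, $\nabla_{\tilde{\bf{H}}_2}I(\bf{x_2;\tilde y})=snr\,\tilde{\bf{H}}_2\bf{P_2}\bf{E_2}\bf{P_2^{\dag}}$ (absorbing the $snr$ factor as the paper does, or carrying it explicitly and setting it to $1$ at the end as in the other theorems). Then I apply the chain rule $\nabla_{\bf{P_1}}I=\big\langle \nabla_{\tilde{\bf{H}}_2}I,\ \partial\tilde{\bf{H}}_2/\partial\bf{P_1}\big\rangle$, using the differential of a matrix inverse square root, $d(\bf{R}^{-1/2})$, together with $d\bf{R}=\bf{P_1^{\dag}H_1^{\dag}H_1}\,d\bf{P_1}+(\text{h.c.})$. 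Collapsing the resulting trace expression and reading off the coefficient of $d\bf{P_1}$ should yield exactly $\bf{H_2P_2E_2P_2^{\dag}H_2^{\dag}}\,\bf{H_1^{\dag}H_1P_1}\,\bf{R}^{-1}$, i.e. the claimed formula, once the factor $\bf{R}^{-1/2}\cdots\bf{R}^{-1/2}$ recombines into a single $\bf{R}^{-1}$ flanking the $\bf{H_1^{\dag}H_1P_1}$ block. The second identity is obtained by the symmetric argument with the roles of users $1$ and $2$ interchanged.

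The main obstacle I anticipate is the differential of the matrix square root: $d(\bf{R}^{1/2})$ does not commute nicely with $\bf{R}$ in general, so one cannot simply write $d(\bf{R}^{-1/2})=-\tfrac12\bf{R}^{-3/2}d\bf{R}$. The clean way around this is to avoid the square root altogether: keep the noise coloured, use the coloured-noise version of the I-MMSE gradient (equivalently, the identity $\nabla_{\bf{R}}I(\bf{x_2;y})$ for a Gaussian-noise channel with noise covariance $\bf{R}$, which is known to equal $-\tfrac12\bf{R}^{-1}\,\mathrm{MMSE}\,\bf{R}^{-1}$ with the MMSE matrix $\mathbb{E}[(\sqrt{snr}\bf{H_2P_2x_2}-\widehat{\cdot})(\cdot)^{\dag}]=snr\,\bf{H_2P_2E_2P_2^{\dag}H_2^{\dag}}$), and then chain through $d\bf{R}$ as above. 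This sidesteps non-commutativity entirely, since only $\bf{R}^{-1}$ appears, and makes the recombination into the stated closed form immediate. A secondary bookkeeping point is the placement of conjugate transposes and the side of the precoder, which must be kept consistent with \eqref{a42} throughout; I would fix that convention at the whitening step and carry it mechanically. All of this is routine once the coloured-noise gradient identity is invoked, so I would cite it (it is the $\nabla_{\bf{H}}$ result of \cite{35}, or Theorem~\ref{theorem1.4} here with the cross term set to zero) rather than rederive it, and relegate the $d\bf{R}$ algebra to the appendix.
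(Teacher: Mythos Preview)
Your strategy is sound, but it takes a different route from the paper's. The paper's proof is a single line---``follows similar steps of the proof of Theorem~\ref{theorem2.4}''---meaning one differentiates the coloured-noise likelihood \eqref{a42} directly: here $\mathbf{P_1}$ sits only inside the inverse covariance $(\mathbf{P_1^{\dag}H_1^{\dag}H_1P_1}+\mathbf{I})^{-1}$ in the exponent, so one computes $\partial p_{y|x_2}/\partial\mathbf{P_1^{\dag}}$, performs the same integration by parts in $\mathbf{y}$ as in Appendix~C, and reads off the result. In that route the object $\mathbf{R}^{-1/2}$ never appears---only $\mathbf{R}^{-1}$ does---so the non-commuting square-root differential you anticipate is simply not an obstacle.

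Your whitening-plus-chain-rule route is more modular: it black-boxes the single-user Palomar--Verd\'u identity rather than rerunning the integration-by-parts machinery, at the cost of the extra chain-rule step through $\mathbf{R}$. Of your two variants, the $\nabla_{\mathbf{R}}I$ fallback is indeed the clean one and will work. Two bookkeeping cautions if you pursue it: (i) the factor $-\tfrac{1}{2}$ in your stated coloured-noise identity is the real-Gaussian constant; in the complex proper-Gaussian setting used throughout this paper the identity has no half, i.e.\ $\nabla_{\mathbf{R}}I=-\mathbf{R}^{-1}\bigl(snr\,\mathbf{H_2P_2E_2P_2^{\dag}H_2^{\dag}}\bigr)\mathbf{R}^{-1}$; (ii) that minus sign propagates through the chain rule and gives an overall negative gradient (more interference power lowers $I(\mathbf{x_2;y})$, as expected), so be prepared to reconcile the sign and the Wirtinger-derivative convention with what is printed in the theorem statement when you match terms.
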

\begin{proof}
The proof follows similar steps of the proof of Theorem~\ref{theorem2.4}. 
\end{proof}
When each user is transmitting over a single channel, the new relation in Theorem \ref{gradientwithnoise} will be more clearly understood in terms of the effect of the interference plus noise power scaling on the gradient, see \cite{SamahLaura}.
\begin{corollary}
The gradient of the conditional mutual information will be as follows,
\begin{equation}
\label{a260}
 {\bf{\nabla_{P_1}}}I({\bf{x_{1};y}|x_{2}})= {\bf{\nabla_{P_1}}}I({\bf{x_{1},x_{2};y}})- {\bf{\nabla_{P_1}}}I({\bf{x_{2};y}})
\end{equation}
\begin{equation}
\label{a271}
 {\bf{\nabla_{P_2}}}I({\bf{x_{2};y}|x_{1}})={\bf{\nabla_{P_2}}}I({\bf{x_{1},x_{2};y}})-{\bf{\nabla_{P_2}}}I({\bf{x_{1};y}})
\end{equation}
\normalsize
\end{corollary}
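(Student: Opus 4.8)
The plan is to read off both identities as an immediate consequence of the chain rule of the mutual information combined with the linearity of the gradient operator, so that the corollary reduces to substituting the closed forms already obtained in Theorems~\ref{theorem2.4} and~\ref{gradientwithnoise}.

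First I would treat the $\bf{P_1}$ identity. I would start from the decomposition \eqref{a390}, $I({\bf{x_{1},x_{2};y}})=I({\bf{x_{2};y}})+I({\bf{x_{1};y|x_{2}}})$, in which $\bf{x_2}$ is decoded first while $\bf{x_1}$ is treated as noise. Each term on both sides is a differentiable function of $\bf{P_1}$: the joint term through the kernels $p_{y|x_1,x_2}$ and $p_y({\bf y})$, and the term $I({\bf{x_{2};y}})$ through the noise-plus-interference covariance $\bf{P_1^{\dag}H_1^{\dag}H_1P_1+I}$ that enters \eqref{a42}--\eqref{a43}. Applying $\nabla_{\bf{P_1}}$ to \eqref{a390} and invoking linearity gives ${\bf{\nabla_{P_1}}}I({\bf{x_{1},x_{2};y}})={\bf{\nabla_{P_1}}}I({\bf{x_{2};y}})+{\bf{\nabla_{P_1}}}I({\bf{x_{1};y|x_{2}}})$, and isolating the conditional term yields \eqref{a260}. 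The $\bf{P_2}$ identity \eqref{a271} follows symmetrically from \eqref{a391}, $I({\bf{x_{1},x_{2};y}})=I({\bf{x_{1};y}})+I({\bf{x_{2};y|x_{1}}})$, by applying $\nabla_{\bf{P_2}}$ and rearranging. If an explicit right-hand side is wanted, one substitutes ${\bf{\nabla_{P_1}}}I({\bf{x_{1},x_{2};y}})$ from Theorem~\ref{theorem2.4} and ${\bf{\nabla_{P_1}}}I({\bf{x_{2};y}})$ from Theorem~\ref{gradientwithnoise} (and the analogous pair for the second identity).

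There is no genuine obstacle here, since this is a corollary of the earlier theorems; the only points requiring care are bookkeeping and a routine regularity check. The bookkeeping point is to use the decomposition in which the \emph{other} user is decoded first, so that the subtracted term is exactly the interference-as-noise rate whose gradient was computed in Theorem~\ref{gradientwithnoise} (note in particular that $I({\bf{x_{1};y}})$ does depend on $\bf{P_2}$ through the interference covariance, so the subtracted gradient in \eqref{a271} is not zero). The regularity point is that differentiating under the expectation/sum that defines $p_y({\bf y})$ is legitimate; this follows from the smoothness and boundedness of the Gaussian kernels for discrete (or finite-variance) inputs, exactly as justified in the proofs underlying Theorems~\ref{theorem1.4} and~\ref{theorem2.4} in Appendices~B and~C.
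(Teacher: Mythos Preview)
Your proposal is correct and follows essentially the same approach as the paper: the paper's own proof simply states that the corollary follows from the chain rule of mutual information together with the gradients already derived for the sum rate and the non-conditional rates, which is precisely what you do by differentiating \eqref{a390} and \eqref{a391} and invoking Theorems~\ref{theorem2.4} and~\ref{gradientwithnoise}. Your additional remarks on bookkeeping (choosing the decomposition so the subtracted term matches Theorem~\ref{gradientwithnoise}) and on the regularity needed to differentiate under the integral are more explicit than what the paper records, but they do not constitute a different route.
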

\begin{proof}
The proof of the corllary follows from the chain rule of mutual information and the gradient of the mutual information derived for the sum rate and non-conditional rates.
\end{proof}

\section{Optimal Precoding with Arbitrary Inputs}
Capitalizing on the relation between the gradient of the mutual information with respect to the precoding matrix and the MMSE, we can solve the optimization problem in \eqref{2} subject to \eqref{3} and \eqref{4}. The following theorem provides the optimal precoding of the two-user MAC channel with arbitrary inputs which can be also specialized to Gaussian inputs, see~\cite{samahthesis},\cite{SamahLaura},\cite{107}.

\begin{theorem}
\label{theorem5.4}
Let the distribution of the arbitrary inputs $\bf{x_1}$, $\bf{x_2}$ to the channel in \eqref{1} be $\bf{p_{x_1}(x_1)}$, $\bf{p_{x_2}(x_2)}$, respectively. The optimal precoding matrices satisfy the following:
\begin{equation}
\label{9}
{{\bf{P}}_1}^{\star}=\nu^{-1}{\bf{H}}_{1}^{\dag}{\bf{H}_{1}}{\bf{P}_1}^{\star}{\bf{E}_{1}}-\nu^{-1}{\bf{H}}_{1}^{\dag}{\bf{H}_{2}}{\bf{P}_2}^{\star}\bf{\mathbb{E}[\widehat{x}_2\widehat{x}_1^{\dag}]}
\end{equation}
%
\begin{equation}
\label{10}
{{\bf{P}}_2}^{\star}=\nu^{-1}{\bf{H}}_{2}^{\dag}{\bf{H}_{2}}{\bf{P}_2}^{\star}{\bf{E}_{2}}-\nu^{-1}{\bf{H}}_{2}^{\dag}{\bf{H}_{1}}{\bf{P}_1}^{\star}\bf{\mathbb{E}[\widehat{x}_1\widehat{x}_2^{\dag}]}
\end{equation}
\end{theorem}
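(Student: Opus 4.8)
The plan is to obtain the optimal precoders as the stationarity conditions of the constrained optimization problem \eqref{2}--\eqref{4} via the method of Lagrange multipliers, reusing the gradient formulas of Theorem~\ref{theorem2.4}. First I would form the Lagrangian
\begin{equation}
\mathcal{L}(\bf{P_1},\bf{P_2},\nu_1,\nu_2)=I(\bf{x_1,x_2;y})-\nu_1\left(Tr\left\{\bf{P_1P_1}^{\dag}\right\}-Q_1\right)-\nu_2\left(Tr\left\{\bf{P_2P_2}^{\dag}\right\}-Q_2\right),
\end{equation}
where $\nu_1,\nu_2\ge 0$ are the multipliers associated with the per-user power constraints \eqref{3} and \eqref{4}. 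Since the mutual information is the objective to be maximized over the (nonconvex in general) feasible set of power-limited precoders, any maximizer must satisfy the Karush--Kuhn--Tucker conditions; I would state that we restrict attention to such stationary points, which is the standard sense in which ``optimal precoding'' is meant here and in \cite{35},~\cite{3}.

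Next I would compute the gradient of $\mathcal{L}$ with respect to $\bf{P_1}$ and set it to zero. The gradient of the first term is exactly \eqref{eq.5.1} from Theorem~\ref{theorem2.4}, namely $\bf{H_1^{\dag}H_1P_1E_1}-\bf{H_1^{\dag}H_2P_2}\bf{\mathbb{E}[\widehat{x}_2\widehat{x}_1^{\dag}]}$, while the gradient of $Tr\left\{\bf{P_1P_1}^{\dag}\right\}$ with respect to $\bf{P_1}$ is $\bf{P_1}$. Setting $\bf{\nabla_{P_1}}\mathcal{L}=0$ then gives
\begin{equation}
\bf{H_1^{\dag}H_1P_1^{\star}E_1}-\bf{H_1^{\dag}H_2P_2^{\star}}\bf{\mathbb{E}[\widehat{x}_2\widehat{x}_1^{\dag}]}=\nu_1\bf{P_1^{\star}},
\end{equation}
and dividing through by $\nu_1$ (writing $\nu\equiv\nu_1$, chosen so that the power constraint \eqref{3} is met with equality) yields \eqref{9}. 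Repeating the identical argument with the roles of users $1$ and $2$ interchanged, using \eqref{eq.6.1} from Theorem~\ref{theorem2.4} for $\bf{\nabla_{P_2}}I$, produces \eqref{10}. I would also note that $\bf{E_1},\bf{E_2}$ and $\bf{\mathbb{E}[\widehat{x}_1\widehat{x}_2^{\dag}]}$ are themselves functions of $\bf{P_1^{\star}},\bf{P_2^{\star}}$ through \eqref{7}--\eqref{eq.38}, so \eqref{9}--\eqref{10} is a coupled fixed-point system rather than a closed-form expression, consistent with the discussion following the theorem.

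The main obstacle is justifying the differentiation of $I(\bf{x_1,x_2;y})$ with respect to the (complex) matrix entries of $\bf{P_1}$ and the correct bookkeeping of Wirtinger/conjugate derivatives so that the stationarity condition comes out in the stated form with $\bf{H_1^{\dag}}$ (not $\bf{H_1^T}$) and with $\bf{P_1}$ appearing on the right-hand side; but this is precisely what Theorem~\ref{theorem2.4} (proved in Appendix~C) already delivers, so here I would simply invoke it. A secondary point to address is that the KKT conditions are only necessary: the feasible region defined by \eqref{3}--\eqref{4} is compact, so a maximizer exists, and I would remark that the fixed-point characterization \eqref{9}--\eqref{10} is satisfied by it, while uniqueness/globality is not claimed in general and holds only in the specialized setups (e.g.\ orthogonal inputs or Gaussian inputs) referenced after the statement. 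I would close by observing that when the cross-correlation $\bf{\mathbb{E}[\widehat{x}_1\widehat{x}_2^{\dag}]}$ vanishes, \eqref{9}--\eqref{10} decouple into two independent single-user fixed-point precoder equations of the form in \cite{35}, which is the sanity check tying the result back to the known point-to-point case.
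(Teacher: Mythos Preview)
Your proposal is correct and follows essentially the same route as the paper's Appendix~F: form the Lagrangian for \eqref{2}--\eqref{4}, invoke the gradient expressions \eqref{eq.5.1}--\eqref{eq.6.1} from Theorem~\ref{theorem2.4}, and read off the KKT stationarity conditions as the fixed-point equations \eqref{9}--\eqref{10}. The only bookkeeping difference is that the paper's gradient (as derived in Appendix~C) carries an explicit $snr$ factor, so its multipliers $\lambda_i$ are absorbed via $\nu_i=\lambda_i/snr$, whereas you fold this directly into $\nu$; this is exactly the ``received $snr$ normalized by the Lagrange multipliers'' remark following the theorem.
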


Where $\nu$ are the received $snr$ normalized by the Lagrange multipliers of the optimization problem in \eqref{2}. 

\begin{proof}
The proof of Theorem~\ref{theorem5.4}~relies on the KKT conditions \cite{52} and the relation between the gradient of the mutual information with respect to the precoding matrices and the MMSE, see Appendix F.
\end{proof}

There are unique $\bf{P_1}^{\star}$, $\bf{P_2}^{\star}$ that satisfy the KKT conditions when the problem is strictly concave, corresponding to the global maximum. Using Monte-Carlo method, we can compute the MMSE matrices \eqref{7} and \eqref{8} for discrete constellations. To obtain the solution of \eqref{9} and \eqref{10}, we can use an iterative approach, similar to \cite{91}, \cite{3}, and \cite{92}. It is very important to notice that the optimal precoders of \eqref{9} and \eqref{10} satisfies a fixed point equation under two specific setups: 

\textit{First}, for the multiple-channels-per user in the two-user-MAC if and only if the second term in the gradient of the mutual information with respect to the precoding matrix is zero, this occurs when both inputs are orthogonal, or in this case, we can think about it as if the second user multiple-inputs have no influence to the first user multiple inputs, therefore, each user precoding is over his mutually interfering channels only. If each user is using OFDM signaling, i.e., transmitting over parallel independent Gaussian channels, the problem breaks into two separate problems, and the precoding solution for each is a fixed point equation. 

\textit{Second}, for the two-user-MAC case with each user is mutually interfering with the other, in this case the relation between the gradient of the mutual information with respect to the precoding matrix can be manipulated as in \cite{35}, thus, the precoder structure is similar to the one introduced in \cite{3}, \cite{107}. We define in the following theorem a general form of the optimal precoder's structure for the two-user MAC channel in \eqref{9} and \eqref{10} under the two specific setups where the precoder would satisfy a fixed point equation, therefore, the solution applies to the low-snr and high-snr regimes. However, it also reduces to the optimal power allocation solution when the precoding is precluded as will be shown later. Therefore, the precoder should admit a structure that performs matching of the strongest source modes to the weakest noise modes, and this alignment enforces permutation process to appear in the diagonal power allocation setup.

\begin{theorem}
\label{theorem6.4}
The non-unique first-order optimal precoders that maximize the mutual information for a two-user MAC subject to per-user power constraint, and under the assumption that the second term in \eqref{9}, and \eqref{10} equals zero, can be given as follows:
\begin{equation}
\label{11}
~~~~~\bf{P_1}^{\star}=\bf{U_{1}}\bf{D_1}\bf{R_{1}^{\dag}}
\end{equation}
\begin{equation}
\label{12}
~~~~~\bf{P_2}^{\star}=\bf{U_{2}}\bf{D_2}\bf{R_{2}^{\dag}}  
\end{equation}
With $\bf{U_{1}}$, $\bf{U_{2}}$ are unitary matrices, $\bf{D_1}$, $\bf{D_2}$ are diagonal matrices, and $\bf{R_1}$, $\bf{R_2}$ are rotation matrices.
\end{theorem}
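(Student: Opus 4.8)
The plan is to exploit the decoupling forced by the stated hypothesis and then reduce each user's problem to the single‑user precoding problem of~\cite{35},\cite{3}. Setting the second (cross–correlation) term in \eqref{9} and \eqref{10} to zero, the first–order optimality (KKT) conditions of Theorem~\ref{theorem5.4} collapse to the two \emph{separate} fixed–point equations
\begin{equation}
\nu_1\,{\bf{P}}_1^{\star}={\bf{H}}_1^{\dag}{\bf{H}}_1{\bf{P}}_1^{\star}{\bf{E}}_1,\qquad \nu_2\,{\bf{P}}_2^{\star}={\bf{H}}_2^{\dag}{\bf{H}}_2{\bf{P}}_2^{\star}{\bf{E}}_2,
\end{equation}
which are structurally identical to the single–user MIMO precoding optimality condition with arbitrary inputs. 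Hence it suffices to establish the claimed structure for one user, say user~1, with ${\bf{E}}_1$ read as the arbitrary–input MMSE matrix of the link ${\bf{y}}=\sqrt{snr}\,{\bf{H}}_1{\bf{P}}_1{\bf{x}}_1+{\bf{n}}$; the two factorizations are then assembled to obtain \eqref{11}--\eqref{12}.

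Next I would introduce the SVD ${\bf{P}}_1^{\star}={\bf{U}}_1{\bf{D}}_1{\bf{R}}_1^{\dag}$, with ${\bf{D}}_1$ diagonal and nonnegative and ${\bf{U}}_1,{\bf{R}}_1$ unitary, and the eigendecomposition ${\bf{H}}_1^{\dag}{\bf{H}}_1={\bf{V}}_1\Lambda_1{\bf{V}}_1^{\dag}$ with ${\bf{V}}_1$ unitary and $\Lambda_1$ diagonal. The pivotal observation is that, because the Gaussian noise is isotropic, left multiplication of the effective channel ${\bf{H}}_1{\bf{P}}_1$ by any unitary matrix changes neither the mutual information nor the power constraint; consequently the objective and the MMSE matrix ${\bf{E}}_1$ depend on ${\bf{P}}_1$ only through the Gram matrix ${\bf{P}}_1^{\dag}{\bf{H}}_1^{\dag}{\bf{H}}_1{\bf{P}}_1={\bf{R}}_1{\bf{D}}_1({\bf{U}}_1^{\dag}{\bf{H}}_1^{\dag}{\bf{H}}_1{\bf{U}}_1){\bf{D}}_1{\bf{R}}_1^{\dag}$. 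Combining this with the fixed–point equation — multiplying it on the right by $({\bf{P}}_1^{\star})^{\dag}$ and using that the resulting left side $\nu_1{\bf{P}}_1^{\star}({\bf{P}}_1^{\star})^{\dag}$ is Hermitian — forces ${\bf{H}}_1^{\dag}{\bf{H}}_1$ to commute with the Hermitian positive semidefinite matrix ${\bf{P}}_1^{\star}{\bf{E}}_1({\bf{P}}_1^{\star})^{\dag}$, hence the two are simultaneously diagonalizable and the left singular vectors of ${\bf{P}}_1^{\star}$ can be taken equal to ${\bf{V}}_1$, i.e. ${\bf{U}}_1={\bf{V}}_1$. Substituting this and left–multiplying by ${\bf{V}}_1^{\dag}$ reduces the fixed–point equation to $\nu_1{\bf{D}}_1{\bf{R}}_1^{\dag}=\Lambda_1{\bf{D}}_1{\bf{R}}_1^{\dag}{\bf{E}}_1$, which now closes in the pair $({\bf{D}}_1,{\bf{R}}_1)$ since ${\bf{E}}_1$ depends only on ${\bf{R}}_1{\bf{D}}_1\Lambda_1{\bf{D}}_1{\bf{R}}_1^{\dag}$.

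It then remains to read off the roles of the two remaining factors: ${\bf{D}}_1$ is the amplitude/power–allocation profile, its entries obeying a scalar mercury/waterfilling–type relation that specializes to Section~VI when precoding is precluded, while ${\bf{R}}_1$ is the residual unitary (rotation) degree of freedom. For non‑Gaussian ${\bf{x}}_1$ the mutual information is \emph{not} invariant to a rotation of the input, so ${\bf{R}}_1$ genuinely survives as an optimization variable; together with the ordering (permutation) that pairs the strongest source modes with the weakest noise modes, it is precisely the mechanism producing the non‑uniqueness stated in the theorem, whereas for Gaussian inputs ${\bf{R}}_1$ may be set to the identity without loss. Existence of a fixed point on the compact power‑constrained set is obtained by the iterative/continuity argument used in~\cite{3}, and the user‑2 factorization follows verbatim.

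The step I expect to be the real obstacle is the alignment argument ${\bf{U}}_1={\bf{V}}_1$ for \emph{arbitrary} inputs: the commutation identity is transparent only when $\Lambda_1$ has simple spectrum and ${\bf{E}}_1$ is nonsingular. Degenerate channel eigenvalues and, more importantly, rank‑deficient ${\bf{E}}_1$ — which arises exactly in the low‑SNR / null‑space regime highlighted earlier in the paper — block the direct diagonalization and force a more delicate perturbation/continuity argument; it is there, rather than in the SVD bookkeeping, that the bulk of the technical work lies.
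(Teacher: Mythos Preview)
Your proposal is correct and follows essentially the same route as the paper: set the cross terms in \eqref{9}--\eqref{10} to zero, obtain the decoupled single-user fixed-point equations $\nu_i{\bf P}_i^{\star}={\bf H}_i^{\dag}{\bf H}_i{\bf P}_i^{\star}{\bf E}_i$, right-multiply by $({\bf P}_i^{\star})^{\dag}$, and read off the SVD factors $\,{\bf U}_i={\bf V}_{H_i}$, ${\bf R}_i=\Pi\,{\bf U}_{E_i}$ via simultaneous diagonalization, deferring existence to the single-user results in~\cite{3}. The only cosmetic difference is that Appendix~G passes through a trace identity before asserting the joint diagonal structure, whereas you argue it via the Hermiticity/commutation of ${\bf H}_i^{\dag}{\bf H}_i$ and ${\bf P}_i^{\star}{\bf E}_i({\bf P}_i^{\star})^{\dag}$; the paper does not treat the degenerate-spectrum or rank-deficient ${\bf E}_i$ cases you flag, so your caveat there is well placed but not something the original proof resolves either.
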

\begin{proof}
The theorem follows the relation between the gradient of the mutual information with respect to the precoding matrix and the decomposition of the components of the channel, precoder, and MMSE matrices, and is a direct consequence to [Theorem 1, 5], see Appendix G. 
\end{proof}
The optimal precoders are as follows, $\bf{P_1}=\bf{U_1}\bf{D_1}\bf{R_{1}^{\dag}}$ and $\bf{P_2}=\bf{U_2}\bf{D_2}\bf{R_{2}^{\dag}}$ with $\bf{U_1=V_{H_1}}$, $\bf{U_2=V_{H_2}}$, correspond to the per user channel right singular vectors, respectively.\\ $\bf{D_1}=diag(\sqrt{p_{1,1}}$,$~...~$,${\sqrt{p_{1,n_t}}})$ and $\bf{D_2}=diag(\sqrt{p_{2,1}}$,$~...~$,${\sqrt{p_{2,n_t}}})$ are power allocation matrices, such that $\bf{p}_{l,m}$ is admitted by user $l$ over sub/channel $m$; this in fact corresponds to the mercury/waterfilling. $\bf{R_1}=\bf{\Pi U_{E_1}}$ and $\bf{R_2}=\bf{\Pi U_{E_2}}$ are rotation matrices that contain in their structure the eigenvectors of the user MMSE matrix which can be permuted and/or projected with $\Pi$ based on the correlation of the inputs and their estimates. Therefore, the rotation matrix insures firstly, allocation of power into the strongest channel singular vectors, and secondly, diagonalizes the MMSE matrix to insure un-correlating the error or in other words independence between inputs, enforcing the setup defined in \eqref{9} and \eqref{10} for the two special setups discussed previously about each user inputs, i.e., when the interference from each user to the other is cancelled or precluded, and approximated by zero.

\section{The Low-SNR Regime}
We now consider the optimal precoding policy for the two-user MAC Gaussian channel with arbitrary input distributions in the regime of low-snr. Consider a zero-mean uncorrelated complex inputs, with $\bf{\mathbb{E}[x_1x_1^{\dag}]=\mathbb{E}[x_2x_2^{\dag}]=I}$,~and~$\bf{\mathbb{E}[x_1x_1^{T}]=\mathbb{E}[x_2x_2^{T}]}$
$\bf{=0}$. We consider the low-snr expansion to the MMSE of equation \eqref{14}. Note that it can be easily deduced that the Taylor expansion of the non$-$linear MMSE in \eqref{14} will lead to the first order Taylor expansion of the linear MMSE for the Gaussian inputs setup. Thus, the low-snr expansion of the MMSE matrix can be expressed as:
\begin{equation}
\label{eq.15}
{\bf{E}}={\bf{I}}-{({\bf{H_1P_1}})}^{\dag}{\bf{H_1P_1}}.{snr}-{\bf{(H_2P_2)^{\dag}H_2P_2}}.{snr}+ \mathcal{O}(snr^2),
\end{equation}

with $\bf{E=E_1+E_2}$. Consequently,
\begin{equation}
mmse(snr)=Tr\left\{\bf{H_1P_1}\bf{E_1}{(\bf{H_1P_1})}^{\dag}\right\} +Tr \left\{\bf{H_2P_2}\bf{E_2}{(\bf{H_2P_2})}^{\dag}\right\} 
\end{equation}
\begin{multline}
~~=Tr \left\{\bf{H_1P_1}{(\bf{H_1P_1})}^{\dag}\right\} +Tr \left\{\bf{H_2P_2}{(\bf{H_2P_2})}^{\dag}\right\} \\
- Tr \left\{(\bf{H_1P_1}{(\bf{H_1P_1})}^{\dag})^{2}\right\}.snr - Tr\left\{(\bf{H_2P_2}{(\bf{H_2P_2})}^{\dag})^{2}\right\}.snr + \mathcal{O}(snr^2).
\end{multline}
Note that due to our new result of Theorem~\ref{theorem1.0}, we cannot apply immediately the fundamental relationship between mutual information and MMSE in \cite{35}, \cite{34}. Therefore, applying our new result, the low-snr Taylor expansion of the mutual information is given by:
\begin{multline}
\label{16}
I(snr)=Tr \left\{\bf{H_1P_1}{(\bf{H_1P_1})}^{\dag}\right\}.snr +Tr \left\{\bf{H_2P_2}{(\bf{H_2P_2})}^{\dag}\right\}.snr \\
- Tr\left\{(\bf{H_1P_1}{(\bf{H_1P_1})}^{\dag})^{2}\right\}.{snr^2} - Tr\left\{(\bf{H_2P_2}{(\bf{H_2P_2})}^{\dag})^{2}\right\}.{snr^2}+ \\
+Tr\left\{\bf{H_1P_1}{(\bf{H_1P_1})}^{\dag}\bf{H_2P_2}{(\bf{H_2P_2})}^{\dag}\right\}.{snr^2} \\
-Tr\left\{\bf{H_2P_2}{(\bf{H_2P_2})}^{\dag}\bf{H_1P_1}{(\bf{H_1P_1})}^{\dag}\right\}.{snr^2}+\mathcal{O}(snr^3)
\end{multline}
\normalsize
The wideband slope $-$ which indicates how fast the capacity is achieved in terms of required bandwidth $-$ is inversely proportional to the second order terms of the mutual information in the low-snr Taylor expansion \eqref{16}. Therefore, this term is a key low-power performance measure since the bandwidth required to sustain a given rate with a given low power, i.e., minimal energy per bit, is inversely proportional to this term \cite{93}. 

According to \eqref{16}, for first-order optimality, subject to \eqref{3} and \eqref{4}, the form of the optimal precoder follows from the low-snr expansions the form of optimal precoder for the complex Gaussian inputs settings. To prove this claim, lets re-define our optimization problem as follows:
\begin{equation}
\label{17}
max~~~Tr\left\{\bf{H_1P_1}{(\bf{H_1P_1})}^{\dag}\right\}.snr + Tr\left\{\bf{H_2P_2}{(\bf{H_2P_2})}^{\dag}\right\}.snr
\end{equation}
Subject to:
\begin{equation}
\label{18}
Tr\left\{\bf{P_1P_1}^{\dag}\right\} \leq 1
\end{equation}
\begin{equation}
\label{19}
Tr\left\{\bf{P_2P_2}^{\dag}\right\} \leq 1
\end{equation}

Lets do an eigen value decomposition such that:
\begin{equation}
\label{20}
\bf{H_1H_1^{\dag}}=\bf{U_{H_1H_1^{\dag}}}\bf{\Lambda_{H_1H_1^{\dag}}}\bf{U_{H_1H_1^{\dag}}^{\dag}},
\end{equation}
\begin{equation}
\label{21}
\bf{H_2H_2^{\dag}}=\bf{U_{H_2H_2^{\dag}}}\bf{\Lambda_{H_2H_2^{\dag}}}\bf{U_{H_2H_2^{\dag}}^{\dag}}
\end{equation}

Let:
\begin{equation}
\label{22}
\bf{\tilde{P_1}}= \bf{U_{H_1H_1^{\dag}}^{\dag}}\bf{P_1}
\end{equation}
\begin{equation}                                                                                
\label{23}
\bf{\tilde{P_2}}= \bf{U_{H_2H_2^{\dag}}^{\dag}}\bf{P_2}            
\end{equation}
Let $\bf{Z_1}~\succeq~0$ and $\bf{Z_2}~\succeq~0$ are positive semi-definite, such that:
\begin{equation}
\label{24}
\bf{Z_1}=\bf{\tilde{P_1}}\bf{\tilde{P_1}}^{\dag}
\end{equation}
\begin{equation}
\label{25}
\bf{Z_2}=\bf{\tilde{P_2}}\bf{\tilde{P_2}}^{\dag}
\end{equation}
Substitute \eqref{20} to \eqref{25} into \eqref{17},~\eqref{18}, and~\eqref{19}, the optimization problem can be re-written as:
\begin{equation}
\label{26}
max~~~Tr\left\{\bf{Z_1}\bf{\Lambda_{H_1H_1^{\dag}}}\right\}.snr + Tr\left\{{\bf{Z_2}}{\bf{\Lambda_{H_2H_2^{\dag}}}}\right\}.snr
\end{equation}
Subject to:
\begin{equation}
\label{27}
Tr\left\{\bf{Z_1}\right\} \leq 1
\end{equation}
\begin{equation}
\label{28}
Tr\left\{\bf{Z_2}\right\} \leq 1
\end{equation}
Solving the KKT conditions for the Lagrangian of the objective function \eqref{26} subject to \eqref{27} and \eqref{28} leads to:
\begin{equation}
\label{29}
{\bf{Z_1}}=\lambda^{-1}{\bf{\Lambda_{H_1H_1^{\dag}}}}.{snr},
\end{equation}
\begin{equation}
\label{30}
{\bf{Z_2}}=\lambda^{-1}{\bf{\Lambda_{H_2H_2^{\dag}}}}.{snr}
\end{equation}
Where $\lambda$ is the Lagrange multiplier of the optimization problem. The results in \eqref{29} and \eqref{30} prove that the optimal precoders in the low$-$snr perform mainly two operations: Firstly, it aligns the transmit directions with the eigenvectors of each sub-channel. Secondly, it performs power allocation over the sub-channels. 

\section{Optimal Power Allocation with Arbitrary Inputs}
\normalsize
To derive the optimal power allocation for MAC Gaussian channels driven by arbitrary inputs, we need to redefine the same objective function subject to the per user power constraints. The power allocation is a special case of the precoding which inherently includes a power allocation operation. However, the first reshapes the constellation by rotating or redefining the minimum distance between the points, while the second redistributes the power over the channels to guarantee a better utilization of the channel under the defined design criterion. Therefore, the power allocation problem
can be cast as a constrained non-linear optimization problem as follows:
\begin{equation}
\label{31}
max~~I(\bf{x_1, x_2; y})
\end{equation}
Subject to: 
\begin{equation}
\label{32}
{\bf{(P_1)}_{ij}}={\sqrt{p_{1j}}},~~ {\bf{(P_2)}_{ij}}={\sqrt{p_{2j}}}
\end{equation}
\begin{equation}
\label{33}
\sum\limits_{j=1}^{n_t} {{p}_{1j}} \leq 1,~ {{p}_{1j}}~\geq~0~~\forall j=1,...,n_t
\end{equation}
\begin{equation}
\label{34}
\sum\limits_{j=1}^{n_t} {{p}_{2j}} \leq 1,~ {{p}_{2j}}~\geq~0~~\forall j=1,...,n_t
\end{equation}
where ${\bf{p_{1}}}=diag({\sqrt{p_{1,1}}}$,$~...~$,${\sqrt{p_{1,n_t}}})$, and ${\bf{p_{2}}}=diag({\sqrt{p_{2,1}}}$,$~...~$,${\sqrt{p_{2,n_t}}})$. Capitalizing on the relation between the gradient of the mutual information with respect to the power allocation matrix and the MMSE, we provide the optimal power allocation in the following theorem.

\begin{theorem}
\label{theorem7.4}
The optimal power allocation that solves \eqref{31} subject to \eqref{32} to \eqref{34} satisfies:
\begin{equation}
\label{eq.35.4}
{{\bf{p}}_{1}}^{\star}=\gamma_1^{-1}\left({\bf{P_1}^{\star}}{\bf{H}}_{1}^{\dag}{\bf{H}_{1}}{\bf{P}_1}^{\star}{\bf{E}_{1}}-{\bf{P_1}^{\star}}{\bf{H}}_{1}^{\dag}{\bf{H}_{2}}{\bf{P}_2}^{\star}\bf{\mathbb{E}_y[\widehat{x}_2\widehat{x}_1^{\dag}]}\right)_{1j}
\end{equation}
\begin{equation}
\label{eq.36.4}
{{\bf{p}}_{2}}^{\star}=\gamma_2^{-1}\left({\bf{P_2}^{\star}}{\bf{H}}_{2}^{\dag}{\bf{H}_{2}}{\bf{P}_2}^{\star}{\bf{E}_{2}}-{\bf{P_2}^{\star}}{\bf{H}}_{2}^{\dag}{\bf{H}_{1}}{\bf{P}_1}^{\star}\bf{\mathbb{E}_y[\widehat{x}_1\widehat{x}_2^{\dag}]}\right)_{2j}
\end{equation}
\end{theorem}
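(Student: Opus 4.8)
The plan is to treat the power allocation problem \eqref{31}--\eqref{34} as the special case of the precoding problem in which $\bf{P_1}$ and $\bf{P_2}$ are restricted to be diagonal with nonnegative entries, $(\bf{P_1})_{jj}=\sqrt{p_{1j}}$ and $(\bf{P_2})_{jj}=\sqrt{p_{2j}}$, and then to repeat the KKT argument used for Theorem~\ref{theorem5.4} in Appendix~F with the scalars $p_{1j},p_{2j}$ now playing the role of optimization variables. First I would form the Lagrangian $\mathcal{L}=I({\bf{x_1,x_2;y}})-\gamma_1\big(\sum_{j}p_{1j}-1\big)-\gamma_2\big(\sum_{j}p_{2j}-1\big)+\sum_{j}\mu_{1j}p_{1j}+\sum_{j}\mu_{2j}p_{2j}$, where $\gamma_1,\gamma_2\ge 0$ are the multipliers of the sum-power constraints \eqref{33}--\eqref{34} and $\mu_{1j},\mu_{2j}\ge 0$ those of the nonnegativity constraints, and then write the stationarity conditions $\partial\mathcal{L}/\partial p_{1j}=0$ and $\partial\mathcal{L}/\partial p_{2j}=0$ together with complementary slackness.

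The key step is the evaluation of $\partial I/\partial p_{1j}$. Because only the $(j,j)$ entry of $\bf{P_1}$ depends on $p_{1j}$, the chain rule gives $\partial I/\partial p_{1j}=[{\bf{\nabla_{P_1}}}I]_{jj}\cdot\partial(\bf{P_1})_{jj}/\partial p_{1j}=[{\bf{\nabla_{P_1}}}I]_{jj}/(2\sqrt{p_{1j}})$, where ${\bf{\nabla_{P_1}}}I$ is exactly the matrix gradient of Theorem~\ref{theorem2.4}, i.e. $\bf{H_1^{\dag}H_1 P_1 E_1}-\bf{H_1^{\dag}H_2 P_2}\,\bf{\mathbb{E}[\widehat{x}_2\widehat{x}_1^{\dag}]}$. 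Substituting into the stationarity condition for an active sum-power constraint and a strictly positive $p_{1j}$ gives $[{\bf{\nabla_{P_1}}}I]_{jj}=2\gamma_1\sqrt{p_{1j}}$; multiplying both sides by $\sqrt{p_{1j}}=(\bf{P_1})_{jj}$ and using $(\bf{P_1}\,{\bf{\nabla_{P_1}}}I)_{jj}=(\bf{P_1})_{jj}[{\bf{\nabla_{P_1}}}I]_{jj}$ yields $p_{1j}=\frac{1}{2\gamma_1}\big(\bf{P_1}\big(\bf{H_1^{\dag}H_1 P_1 E_1}-\bf{H_1^{\dag}H_2 P_2}\,\bf{\mathbb{E}[\widehat{x}_2\widehat{x}_1^{\dag}]}\big)\big)_{jj}$, which is \eqref{eq.35.4} once the factor $2$ and the normalization of the received $snr$ are absorbed into $\gamma_1$; interchanging the user indices gives \eqref{eq.36.4}. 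The nonnegativity multipliers $\mu_{1j},\mu_{2j}$ then force $p_{1j}$ (resp. $p_{2j}$) to be clipped to zero whenever the right-hand side would be negative, which is the mercury/waterfilling-type behaviour announced right after the theorem, and $\gamma_1,\gamma_2$ are determined by $\sum_{j}p_{1j}=\sum_{j}p_{2j}=1$.

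I expect the main obstacle to be bookkeeping rather than anything conceptual: keeping the complex (Wirtinger) differentiation consistent so that the real prefactor $1/(2\sqrt{p_{1j}})$ coming from the square-root parametrization is handled correctly and collapses cleanly into $\gamma_1$; making explicit that $\bf{E_1}$, $\bf{E_2}$ and the cross-correlations $\bf{\mathbb{E}[\widehat{x}_i\widehat{x}_j^{\dag}]}$ are themselves functions of $\bf{P_1},\bf{P_2}$, so that \eqref{eq.35.4}--\eqref{eq.36.4} are coupled fixed-point equations and the argument only characterizes stationary points (optimality and uniqueness then rest on the concavity discussion already invoked for Theorem~\ref{theorem5.4}); and checking that restricting to diagonal $\bf{P_1},\bf{P_2}$ does not create spurious critical points, i.e. that differentiating $I$ in the scalars $p_{1j}$ is consistent with taking the diagonal of the full matrix gradient ${\bf{\nabla_{P_1}}}I$. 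A complete proof would mirror Appendix~F line by line with these modifications.
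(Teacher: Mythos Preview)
Your proposal is correct and follows essentially the same route as the paper's Appendix~H: form the Lagrangian with multipliers for the sum-power and nonnegativity constraints, apply the KKT stationarity condition in the scalars $p_{1j},p_{2j}$, and evaluate $\partial I/\partial p_{1j}$ via the chain rule through the diagonal entry of $\bf{P_1}$ using the matrix gradient of Theorem~\ref{theorem2.4}, then absorb the $1/\sqrt{p_{1j}}$ and $snr$ prefactors into the multiplier to obtain \eqref{eq.35.4}--\eqref{eq.36.4}. Your write-up is in fact somewhat more explicit than the paper's about the chain-rule step $\partial I/\partial p_{1j}=[{\bf{\nabla_{P_1}}}I]_{jj}\cdot \partial(\sqrt{p_{1j}})/\partial p_{1j}$ and about the fixed-point nature of the resulting equations, but the argument is the same.
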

Where $\gamma_1$ and $\gamma_2$ are the Lagrange multipliers normalized by the received $snr$, of the optimization problem in \eqref{31}. 
\begin{proof}
The proof of Theorem~\ref{theorem7.4} relies on the KKT conditions \cite{52} and the relation between the gradient of the mutual information with respect to the precoding matrices and the MMSE, see Appendix H.
\end{proof}

For the special case where we have a single channel per user in the two-user MAC, \eqref{eq.35.4} and \eqref{eq.36.4} can be simplified to the following:
\begin{equation}
\label{37}
{\bf{{p}}_{1}}^{\star}=\gamma_1^{-1}{{p_{1j}}^{\star}}|{{h_1}_{j}}|^{2}{\mathbb{E}_y[({\bf{x_1-\widehat{x}_1}})({\bf{x_1-\widehat{x}_1}})^{\dag}]}
-\gamma_1^{-1}{{p_{2j}}^{\star}}|{{h_1}_{j}}^{*}{{h_2}_{j}}|\bf{\mathbb{E}_y[({\bf{x_1-\widehat{x}_1}})({\bf{x_2-\widehat{x}_2}})^{\dag}]}
\end{equation}
\begin{equation}
\label{38}
{\bf{{p}}_{2}}^{\star}=\gamma_2^{-1}{{p_{2j}}^{\star}}|{{h_2}_{j}}|^{2}{\mathbb{E}_y[({\bf{x_2-\widehat{x}_2}})({\bf{x_2-\widehat{x}_2}})^{\dag}]}
-\gamma_2^{-1}{{p_{1j}}}^{\star}|{{h_2}_{j}}^{*}{{h_1}_{j}}|{\mathbb{E}_y[({\bf{x_2-\widehat{x}_2}})({\bf{x_1-\widehat{x}_1}})^{\dag}]}
\end{equation}

\normalsize

In \eqref{37} and \eqref{38}, due to the fact that the MSE is less than or equal to 1, and consider the special case where the second covariance term is zero, i.e., no user is interfering with the other, we can express the optimal power allocation with the single user mercury/waterfilling form \cite{39} as follows:
\begin{equation}
\left\{\begin{array}{l l}
 {\bf{p_1^{\star}}}={\frac{1}{snr|{h_1}_{j}|^{2}}mmse_1^{-1}}\left(\frac{\gamma_1}{|{h_1}_{j}|^{2}}\right) &\mbox{${\gamma_1 < |{h_1}_{j}|^{2}}$} \\
 {\bf{p_1^{\star}}}=0,    &\mbox{${\gamma_1 \geq |{h_1}_{j}|^{2}}$} \\
\end{array} \right.
\label{eq:39}
\end{equation}

\begin{equation}
\left\{\begin{array}{l l}
{\bf{p_2^{\star}}}=\frac{1}{snr|{h_2}_{j}|^{2}}{mmse_2^{-1}}\left(\frac{\gamma_2}{|{h_2}_{j}|^{2}}\right) &\mbox{${\gamma_2 < |{h_2}_{j}|^{2}}$}\\
{\bf{p_2^{\star}}}=0,    &\mbox{${\gamma_2 \geq |{h_2}_{j}|^{2}}$}\\
\end{array} \right.
\label{eq:40}
\end{equation}

There is a unique set $\bf{p_1^{\star}}$, $\bf{p_2^{\star}}$ that satisfy the KKT conditions when the problem is strictly concave, corresponding to the global maximum.

\section{Numerical Results}
In this section we will introduce a set of illustrative results of the two-user MAC with arbitrary inputs. In particular, we will present the simplified case where the two user channels are transmitting over scalar channels or single channel per user. We used Monte-Carlo method to generate the results.

Figure~\ref{fig:Figure1} illustrates the decay in the mutual information at the $45^{\small o}$ when the two BPSK inputs cancel each other as they lie in the null space of the channel; called the Voronoi region, with the channels are such that $h_1=1$, and $h_2=1$. 

If the channel gains or the total powers are different, the decay will be shifted into another line in the coordinate axis. However, we can easily check that if we induce orthogonality, for example if $h_1=1$ and $h_2=1j$, the mutual information will be the maximum achievable one without this decay region. 

\begin{figure}[ht!]
    \begin{center}
        \mbox{\includegraphics[width=6.4in,height=4in]{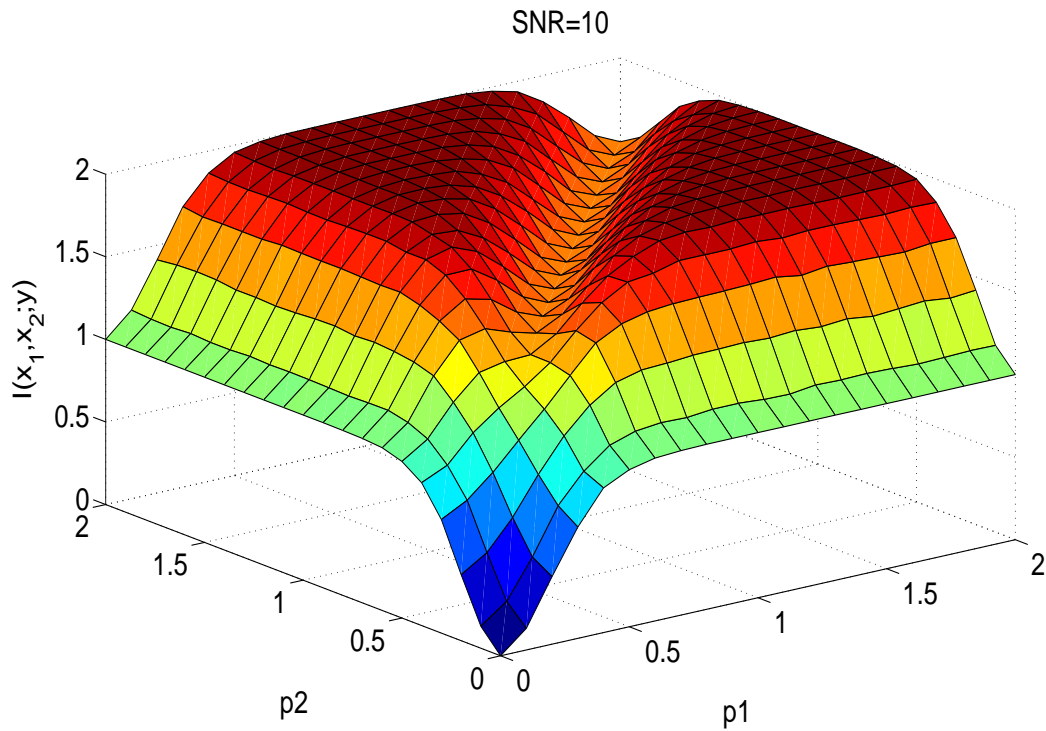}}
    \caption{The two-user MAC mutual information with BPSK inputs.}
    \label{fig:Figure1}
    \end{center}
    \label{Figure1}
    \end{figure}
\begin{figure}[ht!]
    \begin{center}
        \mbox{\includegraphics[width=6.4in,height=4.1in]{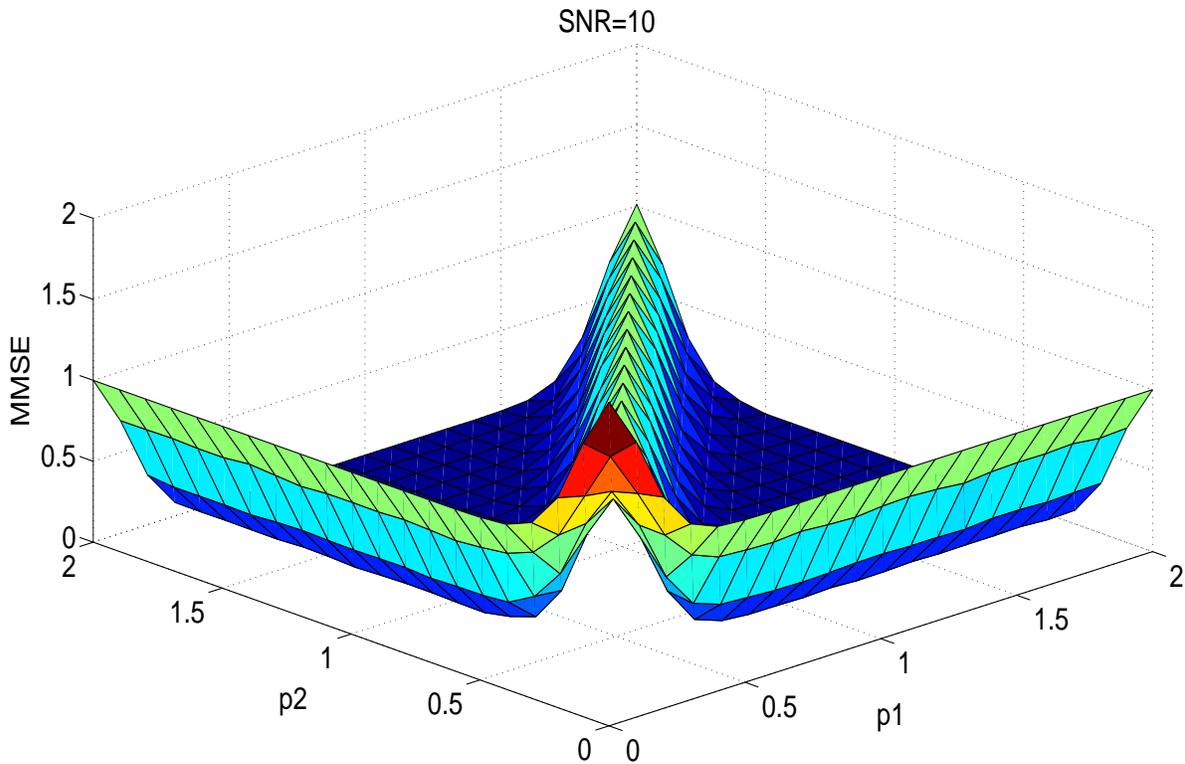}}
    \caption{The two-user MAC MMSE with BPSK inputs.}
    \label{fig:Figure2}
    \end{center}
    \label{Figure2}
\end{figure}

In Figure~\ref{fig:Figure2} we show the total MMSE for the two-user MAC. It is interesting to notice that the MMSE behavior corresponds to swapping the mutual information curve in Figure~\ref{fig:Figure1}. This can let us visualize the relation between the gradient of the mutual information and the MMSE \cite{35}, \cite{34}. 

Of particular relevance to observe that the decay in the mutual information saturates at 1.5 bits while the decay of the MMSE saturates at 1 for equal power of both user's, e.g., $(p_1,p_2)=(2,2)$ in Figure~\ref{fig:Figure1}, and Figure~\ref{fig:Figure2}, respectively.

This is explained in the new fundamental relation which quantifies a loss of 0.5 bits in the data rates. For the case of Gaussian inputs the relation still holds, but the decay encountered in the arbitrary input setup will not exist. This can be easily verified theoretically and by simulation. For further analysis on Gaussian and mixed inputs, we refer the reader to~\cite{SamahLaura}.

Notice that the rate loss in the first user mutual information is due to the interference incurred by the second user. This can be understood via the negative second term in the gradient of the mutual information, similarly, for the other user. This term is a function of the main channel, the interferer channel, the main and interferer powers, and the input estimates. 

Therefore, the interference from user 1 to user 2 can be interpreted as:
\begin{equation}
\label{41}
\bf{H_{1}}^{\dag}{\bf{H}_{2}}{\bf{P}_2}^{\star}\bf{\mathbb{E}[\widehat{x}_2\widehat{x}_1^{\dag}]},
\end{equation}
where we can re-write equation \eqref{41} for the single channel per user case in terms of the covariance of the interferer as follows:
\begin{equation}
\label{42}
\frac{1}{snr~{h_{1}^{\dag}}{{h}_{2}}}cov(snr~h_{1}^{\dag}{{h}_{2}}p_2^{\star}),
\end{equation}

Similarly, the interference from user 2 to user 1 can be interpreted as:
\begin{equation}
\label{43}
\bf{H_{2}^{\dag}}{\bf{H}_{1}}{\bf{P}_1}^{\star}\bf{\mathbb{E}[\widehat{x}_1\widehat{x}_2^{\dag}]},
\end{equation}
Where we can re-write equation \eqref{43} for the single channel per user case in terms of the covariance of the interferer as follows:
\begin{equation}
\label{44}
\frac{1}{snr~{h_{2}^{\dag}}{{h}_{1}}}cov(snr~{h_{2}^{\dag}}{{h}_{1}}p_1^{\star}).
\end{equation}

Figure~\ref{fig:Figure3} and Figure~\ref{fig:Figure4} illustrate the MMSE per user in the two-user MAC channel. MMSE1 and MMSE2 correspond to the mean-squared error of the first and second user respectively. It is worth to note that the sum of the two MMSEs leads to the total MMSE in Figure~\ref{fig:Figure2}. 

Figure~\ref{fig:Figure5} illustrates the covariance of one interferer into the other user. We can see the negative term that leads to the loss incurred in the achieved mutual information.

Finally, Figure~\ref{fig:Figure6} presents the optimal power allocation for the two-user MAC. When the total power for the first user is larger than that for the second user, the optimal power allocation for the first will be to allocate its total power, as far as the second user doesn't extremely interfere. 

However, for the second user, the optimal power allocation will start with an allocation of the total power, then iteratively optimize it by decreasing it. Therefore, precoding is of great importance to such cases where we can align the transmit directions - or caused interference - and maximize the information rates.
\begin{figure}[ht!]
    \begin{center}
        \mbox{\includegraphics[width=6.6in,height=3.8in]{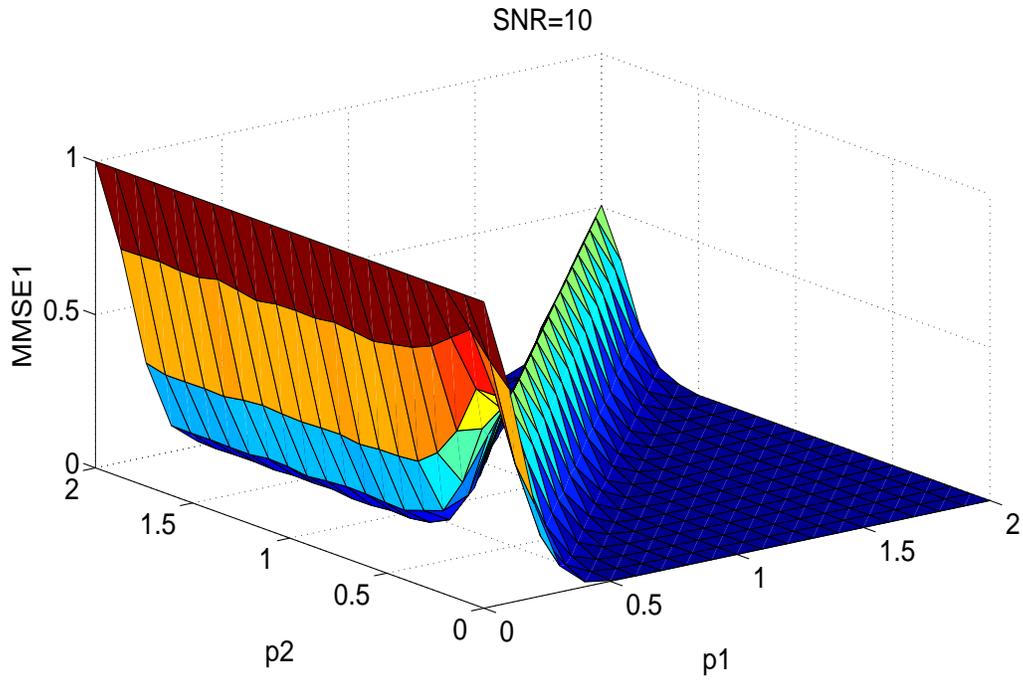}}
    \caption{MMSE1 for the first user of the two-user MAC with BPSK input.}
    \label{fig:Figure3}
    \end{center}
    \label{Figure3}
   \end{figure}
\begin{figure}[ht!]
    \begin{center}
        \mbox{\includegraphics[width=6.6in,height=3.8in]{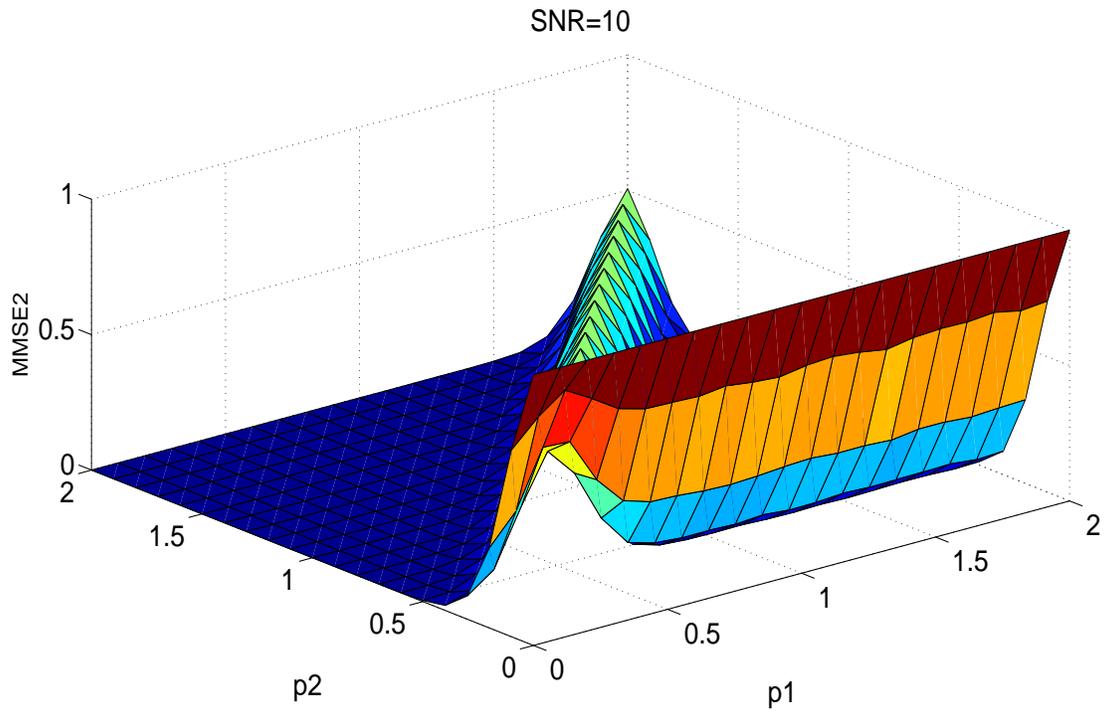}}
    \caption{MMSE2 for the second user of the two-user MAC with BPSK input.}
    \label{fig:Figure4}
    \end{center}
    \label{Figure4}
\end{figure}

\begin{figure}[ht!]
    \begin{center}
        \mbox{\includegraphics[width=6.6in,height=3.9in]{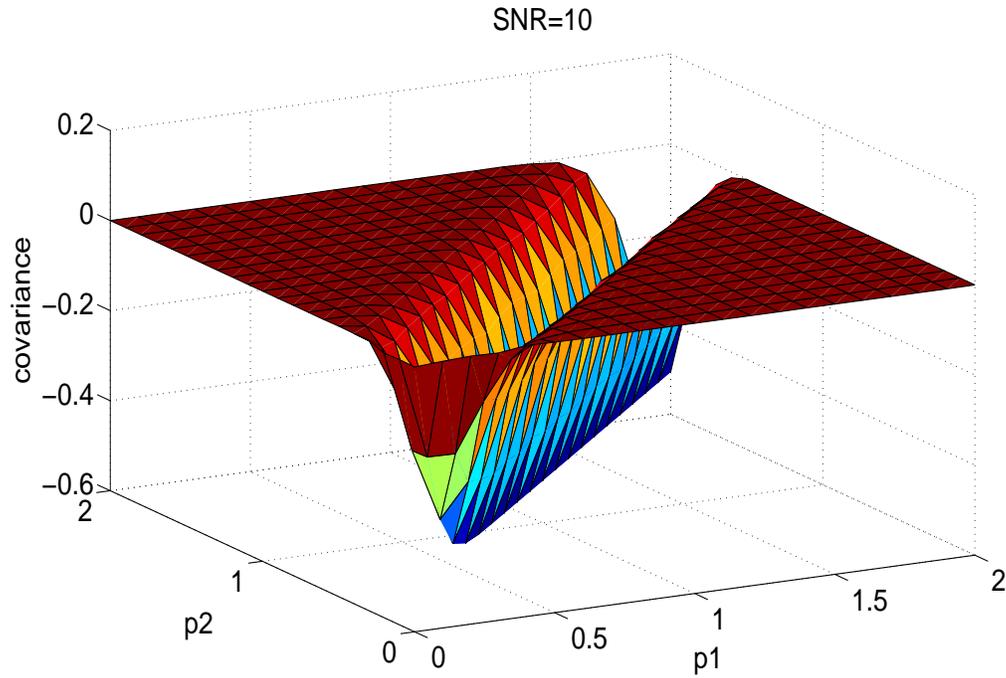}}
    \caption{Covariance for the first user of the two-user MAC with BPSK input.}
    \label{fig:Figure5}
    \end{center}
    \label{Figure5}
\end{figure}
\begin{figure}[ht!]
    \begin{center}
        \mbox{\includegraphics[width=6.6in,height=3.9in]{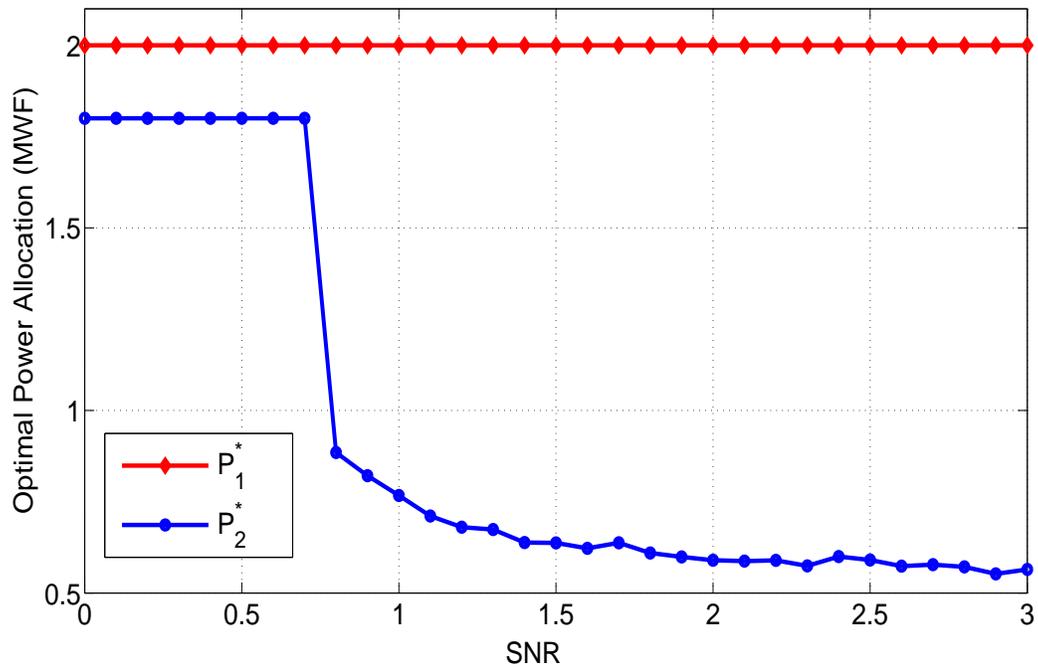}}
    \caption{Optimal power allocation of the two user MAC with BPSK inputs.}
    \label{fig:Figure6}
    \end{center}
    \label{Figure6}
\end{figure}

\section{Conclusions}
In this paper, we studied the two-user MAC Gaussian channels with arbitrary inputs. We provide a generlization of the fundamental relation between the mutual information and the MMSE to a new fundamental relation which applies to multiuser channel setups. Further, we proved our generlized framework deriving the relation between the gradient of the mutual information and the MMSE for any input setup. We build upon derivations for the optimal precoding and optimal power allocation. We specialize our results to different cases of the input users, where for the simple case when each user transmits over a SISO channel, the relations can be mapped to the ones by Guo et al. for point to point channels with general distributions \cite{34}, and by Palomar et al. for the linear vector Gaussian channels with mutually interfering inputs \cite{35}. We have shown that the optimal precoder can follow a fixed point equation similar to \cite{3} under certain conditions, where we identified the optimal precoder structure with respect to the power allocation, and the right and left signular vectors. Much simplification to the setup where both inputs don't interfere with each other leads to the mercury/waterfilling interpretation for the single user case with binary inputs \cite{39}. This casts further insights to have an interpretation of the interference as discussed. Additionally, it casts further research questions in quantifying the losses incurred in the mutual information due to interference with and without precoding, and in conjunction with error correcting codes. Furthermore, we have analyzed the system at the low-SNR regime, we have derived novel closed forms of the MMSE and the mutual information at the low SNR based on our new unveiled fundamental relation. The interference at the low-SNR adds non-linear terms at the low SNR which decides on how fast the capacity can be acheived with respect to the SNR.

Of particular relevance are the implications of our derived new fundamental relation in the study of the fundamental limits of cooperation, if a multi-cell processing framework is to be implemented between the BSs in a network. In particular, the interpretation of the interference in the studied two-user MAC, which is basically extracted from the gradient formula, can be extended to the $k$-user MAC channel where the impact of interference incurs higher losses into the information rates. Therefore, this have direct impact on the optimal designs of power allocation and precoding. In fact, a global CSI and/or data sharing will not be possible, and this enforces a framework where muli-cell processing (MCP) should be implemented in clusters of known size. Furthermore, it is straightforward to think about setups where the cooperation and a clustered knowledge, i.e., a MCP with different levels of cooperation, can provide robust techniques to cancel or decode the interference at the transmitter/receiver sides and so improving the network spectral efficiency. In addition, its of particular relevance to understand the significance of our derivations on setups that include higher number of users accessing the network. For instance, the application of such formulas into scenarios where main users and cognitive users co-exist. The main users can design there optimal power allocation and optimal precoding subject to, the known interferer(s), i.e., the cognitive user interference, whether, they will be transmitting over SISO channels, or using OFDM modulation. Further, we have also she light on a new class of problems where a multi-tier precoding could exist, one example is massive MIMOs. In particular, a new formula can be utilized to express the signal to interference and noise ratio (SINR) based on the interpretation of the interference.

\section{Appendix A: Proof of Theorem~\ref{theorem1.0}}
The conditional probability density for the two-user MAC can be written as follows:
\begin{equation}
p_{y|x_1,x_2}({\bf{y|x_1, x_2}})=\frac{1}{\pi^{n_r}}e^{-\left\|y-\sqrt{snr}{\bf{{H_1P_1x_1}}}-\sqrt{snr}{\bf{H_2P_2x_2}}\right\|^{2}}
\end{equation}
Thus, the corresponding mutual information is:
\begin{equation}
I({\bf{x_1, x_2;y}})=\mathbb{E}\left[log \left(\frac{p_{y|x_1,x_2}({\bf{y|x_1, x_2}})}{p_{y}({\bf{y}})}\right)\right]
\end{equation}
\begin{equation}
I({\bf{x_1, x_2;y}})=-n_r{{log(\pi e)}} -\mathbb{E}\left[log\left(p_{y}({\bf{y}})\right)\right]
\end{equation}
\begin{equation}
I({\bf{x_1, x_2;y}})=-n_r{{log(\pi e)}} -  \int p_{y}({\bf{y}})log\left(p_{y}({\bf{y}})\right) d\bf{y}
\end{equation}
Then, the derivative of the mutual information with respect to the SNR is as follows:
\begin{equation}
\frac{d I({\bf{x_1, x_2;y}})}{d snr}=-\frac{\partial}{\partial snr} \int {p_{y}}({\bf{y}})log \left(p_{y}({\bf{y}})\right) d\bf{y}
\end{equation}
\begin{equation}
~~=-\int \left({p_{y}}({\bf{y}})\frac{1}{{p_{y}}({\bf{y}})}+log \left(p_{y}({\bf{y}})\right)\right)\frac{\partial {p_{y}}({\bf{y}})}{\partial snr} d\bf{y}
\end{equation}
\begin{equation}
\label{45.0}
~~=-\int \left(1+\bf{log\left(p_{y}({\bf{y}})\right)}\right)\frac{\partial {p_{y}}({\bf{y}})}{\partial snr} d\bf{y}
\end{equation}

Where the probability density function of the received vector $\bf{y}$ is given by:
\begin{equation}
p_{y}({\bf{y}})=\sum_{{\bf{x_{1},x_{2}}}} p_{y|x_{1},x_{2}}({\bf{y|x_{1},x_{2}}})p_{x_1,x_2}({\bf{x_{1}}},{\bf{x_{2}}})
\end{equation}
\begin{equation}
~~=\mathbb{E}_{x_1,x_2}\left[{p_{y|x_{1},x_{2}}}({\bf{y|x_{1},x_{2}}})\right]
\end{equation}

The derivative of the conditional output with respect to the SNR can be written as:
\begin{multline}
\frac{\partial {{p_{y|x_{1},x_{2}}}}({\bf{y|x_{1},x_{2}}})}{\partial snr}=\\
-{{p_{y|x_{1},x_{2}}}}({\bf{y|x_{1},x_{2}}})\frac{\partial }{\partial snr}\left({\bf{y}}-\sqrt{snr}{\bf{H_1P_1x_1}}-\sqrt{snr}{\bf{H_2P_2x_2}}\right)^{\dag} \times \\
\left({\bf{y}}-\sqrt{snr}{\bf{H_1P_1x_1}}-\sqrt{snr}{\bf{H_2P_2x_2}}\right)
\end{multline}
\begin{multline}
~~=-\frac{1}{\sqrt{snr}}\left(({\bf{H_1P_1x_1}})^\dag - ({\bf{H_2P_2x_2}})^\dag \right) \left({\bf{y}}-\sqrt{snr}{\bf{H_1P_1x_1}}-\sqrt{snr}{\bf{H_2P_2x_2}}\right) \times \\
{p_{y|x_{1},x_{2}}}({\bf{y|x_{1},x_{2}}})
\end{multline}
\begin{equation}
~~=-\frac{1}{\sqrt{snr}}\left(({\bf{H_1P_1x_1}})^\dag-({\bf{H_2P_2x_2}})^\dag \right){\bf{\nabla_{y}}}{p_{y|x_{1},x_{2}}}({\bf{y|x_{1},x_{2}}})
\end{equation}

Therefore, we have:
\begin{multline}
\label{46.0}
\mathbb{E}_{x_1,x_2}\left[{\nabla_{snr}}{p_{y|x_{1},x_{2}}}({\bf{y|x_{1},x_{2}}})\right]=\\
\mathbb{E}_{x_1,x_2}\left[-\frac{1}{\sqrt{snr}}\left(({\bf{H_1P_1x_1}})^\dag-({\bf{H_2P_2x_2}})^\dag \right)\nabla_{\bf{y}}p_{y|x_{1},x_{2}}({\bf{y|x_{1},x_{2}}})\right]
\end{multline}
Substitute \eqref{46.0} into \eqref{45.0}, we get:
\begin{multline}
\frac{d I({\bf{x_1, x_2;y}})}{d snr}=\frac{1}{\sqrt{snr}} \int \left(1+{log}\left({p_{y}}({\bf{y}})\right)\right)\mathbb{E}_{x_1,x_2}[\left(({\bf{H_1P_1x_1}})^\dag-({\bf{H_2P_2x_2}})^\dag \right) \times \\
\bf{\nabla_{y}}\bf{p_{y|x_{1},x_{2}}}(\bf{y|x_{1},x_{2}})] d\bf{y}
\end{multline}
\begin{equation}
~~=\frac{1}{\sqrt{snr}}\mathbb{E}_{x_1,x_2}\left[\left(\int \left(1+{log}\left({p_{y}}({\bf{y}})\right)\right)\left(({\bf{H_1P_1x_1}})^\dag-({\bf{H_2P_2x_2}})^\dag \right)\nabla_{\bf{y}}{p_{y|x_{1},x_{2}}}({\bf{y|x_{1},x_{2}}})d\bf{y}\right) \right]
\end{equation}

Using integration by parts applied to the real and imaginary parts of $\bf{y}$ we have:
\begin{multline}
\label{47.0}
\int \left(1+{log}\left({p_{y}}({\bf{y}})\right)\right) \frac{\partial p_{y|x_{1},x_{2}}({\bf{y|x_{1},x_{2}}})}{\partial t} d{\bf{t}}= \\
\int \left(1+{log}\left({p_{y}}({\bf{y}})\right)\right) p_{y|x_{1},x_{2}}({\bf{y|x_{1},x_{2}}}){|}_{-\infty}^{\infty}-\int_{-\infty}^{\infty} \frac{1}{p_{y}({\bf{y}})} \frac{\partial {p_{y}}({\bf{y}})}{\partial t}{p_{y|x_{1},x_{2}}}({\bf{y|x_{1},x_{2}}})d\bf{t}
\end{multline}

The first term in \eqref{47.0} goes to zero as $\left\|\bf{y}\right\| \rightarrow \infty$. Therefore,
\begin{equation}
\frac{d I({\bf{x_1, x_2;y}})}{d snr}=\frac{1}{\sqrt{snr}}\mathbb{E}_{x_1,x_2}\left[-\int \left( \left(({\bf{H_1P_1x_1}})^\dag-({\bf{H_2P_2x_2}})^\dag \right)\frac{p_{y|x_{1},x_{2}}({\bf{y|x_{1},x_{2}}})}{p_{y}({\bf{y}})}\nabla_{\bf{y}}p_{y}({\bf{y}})d\bf{y} \right) \right]
\end{equation}
\begin{equation}
\frac{d I({\bf{x_1, x_2;y}})}{d snr}=-\frac{1}{\sqrt{snr}}\int {\nabla_{\bf{y}}}{p_{y}({\bf{y}})}\mathbb{E}_{x_1,x_2}\left[ \left(({\bf{H_1P_1x_1}})^\dag-({\bf{H_2P_2x_2}})^\dag \right) \frac{{p_{y|x_{1},x_{2}}}({\bf{y|x_{1},x_{2}}})}{{p_{y}({\bf{y}})}}\right]d\bf{y}
\end{equation}
\begin{equation}
\label{48.0}
\frac{d I({\bf{x_1, x_2;y}})}{d snr}=-\frac{1}{\sqrt{snr}}\int \nabla_{\bf{y}}{{p_{y}({\bf{y}})}}\mathbb{E}_{x_1,x_2}\left(({\bf{H_1P_1}})^\dag \mathbb{E}_{x_1|y}\left[{\bf{x_1|y}}\right]^{\dag}-({\bf{H_2P_2}})^\dag \mathbb{E}_{x_2|y}\left[{\bf{x_2|y}}\right]^{\dag} \right) d\bf{y}
\end{equation}

However,
\begin{multline}
\label{49.0}
\nabla_{\bf{y}}{p_{y}(\bf{y})}=\nabla_{\bf{y}}\mathbb{E}_{x_1,x_2}\left[{p_{y|x_{1},x_{2}}}({\bf{y|x_{1},x_{2}}})\right] \\
=\mathbb{E}_{x_1,x_2}\left[\nabla_{\bf{y}}{p_{y|x_{1},x_{2}}}({\bf{y|x_{1},x_{2}}})\right] \\
=-\mathbb{E}_{x_1,x_2}\left[{p_{y|x_{1},x_{2}}}({\bf{y|x_{1},x_{2}}})\left({\bf{y}}-\sqrt{snr}{\bf{H_1P_1x_1}}-\sqrt{snr}{\bf{H_2P_2x_2}}\right)\right]\\
=-\mathbb{E}_{x_1,x_2}\left[{p_{y}}({\bf{y}})\left({\bf{y}}-\sqrt{snr}{\bf{H_1P_1x_1}}-\sqrt{snr}{\bf{H_2P_2x_2}}\right)|{\bf{y}}\right] \\
=-{p_{y}}({\bf{y}})\left({\bf{y}}-\sqrt{snr}{\bf{H_1P_1}}\mathbb{E}_{x_1|y}[{\bf{x_1|y}}]-\sqrt{snr}{\bf{H_2P_2}}\mathbb{E}_{x_2|y}[{\bf{x_2|y}}]\right)\\
\end{multline}

Substitute \eqref{49.0} into \eqref{48.0} we get:
\begin{align}
\frac{d I({\bf{x_1, x_2;y}})}{d snr}=\frac{1}{\sqrt{snr}}\int p_{y}({\bf{y}})\left({\bf{y}}-\sqrt{snr}{\bf{H_1P_1}}\mathbb{E}_{x_1|y}[{\bf{x_1|y}}]-\sqrt{snr}{\bf{H_2P_2}} \mathbb{E}_{x_2|y}[{\bf{x_2|y}}]\right) \times \nonumber \\
\mathbb{E}_{x_1,x_2}\left(({\bf{H_1P_1}})^\dag \mathbb{E}_{x_1|y}\left[{\bf{x_1|y}}\right]^{\dag}-({\bf{H_2P_2}})^\dag \mathbb{E}_{x_2|y}\left[{\bf{x_2|y}}\right]^{\dag} \right) d\bf{y}
\end{align}
\begin{align}
\frac{d \bf{I(x_1, x_2;y)}}{d snr}=\frac{1}{\bf{\sqrt{snr}}}\mathbb{E}_y[\bf{yx_1^{\dag}}]\bf{(H_1P_1)^{\dag}}-\frac{1}{\sqrt{snr}}\mathbb{E}_y[\bf{yx_2^{\dag}}]\bf{(H_2P_2)^{\dag}} \nonumber \\
-\mathbb{E}_y[{\bf{H_1P_1}}\mathbb{E}_{x_1|y}[{\bf{x_1|y}}]\mathbb{E}_{x_1|y}[{\bf{x_1|y}}]^{\dag}{\bf{(H_1P_1)^{\dag}}}+\mathbb{E}_y[{\bf{H_1P_1}}\mathbb{E}_{x_1|y}[{\bf{x_1|y}}]\mathbb{E}_{x_2|y}[{\bf{x_2|y}}]^{\dag}]{\bf{(H_2P_2)^{\dag}}} \nonumber \\
+\mathbb{E}_y[{\bf{H_2P_2}}\mathbb{E}_{x_2|y}[{\bf{x_2|y}}]\mathbb{E}_{x_2|y}[{\bf{x_2|y}}]^{\dag}{\bf{(H_2P_2)^{\dag}}}-\mathbb{E}_y[{\bf{H_2P_2}}\mathbb{E}_{x_2|y}[{\bf{x_2|y}}]\mathbb{E}_{x_1|y}[{\bf{x_1|y}}]^{\dag}]{\bf{(H_1P_1)^{\dag}}}
\end{align}

Therefore,
\begin{multline}
\frac{d I({\bf{x_1, x_2;y}})}{d snr}={\bf{H_1P_1}}\mathbb{E}_{x_1}[{\bf{x_1x_1^{\dag}}}]{\bf{(H_1P_1)^{\dag}}} \nonumber \\
-{\bf{H_1P_1}}\mathbb{E}_y[\mathbb{E}_{x_1|y}[{\bf{x_1|y}}]\mathbb{E}_{x_1|y}[{\bf{x_1|y}}]^{\dag}{\bf{(H_1P_1)^{\dag}}} \nonumber \\
+{\bf{H_1P_1}}\mathbb{E}_y[\mathbb{E}_{x_1|y}[\bf{x_1|y}]\mathbb{E}_{x_2|y}[{\bf{x_2|y}}]^{\dag}]{\bf{(H_2P_2)^{\dag}}}\nonumber \\
-{\bf{H_2P_2}}\mathbb{E}_{x_2}[{\bf{x_2x_2^{\dag}}}]{\bf{(H_2P_2)^{\dag}}} \nonumber \\
+{\bf{H_2P_2}}\mathbb{E}_y[\mathbb{E}_{x_2|y}[{\bf{x_2|y}}]\mathbb{E}_{x_2|y}[{\bf{x_2|y}}]^{\dag}{\bf{(H_2P_2)^{\dag}}}  \nonumber \\
-{\bf{H_2P_2}}\mathbb{E}_y[\mathbb{E}_{x_1|y}[{\bf{x_1|y}}]\mathbb{E}_{x_2|y}[{\bf{x_2|y}}]^{\dag}]{\bf{(H_1P_1)^{\dag}}}
\end{multline}
\begin{multline}
\frac{d I({\bf{x_1, x_2;y}})}{d snr}={\bf{H_1P_1}}{\bf{E_1}}{\bf{(H_1P_1)^{\dag}}}-{\bf{H_2P_2}}{\bf{E_2}}{\bf{(H_2P_2)^{\dag}}} \nonumber \\
+{\bf{H_1P_1}}\mathbb{E}_y[\mathbb{E}_{x_1|y}[{\bf{x_1|y}}]\mathbb{E}_{x_2|y}[{\bf{x_2|y}}]^{\dag}]{\bf{(H_2P_2)^{\dag}}} \nonumber \\
-{\bf{H_2P_2}}\mathbb{E}_y[\mathbb{E}_{x_2|y}[{\bf{x_2|y}}]\mathbb{E}_{x_1|y}[{\bf{x_1|y}}]^{\dag}]{\bf{(H_1P_1)^{\dag}}}
\end{multline}
Therefore, due to the fact that the expectation remains the same if ${\bf{y}} \sim \mathcal{CN}(\sqrt{snr}{\bf{H_1P_1x_1}}+\sqrt{snr}{\bf{H_2P_2x_2,I}})$ or ${\bf{y}} \sim \mathcal{CN}(-\sqrt{snr}{\bf{H_1P_1x_1}}-\sqrt{snr}{\bf{H_2P_2x_2,I}})$, therefore, $\mathbb{E}[{\bf{x_1x_1^\dag}}]=-\mathbb{E}[{\bf{x_1x_1^\dag}}]$ and $\mathbb{E}[\mathbb{E}[{\bf{x_1|y}}]\mathbb{E}[{\bf{x_1|y}}]^{\dag}]=-\mathbb{E}[\mathbb{E}[{\bf{x_1|y}}]\mathbb{E}[{\bf{x_1|y}}]^{\dag}]$, and similarly for the other input, the derivative of the mutual information with respect to the SNR and the per users mmse and input estimates (or covariances) is as follows:
\begin{multline}
\frac{d I({\bf{x_1, x_2;y}})}{d snr}=mmse_1(snr)+mmse_2(snr)+Tr\left\{{\bf{H_1P_1}}\mathbb{E}_{y}[{\bf{\widehat{x}_1{\widehat{x}_2}^{\dag}}}]{\bf{(H_2P_2)^{\dag}}}\right\}  \nonumber \\
-Tr\left\{{\bf{H_2P_2}}\mathbb{E}_y[{\bf{\widehat{x}_2{\widehat{x}_1}^{\dag}}}]{\bf{(H_1P_1)^{\dag}}}\right\}
\end{multline}
Therefore, we can write the derivative of the derivative of the mutual information with respect to the snr as follows:
\begin{equation}
\frac{d I(snr)}{d snr}=mmse(snr)+\psi(snr)
\end{equation}

Therefore, Theorem~\ref{theorem1.0} has been proved as a generalization of the one by Guo, Shamai, Verdu in \cite{34} to the multiuser case.
\section{Appendix B: Proof of Theorem~\ref{theorem1.4}}
\normalsize
The conditional probability density for the two-user MAC can be written as follows:
\begin{equation}
p_{y|x_1,x_2}({\bf{y|x_1, x_2}})=\frac{1}{\pi^{n_r}}e^{-\left\|y-\sqrt{snr}{\bf{H_1P_1x_1}}-\sqrt{snr}{\bf{H_2P_2x_2}}\right\|^{2}}
\end{equation}

Thus, the corresponding mutual information is:
\begin{equation}
I({\bf{x_1, x_2;y}})=\mathbb{E}\left[log\left(\frac{p_{y|x_1,x_2}({\bf{y|x_1, x_2}})}{p_{y}({\bf{y}})}\right)\right]
\end{equation}
\vspace{-0.6cm}
\begin{equation}
I({\bf{x_1, x_2;y}})=-n_r{{log(\pi e)}} -{\mathbb{E}\left[log\left(p_{y}({\bf{y}})\right)\right]}
\end{equation}

Then, the gradient of the mutual information with respect to the channel of user 1 on the two-user MAC is as follows:
\begin{equation}
\frac{\partial I({\bf{x_1, x_2;y}})}{\partial {\bf{H_1^{\dag}}}}=-\frac{\partial}{\partial {\bf{H_1^{\dag}}}} \int {p_{y}}({\bf{y}})log\left(p_{y}({\bf{y}})\right) d\bf{y}
\end{equation}

\begin{equation}
~~=-\int \left({p_{y}}({\bf{y}})\frac{1}{{p_{y}}({\bf{y}})}+log\left(p_{y}({\bf{y}})\right)\right)\frac{\partial {p_{y}}({\bf{y}})}{\partial {\bf{H_1^{\dag}}}} d\bf{y}
\end{equation}
\begin{equation}
\label{45}
~~=-\int \left(1+log\left(p_{y}({\bf{y}})\right)\right)\frac{\partial {p_{y}}({\bf{y}})}{\partial {\bf{H_1^{\dag}}}} d\bf{y}
\end{equation}

Where the probability density function of the received vector $\bf{y}$ is given by:
\begin{equation}
{p_{y}}({\bf{y}})=\sum_{{\bf{x_{1},x_{2}}}} p_{y|x_{1},x_{2}}({\bf{y|x_{1},x_{2}}})p_{x_1,x_2}({\bf{x_{1}}},{\bf{x_{2}}})
\end{equation}
\begin{equation}
~~=\mathbb{E}_{x_1,x_2}\left[{p_{y|x_{1},x_{2}}}({\bf{y|x_{1},x_{2}}})\right]
\end{equation}

The derivative of the conditional output can be written as:
\begin{multline}
\frac{\partial {p_{y|x_{1},x_{2}}}({\bf{y|x_{1},x_{2}}})}{\partial {\bf{H_1^{\dag}}}}=\\
-{p_{y|x_{1},x_{2}}}({\bf{y|x_{1},x_{2}}})\frac{\partial }{\partial {\bf{H_1^{\dag}}}}\left({\bf{y}}-\sqrt{snr}{\bf{H_1P_1x_1}}-\sqrt{snr}{\bf{H_2P_2x_2}}\right)^{\dag} \times \\
\left(\bf{y}-\sqrt{snr}\bf{H_1P_1x_1}-\sqrt{snr}\bf{H_2P_2x_2}\right)
\end{multline}
\begin{equation}
~~={p_{y|x_{1},x_{2}}}({\bf{y|x_{1},x_{2}}})\left({\bf{y}}-\sqrt{snr}{\bf{H_1P_1x_1}}-\sqrt{snr}{\bf{H_2P_2x_2}}\right)\sqrt{snr}{\bf{x_1^{\dag}P_1^{\dag}}}
\end{equation}
\begin{equation}
~~=-\nabla_{\bf{y}}{p_{y|x_{1},x_{2}}}({\bf{y|x_{1},x_{2}}})\sqrt{snr}{\bf{x_1^{\dag}P_1^{\dag}}}
\end{equation}

Therefore, we have:
\begin{equation}
\label{46}
\mathbb{E}_{x_1,x_2}\left[\nabla_{\bf{H_1^{\dag}}}{p_{y|x_{1},x_{2}}}({\bf{y|x_{1},x_{2}}})\right]=
\mathbb{E}_{x_1,x_2}\left[-\nabla_{\bf{y}}{p_{y|x_{1},x_{2}}}({\bf{y|x_{1},x_{2}}})\sqrt{snr}{\bf{x_1^{\dag}P_1^{\dag}}}\right]
\end{equation}

Substitute \eqref{46} into \eqref{45}, we get:
\begin{multline}
\frac{\partial I({\bf{x_1, x_2;y}})}{\partial {\bf{H_1^{\dag}}}}=\int \left(1+log\left({p_{y}}({\bf{y}})\right)\right)\mathbb{E}_{x_1,x_2}[\nabla_{\bf{y}}{p_{y|x_{1},x_{2}}}({\bf{y|x_{1},x_{2}}}) \times \\
\sqrt{snr}{\bf{x_1^{\dag}P_1^{\dag}}}] d\bf{y}
\end{multline}
\begin{equation}
~~=\mathbb{E}_{x_1,x_2}\left[\left(\int \left(1+log\left({p_{y}}({\bf{y}})\right)\right)\nabla_{\bf{y}}{p_{y|x_{1},x_{2}}}({\bf{y|x_{1},x_{2}}})d{\bf{y}}\right)\sqrt{snr}{\bf{x_1^{\dag}P_1^{\dag}}} \right]
\end{equation}

Using integration by parts applied to the real and imaginary parts of $\bf{y}$ we have:
\begin{multline}
\label{47}
\int \left(1+{log}\left({p_{y}}({\bf{y}})\right)\right) \frac{\partial {p_{y|x_{1},x_{2}}}({\bf{y|x_{1},x_{2}}})}{\partial t} d\bf{t}= \\
\int \left(1+{log}\left({p_{y}}({\bf{y}})\right)\right) {p_{y|x_{1},x_{2}}}({\bf{y|x_{1},x_{2}}}){|}_{-\infty}^{\infty}-\int_{-\infty}^{\infty} \frac{1}{{p_{y}}({\bf{y}})} \frac{\partial {p_{y}}({\bf{y}})}{\partial t}{p_{y|x_{1},x_{2}}}({\bf{y|x_{1},x_{2}}})d\bf{t}
\end{multline}

The first term in \eqref{47} goes to zero as $\left\|\bf{y}\right\| \rightarrow \infty$. Therefore,
\begin{equation}
\frac{\partial I({\bf{x_1, x_2;y}})}{\partial {\bf{H_1^{\dag}}}}=\mathbb{E}_{x_1,x_2}\left[-\int \left(\frac{{p_{y|x_{1},x_{2}}}({\bf{y|x_{1},x_{2}}})}{{p_{y}({\bf{y}})}}\nabla_{\bf{y}}{p_{y}({\bf{y}})}d{\bf{y}}\right)\sqrt{snr}{\bf{x_1^{\dag}P_1^{\dag}}}\right]
\end{equation}
\begin{equation}
\frac{\partial I({\bf{x_1, x_2;y}})}{\partial {\bf{H_1^{\dag}}}}=-\int \nabla_{\bf{y}}{p_{y}({\bf{y}})}\mathbb{E}_{x_1,x_2}\left[\frac{{p_{y|x_{1},x_{2}}}({\bf{y|x_{1},x_{2}}})}{{p_{y}({\bf{y}})}}\sqrt{snr}{\bf{x_1^{\dag}P_1^{\dag}}}\right]d\bf{y}
\end{equation}
\begin{equation}
\label{48}
\frac{\partial I({\bf{x_1, x_2;y}})}{\partial {\bf{H_1^{\dag}}}}=-\sqrt{snr} \int \nabla_{\bf{y}}{p_{y}({\bf{y}})}\mathbb{E}_{x_1|y}\left[{\bf{x_1|y}}\right]^{\dag}{\bf{P_1^{\dag}}}d\bf{y}
\end{equation}

However,
\begin{multline}
\label{49}
\nabla_{\bf{y}}{p_{y}({\bf{y}})}=\nabla_{\bf{y}}\mathbb{E}_{x_1,x_2}\left[{p_{y|x_{1},x_{2}}}({\bf{y|x_{1},x_{2}}})\right] \\
=\mathbb{E}_{x_1,x_2}\left[\nabla_{\bf{y}}{p_{y|x_{1},x_{2}}}({\bf{y|x_{1},x_{2}}})\right] \\
=-\mathbb{E}_{x_1,x_2}\left[p_{y|x_{1},x_{2}}({\bf{y|x_{1},x_{2}}})\left({\bf{y}}-\sqrt{snr}{\bf{H_1P_1x_1}}-\sqrt{snr}{\bf{H_2P_2x_2}}\right)\right]\\
=-\mathbb{E}_{x_1,x_2}\left[p_{y}({\bf{y}})\left({\bf{y}}-\sqrt{snr}{\bf{H_1P_1x_1}}-\sqrt{snr}{\bf{H_2P_2x_2}}\right)|{\bf{y}}\right] \\
=-p_{y}({\bf{y}})\left({\bf{y}}-\sqrt{snr}{\bf{H_1P_1}\mathbb{E}_{x_1|y}[{\bf{x_1|y}}]}-\sqrt{snr}\bf{H_2P_2}\mathbb{E}_{x_2|y}[{\bf{x_2|y}}]\right)\\
\end{multline}

Substitute \eqref{49} into \eqref{48} we get:
\begin{multline}
\frac{\partial I({\bf{x_1, x_2;y}})}{\partial {\bf{H_1^{\dag}}}}=\int {p_{y}}({\bf{y}})\left({\bf{y}}-\sqrt{snr}{\bf{H_1P_1}}\mathbb{E}_{x_1|y}[{\bf{x_1|y}}]-\sqrt{snr}{\bf{H_2P_2}}\mathbb{E}_{x_2|y}[{\bf{x_2|y}}]\right) \times \\
\sqrt{snr} \mathbb{E}_{x_1|y}\left[{\bf{x_1|y}}\right]^{\dag}{\bf{P_1^{\dag}}}d\bf{y}
\end{multline}
\begin{multline}
\frac{\partial I({\bf{x_1, x_2;y}})}{\partial {\bf{H_1^{\dag}}}}=\sqrt{snr}\mathbb{E}_y[{\bf{yx_1^{\dag}}}]{\bf{P_1^{\dag}}} \nonumber\\
-snr\mathbb{E}_y[{\bf{H_1P_1}}\mathbb{E}_{x_1|y}[{\bf{x_1|y}}]\mathbb{E}_{x_1|y}[{\bf{x_1|y}}]^{\dag}{\bf{P_1^{\dag}}} \nonumber\\
-snr\mathbb{E}_y[{\bf{H_2P_2}}\mathbb{E}_{x_2|y}[{\bf{x_2|y}}]\mathbb{E}_{x_2|y}[{\bf{x_1|y}}]^{\dag}]{\bf{P_1^{\dag}}}
\end{multline}

Therefore,
\begin{multline}
\frac{\partial I({\bf{x_1, x_2;y}})}{\partial {\bf{H_1^{\dag}}}}=\sqrt{snr}{\bf{H_1P_1}}\mathbb{E}_{x_1}[{\bf{x_1x_1^{\dag}}}]{\bf{P_1^{\dag}}} \nonumber\\
-snr{\bf{H_1P_1}}\mathbb{E}_y[{\mathbb{E}_{x_1|y}[{\bf{x_1|y}}]\mathbb{E}_{x_1|y}[{\bf{x_1|y}}]^{\dag}}{\bf{P_1^{\dag}}} \nonumber \\
-snr{\bf{H_2P_2}}\mathbb{E}_y[{\mathbb{E}_{x_2|y}[{\bf{x_2|y}}]\mathbb{E}_{x_1|y}[{\bf{x_1|y}}]^{\dag}}]{\bf{P_1^{\dag}}}
\end{multline}
\begin{equation}
\frac{\partial I({\bf{x_1, x_2;y}})}{\partial {\bf{H_1^{\dag}}}}=snr{\bf{H_1P_1}}{\bf{E_1}}{\bf{P_1^{\dag}}}-snr{\bf{H_2P_2}}\mathbb{E}_y[{\mathbb{E}_{x_2|y}[{\bf{x_2|y}}]\mathbb{E}_{x_1|y}[{\bf{x_1|y}}]^{\dag}}]{\bf{P_1^{\dag}}}
\end{equation}

Similarly, we can derive the gradient of the mutual information in terms of the channel matrix that corresponds to the second user in the two-user MAC. 
\begin{equation}
\frac{\partial I({\bf{x_1, x_2;y}})}{\partial {\bf{H_2^{\dag}}}}=snr{\bf{H_2P_2}}{\bf{E_2}}{\bf{P_2^{\dag}}}-snr{\bf{H_1P_1}}\mathbb{E}_y[{\mathbb{E}_{x_1|y}[{\bf{x_1|y}}]\mathbb{E}_{x_2|y}[{\bf{x_2|y}}]^{\dag}}]{\bf{P_2^{\dag}}}
\end{equation}

Therefore, the gradient of the mutual information with respect to per user channel and the per user MMSE and input estimates (or covariances) is as follows:
\begin{equation}
\nabla_{\bf{H_1}} I({\bf{x_1, x_2;y}})=snr{\bf{H_1P_1}}{\bf{E_1}}{\bf{P_1^{\dag}}}-snr{\bf{H_2P_2}}\mathbb{E}[{\bf{\widehat{x_2}{\widehat{x_1}}^{\dag}}}]{\bf{P_1^{\dag}}}
\end{equation}
\begin{equation}
\nabla_{\bf{H_2}} I({\bf{x_1, x_2;y}})=snr{\bf{{H_2P_2}}}{\bf{E_2}}{\bf{P_2^{\dag}}}-snr{\bf{H_1P_1}}\mathbb{E}[{\bf{\widehat{x_1}{\widehat{x_2}}^{\dag}}}]{\bf{P_2^{\dag}}}
\end{equation}

Therefore, Theorem~\ref{theorem1.4} has been proved.
\section{Appendix C: Proof of Theorem~\ref{theorem2.4}}
\normalsize
The gradient of the mutual information with respect to the precoding matrix of user 1 on the two-user MAC is as follows:
\begin{equation}
\frac{\partial I({\bf{x_1, x_2;y}})}{\partial {\bf{P_1^{\dag}}}}=-\frac{\partial}{\partial {\bf{P_1^{\dag}}}} \int {p_{y}}({\bf{y}})log\left(p_{y}({\bf{y}})\right) d\bf{y}
\end{equation}
\begin{equation}
~~=-\int \left({p_{y}}({\bf{y}})\frac{1}{{p_{y}}({\bf{y}})}+log\left(p_{y}({\bf{y}})\right)\right)\frac{\partial {p_{y}}({\bf{y}})}{\partial {\bf{P_1^{\dag}}}} d\bf{y}
\end{equation}
\begin{equation}
\label{45.2}
~~=-\int \left(1+\bf{log\left(p_{y}({\bf{y}})\right)}\right)\frac{\partial {p_{y}}({\bf{y}})}{\partial {\bf{P_1^{\dag}}}} d\bf{y}
\end{equation}
The derivative of the conditional output can be written as:
\begin{multline}
\frac{\partial {p_{y|x_{1},x_{2}}}({\bf{y|x_{1},x_{2}}})}{\partial {\bf{P_1^{\dag}}}}=\\
-{p_{y|x_{1},x_{2}}}({\bf{y|x_{1},x_{2}}})\frac{\partial }{\partial {\bf{P_1^{\dag}}}}\left({\bf{y}}-\sqrt{snr}{\bf{H_1P_1x_1}}-\sqrt{snr}{\bf{H_2P_2x_2}}\right)^{\dag} \times \\
\left({\bf{y}}-\sqrt{snr}{\bf{H_1P_1x_1}}-\sqrt{snr}{\bf{H_2P_2x_2}}\right)
\end{multline}
\begin{equation}
~~={p_{y|x_{1},x_{2}}}({\bf{y|x_{1},x_{2}}}){\bf{H_1^{\dag}}}\left({\bf{y}}-\sqrt{snr}{\bf{H_1P_1x_1}}-\sqrt{snr}\bf{H_2P_2x_2}\right)\bf{x_1^{\dag}}
\end{equation}
\begin{equation}
~~=-\sqrt{snr}\bf{H_1^{\dag}}\nabla_{\bf{y}}{p_{y|x_{1},x_{2}}}({\bf{y|x_{1},x_{2}}}){\bf{x_1^{\dag}}}
\end{equation}
Therefore, we have:
\begin{equation}
\label{46.2}
\mathbb{E}_{x_1,x_2}\left[\nabla_{\bf{P_1^{\dag}}}{p_{y|x_{1},x_{2}}}({\bf{y|x_{1},x_{2}}})\right]=
\mathbb{E}_{x_1,x_2}\left[-\sqrt{snr}{\bf{H_1^{\dag}}}\nabla_{\bf{y}}{p_{y|x_{1},x_{2}}}({\bf{y|x_{1},x_{2}}}){\bf{x_1^{\dag}}}\right]
\end{equation}

Substitute \eqref{46.2} into \eqref{45.2}, we get:
\begin{equation}
\frac{\partial I({\bf{x_1, x_2;y}})}{\partial {\bf{P_1^{\dag}}}}=\int \left(1+{log}\left({p_{y}}({\bf{y}})\right)\right)\mathbb{E}_{x_1,x_2}\left[\sqrt{snr}{\bf{H_1^{\dag}}}\nabla_{\bf{y}}{p_{y|x_{1},x_{2}}}({\bf{y|x_{1},x_{2}}}){\bf{x_1^{\dag}}}\right] d\bf{y}
\end{equation}
\begin{equation}
~~=\mathbb{E}_{x_1,x_2}\left[\sqrt{snr}{\bf{H_1^{\dag}}} \left(\int \left(1+{log}\left({p_{y}}({\bf{y}})\right)\right)\nabla_{\bf{y}}{p_{y|x_{1},x_{2}}}({\bf{y|x_{1},x_{2}}})d\bf{y}\right){\bf{x_1^{\dag}}} \right]
\end{equation}

Repeating the same steps as in \eqref{47}, we have:
\begin{equation}
\frac{\partial I({\bf{x_1, x_2;y}})}{\partial {\bf{P_1^{\dag}}}}=\mathbb{E}_{x_1,x_2}\left[-\int \sqrt{snr}{\bf{H_1^{\dag}}}\left(\frac{{p_{y|x_{1},x_{2}}}({\bf{y|x_{1},x_{2}}})}{{p_{y}({\bf{y}})}}\nabla_{\bf{y}}{p_{y}({\bf{y}})}d\bf{y}\right){\bf{x_1^{\dag}}}\right]
\end{equation}
\begin{equation}
\frac{\partial I({\bf{x_1, x_2;y}})}{\partial {\bf{P_1^{\dag}}}}=-\int \sqrt{snr} {\bf{H_1^{\dag}}} \nabla_{\bf{y}}{p_{y}({\bf{y}})}\mathbb{E}_{x_1,x_2}\left[\frac{{p_{y|x_{1},x_{2}}}({\bf{y|x_{1},x_{2}}})}{{p_{y}({\bf{y}})}}{\bf{x_1^{\dag}}}\right]d\bf{y}
\end{equation}
\begin{equation}
\label{48.2}
\frac{\partial I({\bf{x_1, x_2;y}})}{\partial {\bf{P_1^{\dag}}}}=-\sqrt{snr}\int {\bf{H_1^{\dag}}}\nabla_{\bf{y}}{p_{y}({\bf{y}})}\mathbb{E}_{x_1|y}\left[{\bf{x_1|y}}\right]^{\dag} d\bf{y}
\end{equation}

Substitute \eqref{49} into \eqref{48.2} we get:
\begin{multline}
\frac{\partial I({\bf{x_1, x_2;y}})}{\partial {\bf{P_1^{\dag}}}}=\int \sqrt{snr}{\bf{H_1^{\dag}}}{p_{y}}({\bf{y}}) \times \nonumber \\
\left({\bf{y}}-\sqrt{snr}{\bf{H_1P_1}}\mathbb{E}_{x_1|y}[{\bf{x_1|y}}]-\sqrt{snr}{\bf{H_2P_2}}\mathbb{E}_{x_2|y}[{\bf{x_2|y}}]\right) \mathbb{E}_{x_1|y}\left[{\bf{x_1|y}}\right]^{\dag}d\bf{y}
\end{multline}
\begin{multline}
\frac{\partial I({\bf{x_1, x_2;y}})}{\partial {\bf{P_1^{\dag}}}}=\sqrt{snr}{\bf{H_1^{\dag}}}\mathbb{E}_y[{\bf{yx_1^{\dag}}}]-snr{\bf{H_1^{\dag}}}\mathbb{E}_y[{\bf{H_1P_1}}\mathbb{E}_{x_1|y}[{\bf{x_1|y}}]\mathbb{E}_{x_1|y}[{\bf{x_1|y}}]^{\dag} \nonumber \\
-snr{\bf{H_1^{\dag}}}\mathbb{E}_y[{\bf{H_2P_2}}\mathbb{E}_{x_2|y}[{\bf{x_2|y}}]\mathbb{E}_{x_1|y}[{\bf{x_1|y}}]^{\dag}]
\end{multline}

Therefore,
\begin{align}
\frac{\partial I({\bf{x_1, x_2;y}})}{\partial {\bf{P_1^{\dag}}}}=snr{\bf{H_1^{\dag}}}{\bf{H_1P_1}}\mathbb{E}_{x_1}[{\bf{x_1x_1^{\dag}}}]-snr{\bf{H_1^{\dag}}}{\bf{H_1P_1}}\mathbb{E}_y[{\mathbb{E}_{x_1|y}[{\bf{x_1|y}}]\mathbb{E}_{x_1|y}[{\bf{x_1|y}}]^{\dag}} \nonumber\\
-snr{\bf{H_1^{\dag}}}{\bf{H_2P_2}}\mathbb{E}_y[{\mathbb{E}_{x_2|y}[{\bf{x_2|y}}]\mathbb{E}_{x_1|y}[{\bf{x_1|y}}]^{\dag}}]
\end{align}
\begin{equation}
\frac{\partial I({\bf{x_1, x_2;y}})}{\partial {\bf{P_1^{\dag}}}}=snr{\bf{H_1^{\dag}}}{\bf{H_1P_1}}{\bf{E_1}}-snr{\bf{H_1^{\dag}}}{\bf{H_2P_2}}\mathbb{E}_y[{\mathbb{E}_{x_2|y}[{\bf{x_2|y}}]\mathbb{E}_{x_1|y}[{\bf{x_1|y}}]^{\dag}}]
\end{equation}

Similarly, we can derive the gradient of the mutual information in terms of the channel matrix that corresponds to the second user in the two-user MAC. 
\begin{equation}
\frac{\partial I({\bf{x_1, x_2;y}})}{\partial {\bf{P_2^{\dag}}}}=snr{\bf{H_2^{\dag}}}{\bf{H_2P_2}}{\bf{E_2}}-snr{\bf{H_2^{\dag}}}{\bf{H_1P_1}}\mathbb{E}_y[{\mathbb{E}_{x_1|y}[{\bf{x_1|y}}]\mathbb{E}_{x_2|y}[{\bf{x_2|y}}]^{\dag}]}
\end{equation}

Therefore, the gradient of the mutual information with respect to per user precoding and the per user MMSE and input estimates (or covariances) is as follows:
\begin{equation}
\label{5.1}
\nabla_{\bf{P_1}}I({\bf{x_{1},x_{2};y}})=snr{\bf{H_1^{\dag}}}{\bf{H}_{1}}{\bf{P}_1}{\bf{E}_{1}}-snr{\bf{H_1^{\dag}}}{\bf{H}_{2}}{\bf{P}_2} \mathbb{E}_y[{\bf{\widehat{x}_2\widehat{x}_1^{\dag}}}]
\end{equation}
\begin{equation}
\label{6.1}
\nabla_{\bf{P_2}}I({\bf{x_{1},x_{2};y}})=snr{\bf{H_2^{\dag}}}{\bf{H}_{2}}{\bf{P}_2}{\bf{E}_{2}}-snr{\bf{H_2^{\dag}}}{\bf{H}_{1}}{\bf{P}_1}\mathbb{E}_y[{\bf{\widehat{x}_1\widehat{x}_2^{\dag}}}]
\end{equation}

Therefore, Theorem~\ref{theorem2.4} has been proved.
\section{Appendix D: Proof of Theorem~\ref{theorem3.4}}
\normalsize
From the steps in Theorem~\ref{theorem1.4}, we can see that:
\begin{multline}
{p_{y}}({\bf{y}})\left({\bf{y}}-\sqrt{snr}{\bf{H_1P_1}}\mathbb{E}_{x_1|y}[{\bf{x_1|y}}]-\sqrt{snr}{\bf{H_2P_2}}\mathbb{E}_{x_2|y}[{\bf{x_2|y}}]\right) \\
=\mathbb{E}_{x_1,x_2}\left[{p_{y|x_{1},x_{2}}}({\bf{y|x_{1},x_{2}}})\left({\bf{y}}-\sqrt{snr}{\bf{H_1P_1x_1}}-\sqrt{snr}{\bf{H_2P_2x_2}}\right)\right]
\end{multline}

Therefore,
\begin{multline}
\sqrt{snr}{\bf{H_1P_1}}\mathbb{E}_{x_1|y}[{\bf{x_1|y}}]-\sqrt{snr}{\bf{H_2P_2}}\mathbb{E}_{x_2|y}[{\bf{x_2|y}}] \\
={\bf{y}}- \frac{\mathbb{E}_{x_1,x_2}\left[{p_{y|x_{1},x_{2}}}({\bf{y|x_{1},x_{2}}})\left({\bf{y}}-\sqrt{snr}{\bf{H_1P_1x_1}}-\sqrt{snr}{\bf{H_2P_2x_2}}\right)\right]}{{p_{y}}({\bf{y}})}  \\
={\bf{y}}+\frac{\mathbb{E}_{x_1,x_2}\left[\nabla_{{\bf{y}}}{p_{y|x_{1},x_{2}}}({\bf{y|x_{1},x_{2}}})\right]}{{p_{y}}({\bf{y}})} \\
={\bf{y}}+\frac{\nabla_{{\bf{y}}}\mathbb{E}_{x_1,x_2}\left[{p_{y|x_{1},x_{2}}}({\bf{y|x_{1},x_{2}}})\right]}{{p_{y}}({\bf{y}})}
\end{multline}

Thus,
\begin{equation}
\sqrt{snr}{\bf{H_1P_1}}\mathbb{E}[{\bf{x_1|y}}]-\sqrt{snr}{{\bf{H_2P_2}}}\mathbb{E}[{\bf{x_2|y}}]={\bf{y}}+\frac{\nabla_{{\bf{y}}}{p_{y}}({\bf{y}})}{{p_{y}}({\bf{y}})}
\end{equation}

Therefore, Theorem~\ref{theorem3.4} has been proved. 

\section{Appendix E: Proof of Theorem~\ref{theorem4.00}}
The MMSE Wiener filters are known to be MSE minimizers. In fact, this kind of receive filters inherently include the linear MMSE matrix. Therefore, to start the proof of the theorem, lets first define the linear MMSE as:
\begin{equation}
mmse(snr)=\underbrace{\bf{\mathbb{E}}_{y}\left[\left\|\bf{H_1P_1}(\bf{x_1-\widehat{x}_1})\right\|^{2}\right]}_\text{MMSE1}+\underbrace{\bf{\mathbb{E}}_{y}\left[\left\|\bf{H_2P_2}(\bf{x_2-\widehat{x}_2})\right\|^{2}\right]}_\text{MMSE2},
\end{equation}

with the linear estimates should be found by multiplying the received vector $\bf{y}$ by the receive filter $\bf{H_r}$,
\begin{equation}
\bf{\widehat{x}_1}=H_{r1}y
\end{equation}
\begin{equation}
\bf{\widehat{x}_2}=H_{r2}y
\end{equation}

We break down the problem into two parts for each per user MMSE. Notice that we call it MMSE since our target is the minimum MSE, therefore, we first take the first part, i.e., MSE1 and break it down; it follows that:
\begin{equation}
\bf{\tilde{E_1}}=\bf{\mathbb{E}}_{y}\left[\left\|\bf{H_1P_1}(\bf{x_1-\widehat{x}_1})\right\|^{2}\right]
\end{equation}
\begin{equation}
~~=\bf{\mathbb{E}}_{y}\left[\bf{H_1P_1}(\bf{x_1-\widehat{x}_1})(\bf{x_1-\widehat{x}_1})^{\dag}(\bf{H_1P_1})^{\dag}\right]
\end{equation}
\begin{equation}
~~=\bf{\mathbb{E}}_{y}\left[(\bf{\bf{H_1P_1}x_1-\bf{H_1P_1}\widehat{x}_1})(\bf{\bf{H_1P_1}x_1-\bf{H_1P_1}\widehat{x}_1})^{\dag}\right]
\end{equation}
\begin{multline}
\label{4.4.1}
~~=\underbrace{\bf{\mathbb{E}}_{x_1}\left[\bf{H_1P_1x_1{x_1}^{\dag}}(\bf{H_1P_1})^{\dag}\right]}_\text{first term}
-\underbrace{\bf{\mathbb{E}}_{x_1}\left[\bf{H_1P_1x_1\widehat{x}_1^{\dag}}(\bf{H_1P_1})^{\dag}\right]}_\text{second term}  \\
-\underbrace{\bf{\mathbb{E}}_{x_1}\left[\bf{H_1P_1\widehat{x}_1x_1^{\dag}}(\bf{H_1P_1})^{\dag}\right]}_\text{third term}
+\underbrace{\bf{\mathbb{E}}_{x_1}\left[\bf{H_1P_1\widehat{x}_1\widehat{x}_1^{\dag}}(\bf{H_1P_1})^{\dag}\right]}_\text{forth term}\\
\end{multline}

Lets work on each term in \eqref{4.4.1} separately. Notice that $\mathbb{E}_{x_1}[\bf{x_1x_1^{\dag}}]=\mathbb{E}_{x_2}[\bf{x_2x_2^{\dag}}]=I$, $\mathbb{E}_{x_1}[\bf{x_1x_1^{T}}]=\mathbb{E}_{x_2}[\bf{x_2x_2^{T}}]=0$, $\mathbb{E}_{x_1,x_2}[\bf{x_2x_1^{\dag}}]=\mathbb{E}_{x_1,x_2}[\bf{x_1x_2^{\dag}}]=0$, $\mathbb{E}_n[\bf{nn^{\dag}}]=I$, and $\mathbb{E}_n[\bf{nn^{T}}]=0$.

The first term is:
\begin{equation}
\bf{\mathbb{E}}_{x_1}\left[(\bf{H_1P_1x_1{x_1}^{\dag}}(\bf{H_1P_1})^{\dag}\right]=\bf{\mathbb{E}}_{x_1}\left[\bf{H_1P_1}(\bf{H_1P_1})^{\dag}\right],
\end{equation}

The second term is:
\begin{equation}
\bf{\mathbb{E}}_{x_1}\left[\bf{H_1P_1x_1\widehat{x}_1^{\dag}}(\bf{H_1P_1})^{\dag}\right]=\bf{\mathbb{E}}_{x_1}\left[\bf{H_1P_1x_1}(\bf{H_{r1}y})^{\dag}(\bf{H_1P_1})^{\dag}\right]
\end{equation}
\begin{equation}
\bf{\mathbb{E}}_{x_1}\left[\bf{H_1P_1x_1}\bf{{y}^{\dag}H_{r1}^{\dag}}(\bf{H_1P_1})^{\dag}\right]=\bf{\mathbb{E}}_{x_1}\left[\bf{H_1P_1x_1}\bf{(\bf{H_1P_1x_1+H_2P_2x_2+n})^{\dag}H_{r1}^{\dag}}(\bf{H_1P_1})^{\dag}\right]
\end{equation}
\begin{equation}
~~=\bf{\mathbb{E}}_{x_1}\left[\bf{H_1P_1}\bf{P_1^{\dag}H_1^{\dag}H_{r1}^{\dag}}(\bf{H_1P_1})^{\dag}\right]
\end{equation}

The third term is:
\begin{equation}
\bf{\mathbb{E}}_{x_1}\left[\bf{H_1P_1\widehat{x}_1x_1^{\dag}}(\bf{H_1P_1})^{\dag}\right]=\bf{\mathbb{E}}_{x_1}\left[\bf{H_1P_1H_{r1}yx_1^{\dag}}(\bf{H_1P_1})^{\dag}\right]
\end{equation}
\begin{equation}
~~=\bf{\mathbb{E}}_{x_1}\left[\bf{H_1P_1H_{r1}H_1P_1}(\bf{H_1P_1})^{\dag}\right]
\end{equation}

The forth term is:
\begin{equation}
\bf{\mathbb{E}}_{x_1}\left[\bf{H_1P_1\widehat{x}_1\widehat{x}_1^{\dag}}(\bf{H_1P_1})^{\dag}\right]=\bf{\mathbb{E}}_y\left[\bf{H_1P_1H_{r1}yy^{\dag}H_{r1}^{\dag}}(\bf{H_1P_1})^{\dag}\right]
\end{equation}
\begin{multline}
\bf{\mathbb{E}}_y\left[\bf{H_1P_1H_{r1}yy^{\dag}H_{r1}^{\dag}}(\bf{H_1P_1})^{\dag}\right]=\bf{\mathbb{E}}_y\left[\bf{H_1P_1H_{r1}H_1P_1P_1^{\dag}H_1^{\dag}H_{r1}^{\dag}}(\bf{H_1P_1})^{\dag}\right] \\
+\bf{\mathbb{E}}_y\left[\bf{H_1P_1H_{r1}H_2P_2P_2^{\dag}H_2^{\dag}H_{r1}^{\dag}}(\bf{H_1P_1})^{\dag}\right]
+\bf{\mathbb{E}}_y\left[\bf{H_1P_1H_{r1}H_{r1}^{\dag}}(\bf{H_1P_1})^{\dag}\right]
\end{multline}

Therefore, the MSE1 can be given as follows:
\begin{multline}
MSE1(snr)=Tr\left\{\bf{H_1P_1}(\bf{H_1P_1})^{\dag}\right\} -Tr\left\{\bf{H_1P_1}\bf{P_1^{\dag}H_1^{\dag}H_{r1}^{\dag}}(\bf{H_1P_1})^{\dag}\right\} \\
-Tr\left\{\bf{H_1P_1H_{r1}H_1P_1}(\bf{H_1P_1})^{\dag}\right\} 
+Tr\left\{\bf{H_1P_1H_{r1}H_1P_1P_1^{\dag}H_1^{\dag}H_{r1}^{\dag}}(\bf{H_1P_1})^{\dag}\right\} \\
+Tr\left\{\bf{H_1P_1H_{r1}H_2P_2P_2^{\dag}H_2^{\dag}H_{r1}^{\dag}}(\bf{H_1P_1})^{\dag}\right\}
+Tr\left\{\bf{H_1P_1H_{r1}H_{r1}^{\dag}}(\bf{H_1P_1})^{\dag}\right\}
\end{multline}

We need now to find the minimum MSE1, and apply the KKT conditions to derive the optimal MMSE1 Wiener receive filter $\bf{H_{r1}}$ as follows:
\begin{equation}
\frac{\partial MSE1}{\partial H_{r1}}=0
\end{equation}

Capitalizing on the derivative rules for the trace of matrices, $\frac{\partial Tr(ABA^{\dag})}{\partial A}= AB^{\dag}$, $\frac{\partial Tr(AB)}{\partial A}= B^{\dag}$, and $\frac{\partial Tr(BA^{\dag})}{\partial A}= B$. We have:
\begin{equation}
\bf{H_{r1}^{\star}}=\bf{P_1^{\dag}H_1^{\dag}(I+P_1^{\dag}H_1^{\dag}P_1H_1+P_2^{\dag}H_2^{\dag}P_2H_2)^{-1}}
\end{equation}

Similar steps to break down MSE2,  
\begin{equation}
\bf{\tilde{E_2}}=\bf{\mathbb{E}}_y\left[\left\|\bf{H_2P_2}(\bf{x_2-\widehat{x}_2})\right\|^{2}\right],
\end{equation}

leads to the optimal MMSE2 Wiener filter $\bf{H_{r2}}$ as follows:
\begin{equation}
\bf{H_{r2}^{\star}}=\bf{P_2^{\dag}H_2^{\dag}(I+P_1^{\dag}H_1^{\dag}P_1H_1+P_2^{\dag}H_2^{\dag}P_2H_2)^{-1}} 
\end{equation}

Therefore, to estimate the user inputs linearly we do the following:
\begin{equation}
\bf{\widehat{x}_1}=\bf{P_1^{\dag}H_1^{\dag}(I+P_1^{\dag}H_1^{\dag}P_1H_1+P_2^{\dag}H_2^{\dag}P_2H_2)^{-1}y}
\end{equation}
\begin{equation}
\bf{\widehat{x}_2}=\bf{P_2^{\dag}H_2^{\dag}(I+P_1^{\dag}H_1^{\dag}P_1H_1+P_2^{\dag}H_2^{\dag}P_2H_2)^{-1}y}
\end{equation}

Therefore, Theorem~\ref{theorem4.00} has been proved.

\section{Appendix F: Proof of Theorem~\ref{theorem5.4}}
The possible solutions to ~\eqref{2} subject to~\eqref{3} and~\eqref{4} are characterized by the KKT conditions, which give necessary conditions for the matrix to be a critical point, known as the KKT or first-order conditions. To compute the KKT conditions, we first build the Lagrangian:
\vspace{-0.1cm}
\begin{equation}
\mathcal{L}({\bf{P_{1}}},{\bf{P_{2}}},\lambda_1,\lambda_2)=-I({\bf{x_{1},x_{2};y}})-\lambda_1(Q_1-{\bf{P_1}})-\lambda_2(Q_2-{\bf{P_2}})-\mu_1 {\bf{P_1}}-\mu_2 {\bf{P_2}} 
\end{equation}

With primal feasibility conditions, $\lambda_1(Q_1-{\bf{P_{1}}})=0$, $\mu_1{\bf{P_{1}}}=0$, $\lambda_2(Q_2-{\bf{P_{2}}})=0$, and $\mu_2{\bf{P_{2}}}=0$, and dual feasibility conditions, $\lambda_1\geq0$ and $\lambda_2\geq0$. In which the Lagrange multipliers $\lambda_1$ and $\lambda_2$ accounting for the in-equality constraint, has to be non-negative. The first-order conditions are given by:
\begin{equation}
\nabla_{\bf{P_{1}}}\mathcal{L}({\bf{P_{1}}},{\bf{P_{2}}},\lambda_1,\lambda_2)=-\nabla_{\bf{P_{1}}}I({\bf{x_{1},x_{2};y}})+\lambda_1 {\bf{P_1}}=0
\end{equation}
\begin{equation}
\nabla_{\bf{P_{2}}}\mathcal{L}({\bf{P_{1}}},{\bf{P_{2}}},\lambda_1,\lambda_2)=-\nabla_{\bf{P_{2}}}I({\bf{x_{1},x_{2};y}})+\lambda_2 {\bf{P_2}}=0
\end{equation}

Therefore, the solution of the optimal precoders satisfies:
\begin{equation}
\bf{{P_1}^{\star}}=\nu_1^{-1}{\bf{H}_{1}}^{\dag}{\bf{H}}_{1}{\bf{P}_1}^{\star}{\bf{E}_{1}}-\nu_1^{-1}\bf{H_{1}}^{\dag}{\bf{H}_{2}}{\bf{P}_2}^{\star}\bf{\mathbb{E}[\widehat{x}_2\widehat{x}_1^{\dag}]}
\end{equation}
\begin{equation}
\bf{{P_2}^{\star}}=\nu_2^{-1}{\bf{H}_{2}}^{\dag}{\bf{H}}_{2}{\bf{P}_2}^{\star}{\bf{E}_{2}}-\nu_2^{-1}\bf{H_{2}}^{\dag}{\bf{H}_{1}}{\bf{P}_1}^{\star}\bf{\mathbb{E}[\widehat{x}_1\widehat{x}_2^{\dag}]}
\end{equation}

where the solution follows from the gradient of the mutual information with respect to the precoding matrices. We eliminate $\lambda_1$ and $\lambda_2$ by introducing $\nu_1=\frac{\lambda_1}{snr}$, and $\nu_2=\frac{\lambda_2}{snr}$. Therefore, Theorem~\ref{theorem5.4} has been proved.

\section{Appendix G: Proof of Theorem~\ref{theorem6.4}}
Digging into the depth of equations~\eqref{11} and \eqref{12}, we can do a singular value decomposition of the channel matrix $\bf{H_1}=\bf{U_{H_1}}\bf{\Lambda_{H_1}}\bf{V_{H_1}^{\dag}}$, and $\bf{H_2}=\bf{U_{H_2}}\bf{\Lambda_{H_2}}\bf{V_{H_2}^{\dag}}$, and the eigen value decomposition of the MMSE matrices, $\bf{E_1}=\bf{U_{E_1}}\bf{\Lambda_{E_1}}\bf{V_{E_1}^{\dag}}$, and $\bf{E_2}=\bf{U_{E_2}}\bf{\Lambda_{E_2}}\bf{V_{E_2}^{\dag}}$, we assume that the second terms in equations \eqref{9} and \eqref{10} are zeros. Therefore, multiplying the first term of \eqref{9} by $\bf{P_1^{\star\dag}}$, and of \eqref{10} by $\bf{P_2^{\star\dag}}$, and taking the trace, which corresponds to each of the users power constraints. It follows that: 
\begin{align}
Tr\left\{{{\bf{P}}_1}^{\star}{{\bf{P}}_1}^{\star\dag}\right\}
&=Tr\left\{\bf{H_{1}}\bf{H_{1}^{\dag}}\bf{P_{1}^{\star}}\bf{E_{1}}{{\bf{P}}_1}^{\star\dag}\right\}  \nonumber\\
&=Tr\left\{{{\bf{P}}_1}^{\star\dag}\bf{H_{1}}\bf{H_{1}^{\dag}}\bf{P_{1}^{\star}}\bf{E_{1}}\right\}  \nonumber\\
&=Tr\left\{\bf{E_{1}}{{\bf{P}}_1}^{\star\dag}\bf{H_{1}}\bf{H_{1}^{\dag}}\bf{P_{1}^{\star}}\right\}  \nonumber\\
&=Tr\left\{\bf{P_{1}^{\star}}\bf{E_{1}}{{\bf{P}}_1}^{\star\dag}\bf{H_{1}}\bf{H_{1}^{\dag}}\right\}  \nonumber\\
&=Tr\left\{\bf{\Lambda_{P_1}}\bf{\Lambda_{E_1}}\bf{\Lambda_{P_1}}\bf{\Lambda_{H_1}^{2}}\right\}  \nonumber\\
&=Tr\left\{\bf{\Lambda_{H_1}^{2}}\bf{\Lambda_{P_1}}\bf{\Lambda_{E_1}}\bf{\Lambda_{P_1}}\right\}  \nonumber\\
&=Tr\left\{\bf{\Lambda_{H_1}^{2}}\bf{\Lambda_{P_1}^{2}}\bf{\Pi\Lambda_{E_1}}\right\} \nonumber\\ 
\end{align}

With:
\begin{equation}
\bf{U_1=V_{H_1}},
\end{equation}
\begin{equation}
\bf{U_2=V_{H_2}},
\end{equation}
\begin{equation}
\bf{D_1}=diag({\sqrt{p_{1,1}}},~...~,{\sqrt{p_{1,n_t}}}),
\end{equation}
\begin{equation}
\bf{D_2}=diag({\sqrt{p_{2,1}}},~...~,{\sqrt{p_{2,n_t}}}),
\end{equation}
\begin{equation}
\bf{R_1}=\bf{\Pi U_{E_1}},
\end{equation}
\begin{equation}
\bf{R_2}=\bf{\Pi U_{E_2}}.
\end{equation}

Similar steps to prove the setup of the fixed point equation of the second user. Therefore, the fixed point equation of the per user optimal precoder precluding the other users interference is given as: 
\begin{equation}
\bf{P_1}=\bf{U_1}\bf{D_1}\bf{R_{1}^{\dag}}
\end{equation}
\begin{equation}
\bf{P_2}=\bf{U_2}\bf{D_2}\bf{R_{2}^{\dag}}
\end{equation}

Therefore, Theorem~\ref{theorem6.4} is proved. Note that we provide another proof for this theorem in \cite{107}.

\section{Appendix H: Proof of Theorem~\ref{theorem7.4}}
The Lagrangian for the optimization problem \eqref{31} subject to \eqref{32} to \eqref{34} is given by:
\begin{equation}
\mathcal{L}({\bf{P_1}},{\bf{P_2}},\lambda_1,\lambda_2)=-I({\bf{x_{1},x_{2};y}})-\lambda_1({1-\sum\limits_{j=1}^{n_t} {p_{1j}}})-\lambda_2({1-\sum\limits_{j=1}^{n_t} {p_{2j}}})- \sum\limits_{j=1}^{n_t} \mu_1{p_{1j}} -\sum\limits_{j=1}^{n_t} \mu_2{p_{2j}} 
\end{equation}

With primal feasibility conditions, $\lambda_1(1-\sum\limits_{j=1}^{n_t} {p_{1j}})=0$, $\mu_1\sum\limits_{j=1}^{n_t} {p_{1j}}=0$, $\lambda_2(1-\sum\limits_{j=1}^{n_t} {p_{2j}} )=0$, and $\mu_2\sum\limits_{j=1}^{n_t} {p_{2j}} =0$, and dual feasibility conditions, $\lambda_1\geq0$ and $\lambda_2\geq0$. The KKT conditions are given by:
\begin{equation}
\frac{\partial\mathcal{L}({\bf{P_1}},{\bf{P_2}},\lambda_1,\lambda_2)}{\partial \bf{p_1}}=-\frac{\partial I(x_{1},x_{2};y)}{\partial \bf{p_1}}+\lambda_1-\mu_1=0
\end{equation}
\begin{equation}
\frac{\partial \mathcal{L}({\bf{P_1}},{\bf{P_2}},\lambda_1,\lambda_2)}{\partial \bf{p_2}}=-\frac{\partial I({\bf{x_{1},x_{2};y}})}{\partial \bf{p_2}}+\lambda_2-\mu_2=0
\end{equation}

Note that $\mu_1=0$ for $\bf{p_1} \succ 0$ and $\mu_2=0$ for $\bf{p_2}\succ 0$. Therefore, the derivative with respect to $\bf{p_{1}}$, and $\bf{p_{2}}$ respectively, is given by:
\begin{equation}
\frac{\partial\mathcal{L}({\bf{P_1}},{\bf{P_2}},\lambda_1,\lambda_2)}{\partial \bf{p_1}}=\frac{snr}{\sqrt{\bf{p_1}}}\left({\bf{P_1}^{\star}}{\bf{H}}_{1}^{\dag}{\bf{H}_{1}}{\bf{P}_1}^{\star}{\bf{E}_{1}}-{\bf{P_1}^{\star}}{\bf{H}}_{1}^{\dag}{\bf{H}_{2}}{\bf{P}_2}^{\star}\bf{\mathbb{E}[\widehat{x_2}\widehat{x_1}^{\dag}]}\right)_{1j}
\end{equation}
\begin{equation}
\frac{\partial\mathcal{L}({\bf{P_1}},{\bf{P_2}},\lambda_1,\lambda_2)}{\partial \bf{p_2}}=\frac{snr}{\sqrt{\bf{p_2}}}\left({\bf{P_2}^{\star}}{\bf{H}}_{2}^{\dag}{\bf{H}_{2}}{\bf{P}_2}^{\star}{\bf{E}_{2}}-{\bf{P_2}^{\star}}{\bf{H}}_{2}^{\dag}{\bf{H}_{1}}{\bf{P}_1}^{\star}\bf{\mathbb{E}[\widehat{x_1}\widehat{x_2}^{\dag}]}\right)_{2j}
\end{equation}

Therefore, the optimal power allocation satisfies:
\begin{equation}
{{\bf{p}}_{1}}^{\star}=\gamma_1^{-1}\left({\bf{P_1}^{\star}}{\bf{H}}_{1}^{\dag}{\bf{H}_{1}}{\bf{P}_1}^{\star}{\bf{E}_{1}}-{\bf{P_1}^{\star}}{\bf{H}}_{1}^{\dag}{\bf{H}_{2}}{\bf{P}_2}^{\star}\bf{\mathbb{E}[\widehat{x_2}\widehat{x_1}^{\dag}]}\right)_{1j}
\end{equation}
\begin{equation}
{{\bf{p}}_{2}}^{\star}=\gamma_2^{-1}\left({\bf{P_2}^{\star}}{\bf{H}}_{2}^{\dag}{\bf{H}_{2}}{\bf{P}_2}^{\star}{\bf{E}_{2}}-{\bf{P_2}^{\star}}{\bf{H}}_{2}^{\dag}{\bf{H}_{1}}{\bf{P}_1}^{\star}\bf{\mathbb{E}[\widehat{x_1}\widehat{x_2}^{\dag}]}\right)_{2j}
\end{equation}

where $\gamma_1=\frac{\lambda_1}{snr}$, and $\gamma_2=\frac{\lambda_2}{snr}$. Therefore, Theorem~\ref{theorem7.4} has been proved.

\bibliographystyle{IEEEtran}
\bibliography{IEEEabrv,mybibfile}

\end{document}